\newtheorem{obs}[theorem]{Observation} 
\newcommand{\trans}[0]{f_{_{y=\beta}}(\mathcal{P})}
\newcommand{\transr}[0]{f_{_{y=\beta}}(\mathcal{P}_r)}
\newcommand{\ntrans}[0]{f_{_{y=*}}(\mathcal{P})}
\newcommand{\transbar}[0]{\overline{f}_{_{y=\beta}}(\mathcal{P}) }
\newcommand{\ntransbar}[0]{\overline{f}_{_{y=*}}(\mathcal{P}) }
\newcommand{\transone}[0]{f(\mathcal{P}_1)}
\newcommand{\transtwo}[0]{f(\mathcal{P}_2)}
\newcommand{\transmeet}[0]{f(\mathcal{P}_1\wedge \mathcal{P}_2)} 
\newcommand{\conp}[0]{\cong_{_{\mathcal{P}}}}
\newcommand{\conq}[0]{\cong_{_{\mathcal{Q}}}}
\newcommand{\confp}[0]{\cong_{_{f(\mathcal{P})}}}
\newcommand{\concone}[0]{\cong_{_{f_{_{y=c_1}}(\mathcal{P})}}}
\newcommand{\conctwo}[0]{\cong_{_{f_{_{y=c_2}}(\mathcal{P})}}}
\newcommand{\nconfp}[0]{\cong_{_{f_{_{y=*}}(\mathcal{P})}}}
\newcommand{\conone}[0]{\cong_{_{\mathcal{P}_1}}}
\newcommand{\contwo}[0]{\cong_{_{\mathcal{P}_2}}}
\newcommand{\concap}[0]{\cong_{_{\mathcal{P}_1\wedge \mathcal{P}_2}}}
\newcommand{\gt}[0]{\mathcal{G}(\mathcal{T})}
\newcommand{\gtbar}[0]{\overline{\mathcal{G}(\mathcal{\mathcal{T}})}}
\newcommand{\pred}[0]{\operatorname{Pred}}
\newcommand{\fd}[0]{f_{_D}}
\title{A fix-point characterization of Herbrand equivalence of expressions in data flow frameworks}
\titlerunning{A fix-point characterization of Herbrand equivalence} 
\author[1]{Jasine Babu}
\author[2]{K. Murali Krishnan}
\author[2]{Vineeth Paleri}
\affil[1]{Indian Institute of Technology Palakkad, India\\
  \texttt{jasine@iitpkd.ac.in}}
\affil[2]{National Institute of Technology Calicut, India\\
  \texttt{\{kmurali,vpaleri\}@nitc.ac.in}}
\authorrunning{J.\,Babu, K.\,Murali Krishnan and V.\,Paleri} 
\subjclass{F.3.2 -- please refer to \url{http://www.acm.org/about/class/ccs98-html}}
\keywords{Herbrand Equivalence, Data Flow Framework, Fix Point, Program Analysis}
\begin{document}
\maketitle
\begin{abstract}
The problem of determining Herbrand equivalence of terms at each program point in a data flow framework is a central and well studied question in program
analysis. Most of the well-known algorithms for the computation of Herbrand 
equivalence in data flow frameworks \cite{GULWANI2007, RRPai2016, Ruthing1999} proceed via iterative fix-point computation on some abstract lattice of short expressions \textit{relevant}
to the given flow graph.  However the mathematical definition of Herbrand equivalence is based on a meet over all path characterization over 
the (infinite) set of all possible expressions (see ~\cite[p.~393]{Steffen90}).  The aim of this paper is to develop a lattice theoretic fix-point formulation of Herbrand equivalence on 
the (infinite) concrete lattice defined over the set of all terms constructible from variables, constants and operators of a program.  
The present characterization uses an axiomatic formulation of the notion of Herbrand congruence and defines the (infinite) concrete lattice of Herbrand congruences.  
Transfer functions and non-deterministic assignments are formulated as monotone functions over this concrete lattice.  
Herbrand equivalence is defined as the maximum fix point of a composite transfer function defined over an appropriate product lattice of the above 
concrete lattice.  A  re-formulation of the classical meet-over-all-paths 
definition of Herbrand equivalence (~\cite[p.~393]{Steffen90}) in the above lattice theoretic framework is also presented and is proven to 
be equivalent to the new lattice theoretic fix-point characterization. 
\end{abstract}
\section{Introduction}
A data flow framework is an abstract representation of a program, used in program analysis and compiler optimizations \cite{aho2007compilers}.  
As detection of semantic equivalence of expressions at each point in a program is unsolvable \cite{Kam1977}, all known algorithms try to detect a weaker, 
syntactic notion of expression equivalence  over the set of all possible expressions, called  \emph{Herbrand equivalence}.   
Stated informally, Herbrand equivalence treats operators as uninterpreted functions, 
and two expressions are considered equivalent if they are obtained by applying the same operator on equivalent operands \cite{GULWANI2007,Muller2005,Ruthing1999,Steffen90}.   

The pioneering work of Kildall \cite{Kildall1973}, which essentially is an abstract interpretation \cite{Cousot1977} of terms, 
showed that at each program point, Herbrand equivalence of expressions that occur in a program could be computed 
using an iterative refinement algorithm. The algorithm models each iteration as the application of a monotone function over a meet semi-lattice, and terminates at a 
fix-point of the function \cite{Kam1976,Kam1977}.  Subsequently, several problems in program analysis 
have been shown to be solvable using iterative fix-point computation on lattice frameworks.  (see \cite{aho2007compilers} for examples).  Several algorithms for 
computation of Herbrand equivalence of program expressions also were proposed in the literature \cite{Alpern1988,GULWANI2007,RRPai2016,Rosen1988,Ruthing1999,Steffen90}.  

Although algorithmic computation to detect Herbrand equivalence among \textit{expressions that appear in a program} proceeds via iterative fix-point computation on an 
abstract lattice framework,  the classical mathematical definition of Herbrand equivalence uses a meet over all path formulation over 
the (infinite) set of all possible expressions (see~\cite[p.~393]{Steffen90}).  The main difficulty in constructing a fix-point based definition for Herbrand equivalence of expressions
at each program point is that it requires consideration of 
all program paths and equivalence of all expressions - including expressions not appearing in the program. 
Consequently, such a characterization of Herbrand equivalence cannot be achieved without resorting to set theoretic machinery.  

It may be noted that, while the algorithm presented by Steffen et.~al.~\cite{Steffen90} uses an iterative fix-point
computation method, their definition of Herbrand equivalences was essentially a meet over all paths (MOP) formulation. 
Though the MOP based definition of Herbrand equivalences given by Steffen et.~al.~\cite{Steffen90} is known since 1990,
proving the completeness of iterative fix-point based algorithms using this definition is non-trivial.  
For instance, the algorithm proposed by the same authors~\cite{Ruthing1999} was proven to be 
incomplete \cite{GULWANI2007} after several years, though it was initially accepted to be complete.
In comparison with an MOP based definition of Herbrand equivalences, a fix-point characterization will 
render completeness proofs of iterative fix-point algorithms for computing the equivalence of program expressions simpler.   
The completeness proofs would now essentially involve establishing an equivalence preserving continuous homomorphism from the 
infinite concrete lattice of all Herbrand congruences to the finite abstract lattice 
of congruences of expressions that are \textit{relevant} to the program, and proceed via induction.  

In this paper, we develop a lattice theoretic fix-point characterization of Herbrand equivalences at each program point in a data flow framework.    
We define the notion of a congruence relation on the set of all expressions, and show that the set of all congruences form a complete lattice.  Given a data flow
framework with $n$ program points, we show how to define a continuous composite transfer function over the $n$ fold product of the above lattice, such that  
the maximum fix-point of the function yields the set of Herbrand equivalence classes at various program points.  This characterization is then
shown to be equivalent to a meet over all paths formulation of expression equivalence over the same lattice framework.  

Section~\ref{sec:terminology} introduces the basic notation.  Sections \ref{sec:cong-1} to \ref{sec:non-det-asg} develop the basic theory of congruences and
transfer functions, including non-deterministic assignment functions.  Section~\ref{sec:data-flow-frameworks} and  Section~\ref{sec:Herbrand} deal with the application of the formalism of congruences
to derive a fix-point characterization of Herbrand equivalence at each program point.  Section~\ref{sec:MOP} describes a meet-over-all-paths formulation for expression equivalence and 
establishes the equivalence between the fix-point characterization and the meet-over-all-paths formulation.    
\section{Terminology}\label{sec:terminology}
Let $C$ be a countable set of constants and $X$ be a countable set of variables.  For simplicity, we assume that the set of operators $Op=\{+\}$.  (More operators can be added without any difficulty).   
The set of all terms over $C\cup X$, $\mathcal{T}=\mathcal{T}(X,C)$ is defined by $t::=c \mid x \mid (t+t)$, with $c\in C$ and $x\in X$.   (Parenthesis is 
avoided when there is no confusion.) Let $\mathcal{P}$ be a partition of $\mathcal{T}$.  Let $[t]_{\mathcal{P}}$ (or simply $[t]$ when there is no confusion) denote the 
equivalence class containing the term $t\in \mathcal{T}$.  If $t'\in [t]_{\mathcal{P}}$,  we write $t\cong_{\mathcal{P}}t'$ (or simply $t\cong t'$). 
Note that $\cong$ is reflexive, symmetric and transitive.   For any $A \subseteq  \mathcal{T}$, $A(x)$ denotes the set of all terms in $A$ in which 
the variable $x$ appears and $\overline{A}(x)$ denotes the set of all terms in $A$ in which $x$ does not appear.  
In particular, for any $x\in X$,  $\mathcal{T}(x)$ is the set of all terms containing the variable $x$ and $\overline{\mathcal{T}}(x)$ denotes
the set of terms in which $x$ does not appear.
\begin{definition}[Substitution]\label{def:substitution}
For $t,\alpha\in \mathcal{T}$, $x\in X$, substitution of $x$ with $\alpha$ in $t$, denoted by $t[x\leftarrow \alpha]$ is defined as follows:  
\begin{enumerate}
 \item If $t=x$, then $t[x\leftarrow \alpha]=\alpha$. 
 \item If $t\notin \mathcal{T}(x)$, $t[x\leftarrow \alpha]=t$
 \item If $t=t_1+t_2$ then $t[x\leftarrow \alpha]=t_1[x\leftarrow \alpha]+t_2[x\leftarrow \alpha]$.  
\end{enumerate}
\end{definition}
\begin{remark}
In the rest of the paper, complete proofs for statements that are left unproven in the main text are provided in the appendix. Proofs for some elementary properties of 
lattices, which are used in the main text, are also given in the appendix, to make the presentation self contained.
\end{remark}
\section{Congruences of Terms}\label{sec:cong-1}
We define the notion of congruence (of terms).  The notion of congruence will be useful later to model equivalence of terms at various program points in a data flow framework.
\begin{definition}[Congruence of Terms]\label{def:cong}
Let $\mathcal{P}$ be a partition of $\mathcal{T}$.  $\mathcal{P}$ is a Congruence (of terms) if the following conditions hold:
\begin{enumerate}
 \item For each $c,c'\in C$, if $c\neq c'$ then  $c\ncong c'$. (No two distinct constants are congruent).  
 \item For $t,t',s,s'\in \mathcal{T}$, $t'\cong t$ and $s'\cong s$ if and only if  $t'+s'\cong t+s$.  (Congruences respect operators).    
 \item For any $c\in C$, $t\in \mathcal{T}$, if $t\cong c$ then either $t=c$ or $t\in X$.  (The only non-constant terms that are allowed to be congruent to a constant are  variables).  
\end{enumerate}
\end{definition}
The motivation for the definition of congruence is the following.  Given the representation of a program in a data flow framework (to be defined later), 
we will associate a congruence to each program point at each iteration.  
Each iteration refines the present congruence at each program point.  We will see later that this process of refinement leads to a well defined ``fix point congruence'' 
at each program point.  We will see that this fix point congruence captures Herbrand equivalence at that program point.    
\begin{definition}
The set of all congruences over $\mathcal{T}$ is denoted by $\mathcal{G}(\mathcal{T})$.  We first note the substitution property of congruences.    
\end{definition}
\begin{obs}[Substitution Property]\label{obs:substitution}
 Let $\mathcal{P}$ be a congruence over $\mathcal{T}$.  Then,  for each $\alpha, \beta \in \mathcal{T}$, $\alpha\cong \beta$ if and only if for all 
 $x\in X$ and $t\in \mathcal{T}$, $t[x\leftarrow \alpha]\cong t[x\leftarrow \beta]$.
\end{obs}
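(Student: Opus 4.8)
The plan is to establish the two directions of the biconditional independently. The reverse implication is immediate from the definition of substitution, while the forward implication is a structural induction on the term $t$ that rests on the operator-compatibility clause of the congruence definition.

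For the reverse direction I would assume that $t[x\leftarrow\alpha]\cong t[x\leftarrow\beta]$ holds for every $x\in X$ and every $t\in\mathcal{T}$, fix any variable $x$ (using that $X$ is non-empty), and specialize $t$ to be that same variable. Clause~1 of Definition~\ref{def:substitution} then gives $x[x\leftarrow\alpha]=\alpha$ and $x[x\leftarrow\beta]=\beta$, so the instance of the hypothesis at this $x$ and $t$ reads directly as $\alpha\cong\beta$.

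For the forward direction I would fix an arbitrary $x\in X$, assume $\alpha\cong\beta$, and prove $t[x\leftarrow\alpha]\cong t[x\leftarrow\beta]$ by induction on the structure of $t$ following the grammar $t::=c\mid x\mid(t+t)$. If $t=x$, substitution produces $\alpha$ and $\beta$, and the claim is exactly the hypothesis. If $t$ is a constant or a variable distinct from $x$, then $t\notin\mathcal{T}(x)$, so clause~2 of Definition~\ref{def:substitution} leaves $t$ fixed under both substitutions and reflexivity of $\cong$ closes the case. If $t=t_1+t_2$, the induction hypotheses supply $t_i[x\leftarrow\alpha]\cong t_i[x\leftarrow\beta]$ for $i=1,2$; clause~3 of Definition~\ref{def:substitution} rewrites both sides of the goal as sums, and the implication of clause~2 of Definition~\ref{def:cong} asserting that congruent summands yield congruent sums combines these two componentwise congruences into the desired $t[x\leftarrow\alpha]\cong t[x\leftarrow\beta]$.

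There is no deep obstacle; the proof is a routine induction, and the only care needed is in the bookkeeping. I would be careful to split the variable case cleanly into $t=x$ and $t\neq x$, since these invoke different clauses of Definition~\ref{def:substitution}, and in the inductive step I would invoke clause~2 of Definition~\ref{def:cong} only in the direction asserting that congruent summands produce congruent sums, not its converse. Since the argument touches the operator $+$ only through this clause, it extends verbatim to any finite set of operators, as the paper remarks.
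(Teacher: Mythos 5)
Your proof is correct and follows essentially the same route as the paper's: specializing $t$ to the variable $x$ for the easy direction, and structural induction on $t$ using clause~(2) of Definition~\ref{def:cong} for the converse. The only cosmetic difference is that the paper groups all terms with $t\notin\mathcal{T}(x)$ into a single base case, whereas you restrict the base cases to atomic terms and let compound $x$-free terms fall through the inductive step --- both are valid, since the clauses of Definition~\ref{def:substitution} agree on such terms.
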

\begin{proof}
One direction is easy.  If for all $x\in X$ and $t\in \mathcal{T}$, $t[x\leftarrow \alpha]\cong t[x\leftarrow \beta]$, then setting $t=x$ we get 
$\alpha=t[x\leftarrow \alpha]\cong t[x\leftarrow \beta]=\beta$.  
Conversely, suppose $\alpha\cong \beta$.  Let $x\in X$ and $t\in \mathcal{T}$ be chosen arbitrarily. To prove $t[x\leftarrow \alpha]\cong t[x\leftarrow \beta]$, we use induction.  
If $t\notin \mathcal{T}(x)$, then $t[x\leftarrow \alpha]=t\cong t=t[x\leftarrow \beta]$.  If $t=x$, then $t[x\leftarrow \alpha]=\alpha\cong \beta =t[x\leftarrow \beta]$.
Otherwise, if $t=t_1+t_2$, then $t[x\leftarrow \alpha]=t_1[x\leftarrow \alpha]+t_2[x\leftarrow \alpha]
\cong t_1[x\leftarrow \beta]+t_2[x\leftarrow \beta] \text{ (by induction hypothesis)}=t[x\leftarrow \beta]$.
\end{proof}
The following observation is a direct consequence of condition (3) of the definition of congruence.  
\begin{obs}\label{obs:const2}
 Let $\mathcal{P}\in \gt$, $c\in C$ and let $t\in \mathcal{T}(y)$, with $t\neq y$.  Then $c\ncong_{_{\mathcal{P}}} t[y\leftarrow c]$.
\end{obs}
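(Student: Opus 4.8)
The plan is to exploit the syntactic rigidity forced by the grammar $t ::= c \mid x \mid (t+t)$ together with condition~(3) of Definition~\ref{def:cong}, which restricts which terms may be congruent to a constant.

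First I would argue that the hypotheses $t \in \mathcal{T}(y)$ and $t \neq y$ force $t$ to have the form $t = t_1 + t_2$ for some $t_1, t_2 \in \mathcal{T}$. Examining the three productions of the grammar: $t$ cannot be a constant, since a constant contains no variable whereas $y$ occurs in $t$; and $t$ cannot be a variable, since the only variable in which $y$ occurs is $y$ itself, which is excluded by $t \neq y$. Hence the only remaining production applies and $t$ is a sum.

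Next I would apply clause~(3) of Definition~\ref{def:substitution} to conclude $t[y \leftarrow c] = t_1[y \leftarrow c] + t_2[y \leftarrow c]$, so that the substituted term is again syntactically a sum. In particular, $t[y \leftarrow c]$ is neither a constant nor a variable, since by the grammar a sum is syntactically distinct from the productions $c$ and $x$. Finally I would invoke condition~(3) of Definition~\ref{def:cong}: if a term $s$ satisfies $s \conp c$ for a constant $c$, then $s = c$ or $s \in X$. Taking the contrapositive, since $t[y \leftarrow c]$ is a sum (hence neither equal to $c$ nor a member of $X$), we obtain $c \ncong_{_{\mathcal{P}}} t[y \leftarrow c]$, as required.

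I do not anticipate any genuine obstacle here, as the statement is essentially an unwinding of condition~(3). The only point demanding care is the structural case analysis establishing that $t$ is a sum, where one must be explicit that ``$y$ occurs in $t$'' rules out the constant production and that ``$t \neq y$'' rules out the variable production; everything else follows immediately from the definition of substitution and condition~(3).
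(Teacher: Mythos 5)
Your proof is correct and takes the same route the paper intends: the paper states this observation without a written proof, calling it a direct consequence of condition~(3) of Definition~\ref{def:cong}, and your argument is precisely the natural unwinding of that remark. The structural case analysis --- $t\in\mathcal{T}(y)$ with $t\neq y$ forces $t=t_1+t_2$, so by clause~(3) of Definition~\ref{def:substitution} the term $t[y\leftarrow c]$ is a sum, hence neither equal to $c$ nor in $X$, and condition~(3) applies --- is exactly the only content needed, and you handle it correctly.
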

We define a binary {\em confluence operation} on the set of congruences, $\mathcal{G}(\mathcal{T})$.
A confluence operation transforms a pair of congruences into a congruence.  
\begin{definition}[Confluence]\label{def:confluence}
Let $\mathcal{P}_1=\{A_i\}_{i\in I}$ and $\mathcal{P}_2=\{B_j\}_{j\in J}$ 
be two congruences.  For all $i\in I$ and $j\in J$, define  $C_{i,j}=A_i\cap B_j$.   The confluence  of $\mathcal{P}_1$ and $\mathcal{P}_2$ is defined by:
\\ $\mathcal{P}_1\wedge \mathcal{P}_2=\{C_{i,j}: i\in I,j\in J, C_{i,j}\neq \emptyset\}$.  
\end{definition}
\begin{theorem}\label{thm:confluence}
If $\mathcal{P}_1$ and $\mathcal{P}_2$ are congruences, then $\mathcal{P}_1\wedge \mathcal{P}_2$ is a congruence.  
\end{theorem}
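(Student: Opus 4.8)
The plan is to reduce every requirement in Definition~\ref{def:cong} for $\mathcal{P}_1\wedge\mathcal{P}_2$ to the corresponding property already known for $\mathcal{P}_1$ and $\mathcal{P}_2$. The single fact that makes this possible is that the confluence is nothing but the common refinement of the two partitions, so that membership of a term in a block $C_{i,j}=A_i\cap B_j$ can be read off coordinate-wise.

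First I would check that $\mathcal{P}_1\wedge\mathcal{P}_2$ is genuinely a partition of $\mathcal{T}$. Every $t\in\mathcal{T}$ lies in exactly one $A_i$ and exactly one $B_j$, hence in the (nonempty) block $C_{i,j}$, so the blocks cover $\mathcal{T}$; and if two blocks $C_{i,j}$ and $C_{i',j'}$ share a term, then $A_i\cap A_{i'}\neq\emptyset$ and $B_j\cap B_{j'}\neq\emptyset$ force $i=i'$ and $j=j'$ because $\mathcal{P}_1$ and $\mathcal{P}_2$ are partitions. Once this is settled, the key lemma I would isolate is the characterization
\[
t \concap t' \quad\Longleftrightarrow\quad t \conone t' \ \text{ and }\ t \contwo t',
\]
valid for all $t,t'\in\mathcal{T}$: lying in the same block $A_i\cap B_j$ is equivalent to lying in the same $A_i$ and in the same $B_j$.

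With this biconditional in hand, each of the three congruence conditions becomes mechanical. For condition~(1), distinct constants $c\neq c'$ satisfy $c\ncong_{_{\mathcal{P}_1}}c'$, so the right-hand side of the characterization fails and $c\ncong_{_{\mathcal{P}_1\wedge\mathcal{P}_2}}c'$. For condition~(2), I would expand $t'\concap t$ and $s'\concap s$ into the conjunction of the corresponding statements for $\mathcal{P}_1$ and for $\mathcal{P}_2$, apply the operator-respecting equivalence of each component congruence, and re-assemble; since condition~(2) for $\mathcal{P}_1$ and $\mathcal{P}_2$ is itself an ``if and only if'', both directions for $\mathcal{P}_1\wedge\mathcal{P}_2$ drop out at once. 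For condition~(3), $t\concap c$ implies in particular $t\conone c$, whence $t=c$ or $t\in X$ by condition~(3) for $\mathcal{P}_1$.

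I do not expect a genuine obstacle here: once the common-refinement characterization is proved, the verification is a routine coordinate-wise transfer of each axiom. The only point that requires a little care is the backward (\(\Leftarrow\)) direction of condition~(2), where one must use that the operator axiom for a congruence is a \emph{two-sided} equivalence, so that $t'+s'\concap t+s$ can indeed be decomposed all the way back to $t'\concap t$ and $s'\concap s$ through the decompositions available in $\mathcal{P}_1$ and in $\mathcal{P}_2$ separately.
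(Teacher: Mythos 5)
Your proposal is correct and follows essentially the same route as the paper's proof: both rest on the observation that $t \concap t'$ holds exactly when $t \conone t'$ and $t \contwo t'$, and then transfer each of the three congruence axioms coordinate-wise (with condition~(2) using the two-sided biconditional and condition~(3) needing only one of the two component congruences). The only difference is presentational—you spell out the verification that $\mathcal{P}_1\wedge\mathcal{P}_2$ is a partition, which the paper dismisses as clear.
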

\section{Structure of Congruences}\label{sec:cong-2}
In this section we will define an ordering on the set $\gt$ and then extend it it to a complete lattice.     
\begin{definition}[Refinement of a Congruence]
Let $\mathcal{P}$, $\mathcal{P'}$ be congruences over $\mathcal{T}$.  We say $P\preceq P'$ (read $P$ is a refinement of $P'$) if for each equivalence class $A\in \mathcal{P}$, there exists
an equivalence class $A'\in \mathcal{P'}$ such that $A\subseteq A'$.  
\end{definition}
\begin{definition}\label{def:bot}
The partition in which each term in $\mathcal{T}$ belongs to a different class is defined as: $\bot=\{ \{t\} : t\in \mathcal{T} \}$.    
\end{definition}
The following observation is a direct consequence of the definition of $\bot$.  
\begin{obs}\label{obs:bot}
 $\bot$ is a congruence.  Moreover for any $\mathcal{P}\in \gt$, $\bot \wedge \mathcal{P}=\bot$.  
\end{obs}
\begin{definition}\label{def:semilattice}
A partially ordered set $(P,\leq)$ is a meet semi-lattice if every pair of elements $a,b\in P$ has a greatest lower bound (called the meet of $a$ and $b$).   
\end{definition}
\begin{lemma}\label{lem:poset}
$(\mathcal{G}(\mathcal{T}),\preceq)$ is a meet semi-lattice with meet operation $\wedge$ and bottom element $\bot$. 
\end{lemma}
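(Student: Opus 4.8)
The plan is to verify the three ingredients of the claim separately: that $\preceq$ is a partial order on $\gt$, that the confluence $\wedge$ supplies the greatest lower bound of any two congruences, and that $\bot$ is the least element. The fact that $\mathcal{P}_1\wedge\mathcal{P}_2$ is again a congruence is already guaranteed by Theorem~\ref{thm:confluence}, so the work here is purely order-theoretic and rests on the set-theoretic behaviour of partition classes.

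First I would establish that $(\gt,\preceq)$ is a poset. Reflexivity is immediate since $A\subseteq A$ for every class $A$. For transitivity, given $\mathcal{P}\preceq\mathcal{P}'$ and $\mathcal{P}'\preceq\mathcal{P}''$, each class $A\in\mathcal{P}$ sits inside some $A'\in\mathcal{P}'$, which in turn sits inside some $A''\in\mathcal{P}''$, yielding $A\subseteq A''$. Antisymmetry is the first place where the partition structure is genuinely used: if $\mathcal{P}\preceq\mathcal{P}'$ and $\mathcal{P}'\preceq\mathcal{P}$, then for $A\in\mathcal{P}$ we obtain $A\subseteq A'\subseteq A''$ for some $A'\in\mathcal{P}'$ and $A''\in\mathcal{P}$; since $A$ and $A''$ are classes of the \emph{same} partition and $A\subseteq A''$ with $A\neq\emptyset$, the disjoint-or-equal property forces $A=A''$, whence $A=A'$. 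Thus every class of $\mathcal{P}$ is a class of $\mathcal{P}'$, and by the symmetric argument conversely, so $\mathcal{P}=\mathcal{P}'$.

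Next I would show that $\mathcal{P}_1\wedge\mathcal{P}_2$ is the meet. That it is a lower bound follows because each nonempty class $C_{i,j}=A_i\cap B_j$ satisfies $C_{i,j}\subseteq A_i$ and $C_{i,j}\subseteq B_j$, giving $\mathcal{P}_1\wedge\mathcal{P}_2\preceq\mathcal{P}_1$ and $\mathcal{P}_1\wedge\mathcal{P}_2\preceq\mathcal{P}_2$. The real structural content is that it is the \emph{greatest} lower bound: taking any congruence $\mathcal{Q}$ with $\mathcal{Q}\preceq\mathcal{P}_1$ and $\mathcal{Q}\preceq\mathcal{P}_2$, each class $D\in\mathcal{Q}$ lies inside some $A_i\in\mathcal{P}_1$ and some $B_j\in\mathcal{P}_2$, hence $D\subseteq A_i\cap B_j=C_{i,j}$; since $D\neq\emptyset$ this intersection is nonempty and therefore a genuine class of $\mathcal{P}_1\wedge\mathcal{P}_2$, so $\mathcal{Q}\preceq\mathcal{P}_1\wedge\mathcal{P}_2$. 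This is exactly the defining property of the meet in Definition~\ref{def:semilattice}.

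Finally, that $\bot$ is the bottom element is immediate: every singleton class $\{t\}$ of $\bot$ is contained in whichever class of $\mathcal{P}$ holds $t$, so $\bot\preceq\mathcal{P}$ for every $\mathcal{P}\in\gt$; alternatively one may read this off from Observation~\ref{obs:bot}, since $\bot\wedge\mathcal{P}=\bot$ and the meet is a lower bound. I expect the only genuinely delicate step to be antisymmetry --- making the disjoint-or-equal behaviour of partition classes explicit --- whereas the greatest-lower-bound verification, although it carries the essential semilattice content, reduces to a short containment argument once the poset axioms are in place.
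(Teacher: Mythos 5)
Your proof is correct and follows essentially the same route as the paper's: the lower-bound and greatest-lower-bound verifications via class containment ($D\subseteq A_i$, $D\subseteq B_j$ forces $\emptyset\neq D\subseteq A_i\cap B_j$, a class of $\mathcal{P}_1\wedge\mathcal{P}_2$) are exactly the paper's argument, and $\bot$ is handled the same way via Observation~\ref{obs:bot}. The only difference is that you spell out the poset axioms (notably antisymmetry via the disjoint-or-equal property of partition classes), which the paper dismisses as evident from the definition of $\preceq$ --- a harmless, indeed welcome, addition of detail.
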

The following lemma extends the meet operation to arbitrary non-empty collections of congruences.  The proof relies on the axiom of choice.  
\begin{lemma}\label{lem:subsetInfimum}
 Every non-empty subset of $(\mathcal{G}(\mathcal{T}),\preceq)$ has a greatest lower bound.  
\end{lemma}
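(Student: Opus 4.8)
The plan is to construct the greatest lower bound explicitly as the \emph{common refinement} of the given family, rather than trying to iterate the binary meet $\wedge$ from Lemma~\ref{lem:poset} (which only reaches finite meets). Let $\{\mathcal{P}_k\}_{k\in K}$ be an arbitrary non-empty family of congruences. I would define a relation on $\mathcal{T}$ by declaring $t\cong_{\mathcal{P}} t'$ precisely when $t\cong_{\mathcal{P}_k} t'$ for every $k\in K$; equivalently, the blocks of the candidate partition $\mathcal{P}$ are the non-empty intersections $\bigcap_{k\in K} B_k$, where $(B_k)_{k\in K}$ ranges over the product $\prod_{k\in K}\mathcal{P}_k$ of selections of one block from each $\mathcal{P}_k$, so that the class of a term $t$ is $\bigcap_{k\in K}[t]_{\mathcal{P}_k}$. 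It is precisely in handling this product of the (possibly uncountably many) factors $\mathcal{P}_k$ that the axiom of choice is invoked.

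That $\mathcal{P}$ is a partition is routine: reflexivity, symmetry and transitivity of $\cong_{\mathcal{P}}$ follow immediately from the corresponding properties of each $\cong_{\mathcal{P}_k}$ applied uniformly across $k$. The crux is verifying that $\mathcal{P}$ is in fact a \emph{congruence}, i.e.\ satisfies the three conditions of Definition~\ref{def:cong}. Condition~(1) holds because distinct constants are already separated in each $\mathcal{P}_k$, so they cannot be merged by an intersection. For condition~(3), if $t\cong_{\mathcal{P}} c$ then $t\cong_{\mathcal{P}_k} c$ for every $k$; picking any single index $k_0\in K$ (the family is non-empty) and using that $\mathcal{P}_{k_0}$ is a congruence forces $t=c$ or $t\in X$. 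The condition I expect to require the most care is the operator-compatibility condition~(2), which is a biconditional: here I would argue both directions pointwise in $k$, using that for each fixed $k$ we have $t'\cong_{\mathcal{P}_k} t$ and $s'\cong_{\mathcal{P}_k} s$ if and only if $t'+s'\cong_{\mathcal{P}_k} t+s$, and then quantifying over all $k$ to transfer the equivalence to $\mathcal{P}$.

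Finally I would verify the universal property of the infimum. That $\mathcal{P}\preceq\mathcal{P}_k$ for each $k$ is immediate, since every block $\bigcap_{j\in K}[t]_{\mathcal{P}_j}$ is contained in the block $[t]_{\mathcal{P}_k}$. For maximality among lower bounds, suppose $\mathcal{Q}$ is any congruence with $\mathcal{Q}\preceq\mathcal{P}_k$ for all $k$; then $t\cong_{\mathcal{Q}} t'$ implies $t\cong_{\mathcal{P}_k} t'$ for every $k$, hence $t\cong_{\mathcal{P}} t'$, giving $\mathcal{Q}\preceq\mathcal{P}$, so $\mathcal{P}$ is the greatest lower bound. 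The main obstacle is not any single algebraic step but the set-theoretic bookkeeping of the infinite construction --- ensuring the blocks form a genuine set of equivalence classes and that the product indexing is legitimate --- which is exactly the point at which the axiom of choice enters.
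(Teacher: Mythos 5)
Your proposal is correct and follows essentially the same route as the paper's proof: both construct the common refinement whose blocks are the non-empty intersections $\bigcap_{k}A_{k,i_k}$ over selections from the product of the families' index sets, verify the three congruence conditions pointwise in $k$ (with condition (2) handled by the same quantifier exchange across the biconditional), and conclude with the same lower-bound and maximality arguments, invoking the axiom of choice at the same spot. Your relational phrasing ($t\cong_{\mathcal{P}}t'$ iff $t\cong_{\mathcal{P}_k}t'$ for all $k$) merely makes the partition axioms immediate, where the paper checks coverage and disjointness of the indexed blocks explicitly; the substance is identical.
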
 
Next, we extend the meet semi-lattice $(\gt,\preceq)$ by artificially adding a top element $\top$, so that the greatest lower bound of the empty set is also well defined.  
\begin{definition}\label{def:PBar}
The lattice $(\gtbar,\preceq, \bot, \top)$  is defined over the set  $\gtbar=\gt \cup \{\top \}$ 
with $\mathcal{P}\preceq \top$ for each $\mathcal{P}\in \gtbar$.  
In particular, $\top$ is the greatest lower bound of $\emptyset$ and 
$\top \wedge \mathcal{P}=\mathcal{P}$ for every $\mathcal{P}\in \gtbar$.  
\end{definition}
Hereafter, we will be referring the element $\top$ as a congruence.  It follows from Lemma~\ref{lem:subsetInfimum} and the above definition that every subset 
of $\gtbar$  has a greatest lower bound.  Since a meet semi-lattice in which every subset has a greatest lower bound is a complete lattice (Theorem~\ref{Athm:complete}), we have:
\begin{theorem}\label{thm:complete}
$(\gtbar, \preceq, \bot, \top)$ is a complete lattice.   
\end{theorem}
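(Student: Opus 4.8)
The plan is to reduce everything to a single structural fact, recorded as Theorem~\ref{Athm:complete} in the appendix, that a partially ordered set in which every subset possesses a greatest lower bound is automatically a complete lattice (the least upper bound of a set $S$ being recovered as the greatest lower bound of the collection of upper bounds of $S$). Thus it suffices to verify the one remaining hypothesis: that \emph{every} subset of $\gtbar$ has a greatest lower bound under $\preceq$, including the empty set.

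First I would dispose of the empty set, which is exactly the reason $\top$ was adjoined: by Definition~\ref{def:PBar}, $\top$ is declared to be the greatest lower bound of $\emptyset$, so nothing needs checking there. Next I would take an arbitrary non-empty $S \subseteq \gtbar$ and set $S' = S \setminus \{\top\}$. If $S' = \emptyset$, then $S = \{\top\}$ and its greatest lower bound is $\top$ itself. Otherwise $S'$ is a non-empty subset of $\gt$, so Lemma~\ref{lem:subsetInfimum} supplies a greatest lower bound $\mathcal{Q} \in \gt$ of $S'$. I would then argue that this same $\mathcal{Q}$ is the greatest lower bound of $S$: since $\mathcal{Q} \preceq \top$ holds by Definition~\ref{def:PBar} and $\mathcal{Q}$ is a lower bound of $S'$, it is a lower bound of all of $S$; conversely any lower bound of $S$ is in particular a lower bound of $S'$ and hence $\preceq \mathcal{Q}$. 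So adjoining the top element $\top$, which dominates every congruence, does not disturb the infimum of a non-empty family of genuine congruences.

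Having established that every subset of $\gtbar$ has a greatest lower bound, the complete lattice property follows immediately by Theorem~\ref{Athm:complete}; the elements $\bot$ and $\top$ serve as the least and greatest elements by Lemma~\ref{lem:poset} and Definition~\ref{def:PBar} respectively. The only real care needed is the bookkeeping around the artificially adjoined top element: one must confirm that $\top$ interacts with $\preceq$ exactly as an absolute maximum should, so that it can play both roles at once — the infimum of $\emptyset$ and a harmless member of any larger set. I do not anticipate a genuine obstacle here, since the substantive content, namely the existence of infima for arbitrary non-empty families of congruences, has already been carried out in Lemma~\ref{lem:subsetInfimum}.
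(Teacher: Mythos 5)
Your proposal is correct and follows exactly the paper's argument: the paper likewise deduces the theorem from Lemma~\ref{lem:subsetInfimum} (infima of non-empty families of congruences), Definition~\ref{def:PBar} (making $\top$ the infimum of $\emptyset$ and absorbing $\top$ from larger sets), and the appendix fact (Theorem~\ref{Athm:complete}) that a meet-complete poset is a complete lattice. Your write-up merely makes explicit the bookkeeping step of removing $\top$ from a non-empty set, which the paper leaves implicit.
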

\begin{definition}[Infimum]\label{def:sup-inf}
Let $S=\{\mathcal{P}_i\}_{i\in I}$ be an arbitrary collection of congruences in $\gtbar$ ($S$ may be empty or may contain $\top$).  
The infimum of the set $S$, denoted by $\bigwedge S$ or $\bigwedge_{i\in I}\mathcal{P}_i$, 
is defined as the greatest lower bound of the set $\{\mathcal{P}_i\}_{i\in I}$.  
\end{definition}
\section{Transfer Functions}\label{sec:transfun-1}
We now define a class of unary operators on $\gt$ called {\em transfer functions}.  A transfer function specifies how the assignment of a term to a variable transforms the congruence  
before the assignment into a new one.  
\begin{definition}[Transfer Functions]\label{def:trans-function}
Let $y\in X$ and $\beta\in \overline{\mathcal{T}}(y)$. (Note that $y$ does not appear in $\beta$).  Let $\mathcal{P}=\{A_i\}_{i\in I} $ be an arbitrary congruence. The transfer function 
$\trans: \mathcal{G}(\mathcal{T})\longrightarrow \mathcal{G}(\mathcal{T})$ transforms $\mathcal{P}$ to a congruence $\trans$ given by the following:  
\begin{itemize}
 \item For each $i\in I$, let $B_{i}=\{t \in \mathcal{T}$ : $t[y\leftarrow \beta]\in A_{i}\}$.
 \item $\trans=\{B_{i}$ : $i\in I$, $B_i\neq \emptyset\}$.
\end{itemize}
\end{definition}
  \begin{figure}[h] 
  \begin{center}
 \includegraphics[scale=0.75]{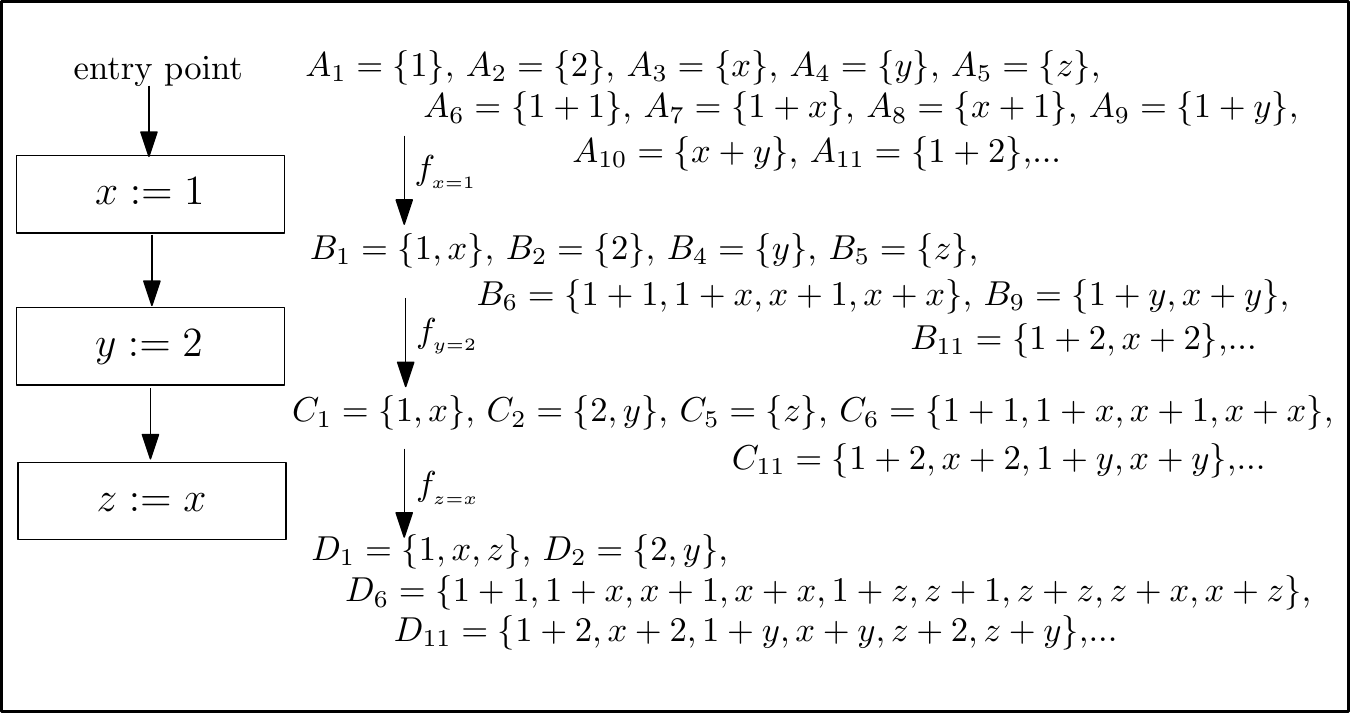}
  \caption{Application of Transfer Functions}
\label{fig:trans} 
\end{center}
\end{figure}
\begin{remark}
It follows from the above definition that $\overline{A_i}(y)=(A_i\setminus A_i(y))\subseteq B_i$.  That is, $B_i$ will contain all terms in $A_i$ in which $y$ does not appear. See Figure~\ref{fig:trans} for an example.
\end{remark}
In the following, we write $f(\mathcal{P})$ instead of $\trans$ to avoid cumbersome notation.  The following is a direct consequence of Definition~\ref{def:trans-function}.
\begin{obs}\label{obs:trans}
 For any $t,t'\in \mathcal{T}$, $t\cong_{f(\mathcal{P})}t'$ if and only if $t[y\leftarrow \beta]\cong_{\mathcal{P}}t'[y\leftarrow \beta]$.  
\end{obs}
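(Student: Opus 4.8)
The plan is to treat this as a straightforward unwinding of Definition~\ref{def:trans-function}, once we recognise $f(\mathcal{P})$ as the partition of $\mathcal{T}$ obtained by pulling back the classes of $\mathcal{P}$ along the substitution map $\sigma\colon t\mapsto t[y\leftarrow \beta]$. The whole content of the statement is the membership equivalence $t\in B_i \iff t[y\leftarrow \beta]\in A_i$, read off in both directions.

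First I would verify that the sets $B_i=\{t\in\mathcal{T}: t[y\leftarrow \beta]\in A_i\}$ genuinely partition $\mathcal{T}$, so that $\confp$ is a well-defined equivalence relation and the statement even makes sense. Since $\{A_i\}_{i\in I}$ is a partition of $\mathcal{T}$ and $\sigma$ is a total map on $\mathcal{T}$, every term $t$ has $\sigma(t)$ lying in exactly one class $A_i$, placing $t$ in exactly one $B_i$; and if $t\in B_i\cap B_j$ then $\sigma(t)\in A_i\cap A_j$, forcing $i=j$ by disjointness of the $A_i$. Thus the nonempty $B_i$ are pairwise disjoint and cover $\mathcal{T}$. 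This bookkeeping step is the only place where the hypotheses on $\mathcal{P}$ are actually used.

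With this in hand the equivalence follows directly. For the forward direction, $t\confp t'$ means $t$ and $t'$ share a common class $B_i$ of $f(\mathcal{P})$; by the defining property of $B_i$, both $t[y\leftarrow \beta]$ and $t'[y\leftarrow \beta]$ then lie in $A_i$, so $t[y\leftarrow \beta]\conp t'[y\leftarrow \beta]$. For the converse, if $t[y\leftarrow \beta]\conp t'[y\leftarrow \beta]$, then both images lie in a common class $A_i$, whence both $t$ and $t'$ belong to $B_i$, and therefore $t\confp t'$.

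I do not expect any genuine obstacle here: the statement is purely a restatement of the preimage construction, and the only care needed is the bookkeeping check that the $B_i$ partition $\mathcal{T}$ (in particular the appeal to disjointness of the $A_i$ and the totality of $\sigma$). Everything else is immediate from chasing the membership relation $t\in B_i \iff t[y\leftarrow \beta]\in A_i$ in both directions, which is exactly why the paper can flag it as a direct consequence of the definition.
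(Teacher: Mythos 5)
Your proof is correct and matches the paper's intent exactly: the paper states this observation as a direct consequence of Definition~\ref{def:trans-function}, and your membership chase $t\in B_i \iff t[y\leftarrow\beta]\in A_i$ in both directions is precisely that unwinding. Your preliminary check that the nonempty $B_i$ partition $\mathcal{T}$ is sound but duplicates work the paper defers to Theorem~\ref{thm:trans-function}, which establishes well-foundedness of the definition separately.
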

To make Definition~\ref{def:trans-function} well founded, we need to establish the following:
\begin{theorem}\label{thm:trans-function}  
 If $\mathcal{P}$ is a congruence, then for any $y\in X$, $\beta \in \overline{\mathcal{T}}(y)$, $\trans$ is a congruence.  
\end{theorem}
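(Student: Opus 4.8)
The plan is to verify directly that $f(\mathcal{P})$ satisfies the three defining conditions of a congruence (Definition~\ref{def:cong}), using Observation~\ref{obs:trans} as the central translation device. Observation~\ref{obs:trans} lets me replace every assertion about $\confp$ by the corresponding assertion about $\conp$ after applying the substitution $[y\leftarrow\beta]$; combined with the facts that substitution distributes over $+$ (Definition~\ref{def:substitution}(3)) and that any constant $c$, as well as any variable other than $y$, is left unchanged by $[y\leftarrow\beta]$ (since $y$ does not occur in it), this reduces each axiom to the matching property of $\mathcal{P}$. That $f(\mathcal{P})$ is a partition is immediate from its definition as the collection of nonempty preimages $B_i$ of the classes $A_i$ under the map $t\mapsto t[y\leftarrow\beta]$, since preimages of a partition under a function again form a partition; so I will concentrate on the three congruence conditions.

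For condition (1), given distinct constants $c,c'$, both are fixed by $[y\leftarrow\beta]$, so by Observation~\ref{obs:trans} $c\confp c'$ holds iff $c\conp c'$, which fails because $\mathcal{P}$ is a congruence; hence $c\ncong_{f(\mathcal{P})}c'$. For condition (2), I apply Observation~\ref{obs:trans} to each of $t'\confp t$, $s'\confp s$, and $t'+s'\confp t+s$, and then use $(t'+s')[y\leftarrow\beta]=t'[y\leftarrow\beta]+s'[y\leftarrow\beta]$ together with the analogous identity for $t+s$. Condition (2) for $f(\mathcal{P})$ then becomes precisely condition (2) for $\mathcal{P}$ applied to the substituted terms, and the biconditional transfers verbatim.

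The main work, and the step I expect to be the principal obstacle, is condition (3): showing that if $t\confp c$ then $t=c$ or $t\in X$. By Observation~\ref{obs:trans} the hypothesis becomes $t[y\leftarrow\beta]\conp c$ (again $c$ is fixed), and condition (3) for $\mathcal{P}$ gives that $t[y\leftarrow\beta]$ equals $c$ or lies in $X$. The difficulty is that I must recover structural information about $t$ from information about its image $t[y\leftarrow\beta]$, so I will argue by cases on the syntactic form of $t$. If $t\in X$ there is nothing to prove. If $t$ is a constant, then $t[y\leftarrow\beta]=t$, so $t\conp c$ forces $t=c$ by condition (1) of $\mathcal{P}$. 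The crucial case is $t=t_1+t_2$: here $t[y\leftarrow\beta]=t_1[y\leftarrow\beta]+t_2[y\leftarrow\beta]$ is again a sum, hence neither a constant nor a variable, contradicting what condition (3) of $\mathcal{P}$ just yielded; thus $t$ cannot be a sum. This case analysis rests on unique readability of terms (a sum is syntactically distinct from constants and variables) and on substitution preserving the sum shape, and it is the only place where the structure of $t$ must be tracked carefully rather than simply pushed through Observation~\ref{obs:trans}.
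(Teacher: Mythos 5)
Your proposal is correct and takes essentially the same approach as the paper: each congruence axiom of $\trans$ is reduced to the corresponding axiom of $\mathcal{P}$ via Observation~\ref{obs:trans} together with the fact that substitution fixes constants and distributes over $+$, with condition (3) settled by a structural case analysis. The only cosmetic differences are that your preimage argument for the partition property is a slightly slicker version of the paper's explicit case split on whether $y$ occurs in $t$, and in condition (3) you case on the syntactic form of $t$ (ruling out sums by unique readability) where the paper instead enumerates the ways $t[y\leftarrow \beta]$ can land in $X\cup\{c\}$ --- two equivalent organizations of the same argument.
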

Next, we extend the definition of transfer functions to $(\gtbar,\preceq,\bot,\top)$. 
\begin{definition}\label{def:ext-transfun}
Let $y\in X$ and $\beta\in \overline{ \mathcal{T}}(y)$.  Let $\mathcal{P}\in \gtbar$. The extended transfer function 
$\transbar: \gtbar \longrightarrow \gtbar$ transforms $\mathcal{P}$ to $\transbar\in \gtbar$ given by the following:  
\begin{itemize}
 \item If $\mathcal{P}\in \gt$, $\transbar=\trans$.
 \item $\overline{f}_{_{y=\beta}}(\top)=\top$.
\end{itemize}
\end{definition}  
To simplify the notation, we often write $\trans$ (or even simply $f(\mathcal{P})$) 
instead of $\transbar$, and refer to extended transfer functions as simply transfer functions, when the underlying
assignment operation is clear from the context. 
\section{Properties of Transfer Functions}\label{sec:transfun-2}
In this section we show that transfer functions are continuous over the complete lattice $(\gtbar, \preceq, \bot, \top)$.  

Consider the (extended) transfer function $f=\overline{f}_{_{y=\beta}}$, where $y\in X$, $\beta\in \overline{\mathcal{T}}(y)$.  Let $\mathcal{P}_1$ and 
$\mathcal{P}_2$ be congruences in $\gtbar$, not necessarily distinct.     
\begin{lemma}[Distributivity]\label{lem:distrb}
 $\transone \wedge \transtwo = \transmeet$. 
\end{lemma}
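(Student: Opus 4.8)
The plan is to prove that the two congruences $\transone \wedge \transtwo$ and $\transmeet$ coincide by showing they induce the same equivalence relation on $\mathcal{T}$; since a congruence is nothing but its partition into classes, equality of the induced relations forces equality of the congruences as elements of $\gt$. The two facts driving the argument are Observation~\ref{obs:trans}, which rewrites $t \cong_{f(\mathcal{P})} t'$ as $t[y\leftarrow\beta]\cong_{\mathcal{P}}t'[y\leftarrow\beta]$, and the immediate consequence of Definition~\ref{def:confluence} that, for congruences $\mathcal{Q}_1,\mathcal{Q}_2\in\gt$, two terms are congruent under $\mathcal{Q}_1\wedge \mathcal{Q}_2$ exactly when they are congruent under both $\mathcal{Q}_1$ and $\mathcal{Q}_2$ (because the classes of the confluence are precisely the nonempty intersections of classes).

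Before invoking Observation~\ref{obs:trans}, which is stated only for ordinary congruences, I would clear away the cases in which $\mathcal{P}_1$ or $\mathcal{P}_2$ equals $\top$, using Definitions~\ref{def:ext-transfun} and~\ref{def:PBar}. If $\mathcal{P}_1=\top$ then $\transone=\top$, so the left-hand side is $\top \wedge \transtwo=\transtwo$, while the right-hand side is $f(\top \wedge \mathcal{P}_2)=f(\mathcal{P}_2)=\transtwo$; the case $\mathcal{P}_2=\top$ is symmetric, and if both equal $\top$ each side collapses to $\top$. Hence it remains only to treat $\mathcal{P}_1,\mathcal{P}_2\in\gt$.

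In the remaining case, fix arbitrary $t,t'\in\mathcal{T}$ and chase a chain of biconditionals: $t$ and $t'$ are congruent under $\transone \wedge \transtwo$ iff $t\cong_{\transone}t'$ and $t\cong_{\transtwo}t'$ (confluence-as-conjunction); iff $t[y\leftarrow\beta]\conone t'[y\leftarrow\beta]$ and $t[y\leftarrow\beta]\contwo t'[y\leftarrow\beta]$ (Observation~\ref{obs:trans} applied to $\mathcal{P}_1$ and to $\mathcal{P}_2$); iff $t[y\leftarrow\beta]\concap t'[y\leftarrow\beta]$ (confluence-as-conjunction again, now for the pair $\mathcal{P}_1,\mathcal{P}_2$); iff $t$ and $t'$ are congruent under $\transmeet$ (Observation~\ref{obs:trans} a final time, applied to $\mathcal{P}_1\wedge\mathcal{P}_2$, which is a congruence by Theorem~\ref{thm:confluence}). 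Since $t,t'$ were arbitrary, the two congruences induce the same relation and are therefore equal.

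I do not expect a genuine obstacle here: the argument is a biconditional chase once Observation~\ref{obs:trans} is in hand. The only points that require care are the bookkeeping around the $\top$ cases (since Observation~\ref{obs:trans} is not stated for $\top$), and the verification that confluence corresponds exactly to the conjunction of the two relations, which I would justify directly from the intersection-of-classes definition rather than treat as self-evident.
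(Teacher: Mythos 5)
Your proof is correct and is essentially the paper's argument in relational form: the paper shows directly that the class $D_{i,j}=\{t : t[y\leftarrow\beta]\in A_i\cap B_j\}$ of $\transmeet$ equals $A'_i\cap B'_j$, which is exactly your biconditional chase (Observation~\ref{obs:trans} plus confluence-as-conjunction) restated at the level of partition classes rather than the induced equivalence relations. Your handling of the $\top$ cases also matches the paper's, so there is nothing to flag.
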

Since distributive functions are monotone, we have:  
\begin{corollary}[Monotonicity]\label{cor:monotone}
 If $\mathcal{P}_1\preceq \mathcal{P}_2$, then $\transone \preceq \transtwo$.   
\end{corollary}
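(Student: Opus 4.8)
The plan is to deduce monotonicity as an immediate consequence of the Distributivity Lemma (Lemma~\ref{lem:distrb}), using the standard order-theoretic fact that in any meet-semilattice the partial order is recoverable from the meet operation: for elements $a,b$, one has $a\preceq b$ if and only if $a\wedge b=a$.

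First I would establish this characterization for $(\gtbar,\preceq,\bot,\top)$. In one direction, if $\mathcal{P}_1\preceq \mathcal{P}_2$, then $\mathcal{P}_1$ is a lower bound of $\{\mathcal{P}_1,\mathcal{P}_2\}$ and is trivially the greatest such lower bound, so the greatest lower bound $\mathcal{P}_1\wedge \mathcal{P}_2$ equals $\mathcal{P}_1$. Conversely, if $\mathcal{P}_1\wedge \mathcal{P}_2=\mathcal{P}_1$, then since the meet is in particular a lower bound of $\mathcal{P}_2$ we obtain $\mathcal{P}_1=\mathcal{P}_1\wedge \mathcal{P}_2\preceq \mathcal{P}_2$. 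The artificially adjoined element $\top$ causes no difficulty, since by Definition~\ref{def:PBar} it satisfies $\top\wedge \mathcal{P}=\mathcal{P}$ and $\mathcal{P}\preceq \top$ for every $\mathcal{P}\in\gtbar$, so the equivalence holds uniformly.

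With this in hand the argument is three lines. Assume $\mathcal{P}_1\preceq \mathcal{P}_2$. By the characterization above, $\mathcal{P}_1\wedge \mathcal{P}_2=\mathcal{P}_1$. Applying $f=\overline{f}_{_{y=\beta}}$ to both sides and invoking Lemma~\ref{lem:distrb} yields
\[
\transone=\transmeet=\transone \wedge \transtwo .
\]
Reading the order-versus-meet equivalence in the reverse direction, the identity $\transone \wedge \transtwo=\transone$ gives exactly $\transone \preceq \transtwo$, which is the claim.

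I do not anticipate a genuine obstacle here: the corollary is a purely formal consequence of distributivity, and all the substantive work—namely, showing that the transfer function commutes with binary meets—was already carried out in Lemma~\ref{lem:distrb}. The only point requiring any care is the verification that the equivalence $a\preceq b\iff a\wedge b=a$ holds across all of $\gtbar$, including the top element, and this follows immediately from its defining properties.
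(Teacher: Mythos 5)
Your proof is correct and is essentially the paper's argument: the paper also derives monotonicity from distributivity via the identity $\mathcal{P}_1\preceq \mathcal{P}_2 \implies \mathcal{P}_1\wedge\mathcal{P}_2=\mathcal{P}_1$, then applies $f(\mathcal{P}_1)=f(\mathcal{P}_1\wedge\mathcal{P}_2)=f(\mathcal{P}_1)\wedge f(\mathcal{P}_2)\preceq f(\mathcal{P}_2)$ (see Observation~\ref{Aobs:monotone}). You merely spell out the standard order-versus-meet equivalence, including its validity at $\top$, a point the paper leaves implicit.
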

We next show that distributivity extends to arbitrary collections of congruences.   
\begin{definition}\label{def:cts}
Let $(L,\leq,\bot,\top)$ and $(L'\leq',\bot',\top')$ be complete lattices. A function $f: L\rightarrow L'$ is continuous 
if for each $\emptyset \neq S\subseteq L$, $f(\bigwedge S)=\bigwedge' f(S)$, where $f(S)=\{f(s): s\in S\}$ and $\bigwedge, \bigwedge'$
denote the infimum operations in the lattices $L$ and $L'$ respectively.  
\end{definition}
\begin{remark}
 The definition of continuity given above is more stringent than the standard definition in the literature, which
 requires continuity only for subsets that are chains. Moreover, the definition above exempts the continuity condition to
 hold for the empty set, because otherwise even constant maps will fail to be continuous.  
\end{remark}
The proof of the next theorem uses the axiom of choice.  
Let $f=f_{_{y=\beta}}$, where $y\in X$, $\beta\in \overline{ \mathcal{T}}(y)$.  For arbitrary collections of congruences $S\subseteq \gtbar$,
The notation $f(S)$ denotes the set $\{f(s): s\in S\}$.  
\begin{theorem}[Continuity]\label{thm:continuous}
 For any $\emptyset \neq S \subseteq \gtbar$, $f(\bigwedge S)=\bigwedge f(S)$.  
\end{theorem}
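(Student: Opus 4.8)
The plan is to reduce the whole statement to a pointwise assertion about the underlying congruence relations, using the substitution reformulation in Observation~\ref{obs:trans}. First I would dispose of the artificial top element. If $\top\in S$, then either $S=\{\top\}$, in which case both sides equal $\top$ because $\overline{f}_{_{y=\beta}}(\top)=\top$ by Definition~\ref{def:ext-transfun}; or $S'=S\setminus\{\top\}\neq\emptyset$, and then $\bigwedge S=\bigwedge S'$ (the constraint $\cdot\preceq\top$ is vacuous) while $f(S)=f(S')\cup\{\top\}$ gives $\bigwedge f(S)=\bigwedge f(S')$. Hence it suffices to prove the identity for $\emptyset\neq S\subseteq\mathcal{G}(\mathcal{T})$, and for such $S$ we have $f(S)\subseteq\mathcal{G}(\mathcal{T})$ by Theorem~\ref{thm:trans-function}, so $\bigwedge S$, $f(\bigwedge S)$ and $\bigwedge f(S)$ are all genuine congruences.

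The engine of the argument is the description of an infimum of congruences as the intersection of their relations: for any nonempty $\mathcal{R}\subseteq\mathcal{G}(\mathcal{T})$ and any $u,v\in\mathcal{T}$, one has $u\cong_{\bigwedge\mathcal{R}}v$ if and only if $u\cong_{\mathcal{Q}}v$ for every $\mathcal{Q}\in\mathcal{R}$. I would record this explicitly as the concrete content of Lemma~\ref{lem:subsetInfimum}: the greatest lower bound is the coarsest common refinement, whose classes are exactly the blocks of the intersection relation $\bigcap_{\mathcal{Q}\in\mathcal{R}}\cong_{\mathcal{Q}}$, and one verifies that this intersection relation again satisfies conditions (1)--(3) of Definition~\ref{def:cong}, so it is a congruence and therefore equals $\bigwedge\mathcal{R}$.

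With these two facts the theorem collapses to a chain of equivalences. Fix arbitrary $t,t'\in\mathcal{T}$. Then $t\cong_{f(\bigwedge S)}t'$ iff $t[y\leftarrow\beta]\cong_{\bigwedge S}t'[y\leftarrow\beta]$ (Observation~\ref{obs:trans}), iff $t[y\leftarrow\beta]\cong_{\mathcal{P}}t'[y\leftarrow\beta]$ for every $\mathcal{P}\in S$ (infimum characterization applied to $S$), iff $t\cong_{f(\mathcal{P})}t'$ for every $\mathcal{P}\in S$ (Observation~\ref{obs:trans} applied to each $\mathcal{P}$), iff $t\cong_{\bigwedge f(S)}t'$ (infimum characterization applied to the nonempty family $f(S)$). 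Since $t,t'$ were arbitrary, the congruences $f(\bigwedge S)$ and $\bigwedge f(S)$ have identical classes, i.e. $f(\bigwedge S)=\bigwedge f(S)$.

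I expect the only real obstacle to be the infimum characterization of the second paragraph, specifically checking that the intersection $\bigcap_{\mathcal{Q}\in\mathcal{R}}\cong_{\mathcal{Q}}$ is a congruence and is genuinely the greatest lower bound; this is exactly the point at which the appeal to the axiom of choice in Lemma~\ref{lem:subsetInfimum} is absorbed, and it must be invoked once for $S$ and once for the image family $f(S)$. Everything after that is bookkeeping through Observation~\ref{obs:trans}. As a sanity check on orientation, note that the easy inclusion $f(\bigwedge S)\preceq\bigwedge f(S)$ follows directly from monotonicity (Corollary~\ref{cor:monotone}), since $\bigwedge S\preceq\mathcal{P}$ for each $\mathcal{P}\in S$ forces $f(\bigwedge S)\preceq f(\mathcal{P})$ and hence $f(\bigwedge S)\preceq\bigwedge f(S)$; the substantive direction is the reverse inclusion, which the substitution-level reformulation delivers in the same breath as the forward one.
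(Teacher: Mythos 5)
Your proof is correct and follows essentially the same route as the paper's: both dispose of $\top$ first, reduce to $\emptyset\neq S\subseteq\gt$, and then exploit the explicit construction of $\bigwedge S$ from Lemma~\ref{lem:subsetInfimum} together with the fact that substitution preimages commute with intersections. The only difference is presentational --- the paper compares the classes $D_{(i_r)}$ and $B'_{(i_r)}$ indexed by the Cartesian product $\prod_{r\in R}I_r$ of index sets, whereas you run the identical computation pointwise on pairs of terms via Observation~\ref{obs:trans} and the intersection-of-relations description of the infimum.
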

\section{Non-deterministic assignment}\label{sec:non-det-asg}
Next we define a special kind of transfer functions corresponding to input statements in the program.  This kind of transfer functions are called
non-deterministic assignments.    
\begin{definition}\label{def:non-det-trans}
Let $y\in X$ and let $\mathcal{P}\in \gt$. The transfer function 
$\ntrans: \mathcal{G}(\mathcal{T})\mapsto \mathcal{G}(\mathcal{T})$ transforms $\mathcal{P}$ to a congruence $f(\mathcal{P})=\ntrans$, given by:
for every $t,t'\in \mathcal{T}$, $t\confp t'$ if and only if both the following conditions are satisfied:  
\begin{enumerate}
 \item $t\conp t'$
 \item For every $\beta \in \mathcal{T}\setminus \mathcal{T}(y), t[y\leftarrow \beta]\conp t'[y\leftarrow \beta]$.   
\end{enumerate}
\end{definition}
Since for every pair of terms $t,t'\in \mathcal{T}$ the above definition precisely decides whether $t\confp t'$ or not, $\ntrans$ is the unique relation containing
exactly those pairs $t,t'\in \mathcal{T}$ satisfying both the conditions in Definition~\ref{def:non-det-trans}.  
The definition asserts that two terms that were equivalent before a non-deterministic assignment, will remain equivalent after the non-deterministic assignment to $y$ 
if and only if the equivalence between the two terms is preserved under all possible substitutions to $y$.  

To make the above definition well founded, we need to prove that $\ntrans$ is a congruence.  
\begin{theorem}\label{thm:non-det-trans}
If $\mathcal{P}$ is a congruence, then for any $y\in X$, $\ntrans$ is a congruence.  
\end{theorem}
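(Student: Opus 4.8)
The plan is to first check that $\confp$, as specified pairwise in Definition~\ref{def:non-det-trans}, is an equivalence relation (so that it determines a genuine partition of $\mathcal{T}$), and then to verify the three defining conditions of a congruence from Definition~\ref{def:cong} for this partition. For the equivalence-relation check, observe that $\confp$ is the conjunction of condition~(1), which is just $\conp$, with condition~(2), which demands $t[y\leftarrow \beta]\conp t'[y\leftarrow \beta]$ for every $\beta\in \overline{\mathcal{T}}(y)$. Each substitution map $t\mapsto t[y\leftarrow \beta]$ pulls the equivalence relation $\conp$ back to an equivalence relation, and an intersection of equivalence relations (over all such $\beta$, together with $\conp$ itself) is again an equivalence relation. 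Hence reflexivity, symmetry and transitivity of $\confp$ follow immediately from those of $\conp$. In particular $\confp$ refines $\conp$: $t\confp t'$ implies $t\conp t'$.

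Conditions~(1) and~(3) of Definition~\ref{def:cong} are then inherited for free, precisely because $\confp$ refines $\conp$. If $c,c'\in C$ are distinct, then $c\ncong_{_{\mathcal{P}}} c'$ since $\mathcal{P}$ is a congruence, so a fortiori $c\ncong_{_{f(\mathcal{P})}} c'$. Likewise, if $t\confp c$ for some $c\in C$, then $t\conp c$, whence condition~(3) for $\mathcal{P}$ forces $t=c$ or $t\in X$.

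The substantive step is condition~(2), that $\confp$ respects $+$, which I would prove in both directions using the substitution identity $(u+v)[y\leftarrow \beta]=u[y\leftarrow \beta]+v[y\leftarrow \beta]$ from Definition~\ref{def:substitution}(3). For the forward direction, assume $t'\confp t$ and $s'\confp s$. Unpacking condition~(1) gives $t'\conp t$ and $s'\conp s$, so $t'+s'\conp t+s$ because $\mathcal{P}$ is a congruence. For condition~(2), fix $\beta\in \overline{\mathcal{T}}(y)$; the hypotheses give $t'[y\leftarrow \beta]\conp t[y\leftarrow \beta]$ and $s'[y\leftarrow \beta]\conp s[y\leftarrow \beta]$, and the operator-respecting property of $\mathcal{P}$ followed by the substitution identity yields $(t'+s')[y\leftarrow \beta]\conp (t+s)[y\leftarrow \beta]$; hence $t'+s'\confp t+s$. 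The backward direction is symmetric: from $t'+s'\confp t+s$ one reads off $t'+s'\conp t+s$ and $(t'+s')[y\leftarrow \beta]\conp (t+s)[y\leftarrow \beta]$ for every $\beta$, and applying the backward implication of condition~(2) for $\mathcal{P}$ at the term level and after each substitution separates these into the two conditions defining $t'\confp t$ and $s'\confp s$.

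I expect the only real bookkeeping, and hence the main obstacle, to be managing the universal quantifier over $\beta$ in condition~(2) of Definition~\ref{def:non-det-trans} uniformly. The point that makes this routine rather than delicate is that substitution commutes with $+$ for every $\beta$, regardless of whether $\beta$ contains $y$, so the identical argument applies to each $\beta\in \overline{\mathcal{T}}(y)$ with no case analysis; everything else reduces to invoking that $\mathcal{P}$ is itself a congruence.
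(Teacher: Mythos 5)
Your proposal is correct and takes essentially the same route as the paper's own proof: both first verify that $\confp$ is an equivalence relation, inherit conditions (1) and (3) of Definition~\ref{def:cong} from the fact that $\confp$ refines $\conp$, and then prove condition (2) by distributing substitution over $+$ and invoking the biconditional operator-respecting property of $\mathcal{P}$, both directly and under every substitution $\beta\in\overline{\mathcal{T}}(y)$. Your intersection-of-pullbacks justification of the equivalence-relation step and your splitting of condition (2) into two explicit directions are only presentational variants of the paper's chain of equivalences.
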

We write $\bigwedge_{\beta \in \overline{\mathcal{T}}(y)} \trans$  
to denote the set $\bigwedge\{ \trans : \beta \in \overline{\mathcal{T}}(y)\}$.  The next theorem shows that each non-deterministic assignment
may be expressed as a confluence of (an infinite collection of) transfer function operations.   
\begin{theorem}\label{thm:ntrans-char}
If $\mathcal{P}$ is a congruence, then for any $y\in X$, \\
\indent $\ntrans=\mathcal{P}\wedge \left( \bigwedge_{\beta \in \overline{\mathcal{T}}(y)} \trans \right)$.
\end{theorem}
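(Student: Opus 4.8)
The plan is to prove the set equality by showing that $\nconfp$ and the congruence $\mathcal{Q} := \mathcal{P} \wedge \left( \bigwedge_{\beta \in \overline{\mathcal{T}}(y)} \trans \right)$ induce the same equivalence relation on $\mathcal{T}$: for all $t,t' \in \mathcal{T}$, $t \nconfp t'$ if and only if $t \conq t'$. Since a congruence is a partition, coincidence of the induced equivalences forces equality of the partitions, which is exactly the claimed identity.

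First I would record a characterization of the infimum in terms of induced equivalences: for a nonempty family $\{\mathcal{P}_i\}_{i\in I}\subseteq\gt$, two terms are congruent in $\bigwedge_{i}\mathcal{P}_i$ if and only if they are congruent in every $\mathcal{P}_i$. The relation ``$t$ and $t'$ agree in every $\mathcal{P}_i$'' is an equivalence relation refining each $\mathcal{P}_i$, hence a lower bound; conversely any lower bound $Q$ satisfies $t\cong_{Q}t' \Rightarrow t\cong_{\mathcal{P}_i}t'$ for all $i$, so $Q$ refines this relation. Thus this common refinement is the greatest lower bound, which by Lemma~\ref{lem:subsetInfimum} is a congruence and equals $\bigwedge_{i}\mathcal{P}_i$. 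This merely specializes the binary confluence property of Definition~\ref{def:confluence} to arbitrary nonempty families. I would also note that $\overline{\mathcal{T}}(y)$ is nonempty (it contains every constant), so the inner infimum is over a nonempty family and lives in $\gt$, while the outer $\wedge$ is the ordinary binary meet with $\mathcal{P}\in\gt$.

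With this in hand the equivalence unwinds directly. Applying the infimum characterization to $\mathcal{Q}$ gives that $t \conq t'$ holds exactly when $t \conp t'$ and, for every $\beta \in \overline{\mathcal{T}}(y)$, $t \cong_{\trans} t'$. By Observation~\ref{obs:trans}, for each fixed $\beta$ the condition $t \cong_{\trans} t'$ is equivalent to $t[y\leftarrow\beta] \conp t'[y\leftarrow\beta]$. Hence $t \conq t'$ holds precisely when $t\conp t'$ and $t[y\leftarrow\beta]\conp t'[y\leftarrow\beta]$ for every $\beta\in\overline{\mathcal{T}}(y)$. Since $\overline{\mathcal{T}}(y) = \mathcal{T}\setminus\mathcal{T}(y)$, these are exactly the two conditions of Definition~\ref{def:non-det-trans}, so $t\conq t'$ if and only if $t\nconfp t'$.

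The only step that requires genuine care, rather than pure unwinding of definitions, is the infinitary infimum characterization. The binary case is transparent from the explicit intersection in Definition~\ref{def:confluence}, but the infimum over the infinite family $\{\trans : \beta\in\overline{\mathcal{T}}(y)\}$ is produced abstractly by Lemma~\ref{lem:subsetInfimum} (via the axiom of choice); so I would want to confirm that the concrete common-refinement relation really is the \emph{greatest} lower bound, and a congruence, rather than merely some lower bound. Once that is pinned down, everything else is a mechanical application of Observation~\ref{obs:trans} together with the identity $\overline{\mathcal{T}}(y)=\mathcal{T}\setminus\mathcal{T}(y)$.
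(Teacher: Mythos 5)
Your proposal is correct and follows essentially the same route as the paper's proof: both reduce the identity to Definition~\ref{def:non-det-trans} via Observation~\ref{obs:trans} and rely on Lemma~\ref{lem:subsetInfimum} for the infinitary meet, with the paper phrasing the argument as mutual refinement ($\ntrans\preceq\mathcal{Q}$ from the greatest-lower-bound property, and $\mathcal{Q}\preceq\ntrans$ by unwinding the meet) while you run the same content as a single chain of equivalences on the induced relations. The one step you flag for care --- that congruence in $\bigwedge_{i}\mathcal{P}_i$ means congruence in every $\mathcal{P}_i$, which the paper's converse direction invokes silently under ``by definition of $\mathcal{Q}$'' --- is exactly the common-refinement construction $B_{(i_r)}=\bigcap_{r}A_{r,i_r}$ in the proof of Lemma~\ref{lem:subsetInfimum}, so it is already settled there.
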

Next, we extend the definition of non-deterministic assignment transfer functions to the complete lattice $(\gtbar,\preceq,\bot,\top)$. 
\begin{definition}\label{def:ext-ntransfun}
Let $y\in X$ and $\mathcal{P}\in \gtbar$. The extended transfer function 
$\ntransbar: \gtbar \mapsto \gtbar$ transforms $\mathcal{P}$ to $\ntransbar\in \gtbar$ given by the following:  
\begin{itemize}
 \item If $\mathcal{P}\in \gt$, $\ntransbar=\ntrans$.
 \item $\overline{f}_{_{y=*}}(\top)=\top$.
\end{itemize}
\end{definition}  
The following theorem involves use of the axiom of choice.  We will write $\ntrans$ instead of $\ntransbar$ to simplify notation.  
\begin{theorem}[Continuity]\label{thm:ntrans-cts}
 For any $\emptyset \neq S \subseteq \gtbar$, ${f}_{_{y=*}}(\bigwedge S)=\bigwedge {f}_{_{y=*}}(S)$, where\\  ${f}_{_{y=*}}(S)=\{f_{_{y=*}}(s): s\in S\}$.
\end{theorem}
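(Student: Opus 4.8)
The plan is to reduce the non-deterministic case to the already-established continuity of ordinary transfer functions, using the characterization in Theorem~\ref{thm:ntrans-char}. Write $f = f_{_{y=*}}$ and, for each $\beta \in \overline{\mathcal{T}}(y)$, let $g_\beta = f_{_{y=\beta}}$ denote the ordinary (deterministic) transfer function. First I would dispose of the top element: if $S = \{\top\}$ both sides equal $\top$, and if $S \ne \{\top\}$ then, since $\top \wedge \mathcal{P} = \mathcal{P}$ for all $\mathcal{P}$ (Definition~\ref{def:PBar}) and $f(\top)=\top$, deleting $\top$ from $S$ changes neither $\bigwedge S$ nor $\bigwedge f(S)$. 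Hence I may assume $\emptyset \ne S \subseteq \gt$. By Lemma~\ref{lem:subsetInfimum} the congruence $\mathcal{Q} = \bigwedge S$ then lies in $\gt$ (a nonempty family of proper congruences has a proper congruence as its greatest lower bound, since $\top$ is a lower bound of no proper congruence), so Theorem~\ref{thm:ntrans-char} applies to $\mathcal{Q}$ and to every $\mathcal{P} \in S$.

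Next I would expand both sides using Theorem~\ref{thm:ntrans-char}. On the left,
\[
f(\mathcal{Q}) = \mathcal{Q} \wedge \bigwedge_{\beta \in \overline{\mathcal{T}}(y)} g_\beta(\mathcal{Q}).
\]
Since $\mathcal{Q} = \bigwedge S$, continuity of the deterministic transfer functions (Theorem~\ref{thm:continuous}) gives $g_\beta(\mathcal{Q}) = \bigwedge_{\mathcal{P}\in S} g_\beta(\mathcal{P})$ for each $\beta$, whence
\[
f\big(\textstyle\bigwedge S\big) = \Big( \bigwedge_{\mathcal{P}\in S}\mathcal{P} \Big) \wedge \bigwedge_{\beta} \bigwedge_{\mathcal{P}\in S} g_\beta(\mathcal{P}).
\]
On the right, applying Theorem~\ref{thm:ntrans-char} inside the meet gives
\[
\bigwedge_{\mathcal{P}\in S} f(\mathcal{P}) = \bigwedge_{\mathcal{P}\in S}\Big( \mathcal{P} \wedge \bigwedge_{\beta} g_\beta(\mathcal{P}) \Big).
\]
It then remains to show that these last two expressions agree.

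The heart of the argument is therefore a reorganization of iterated infima in the complete lattice $(\gtbar,\preceq,\bot,\top)$ (Theorem~\ref{thm:complete}): both displays equal $\bigwedge T$, where $T = \{\mathcal{P} : \mathcal{P}\in S\} \cup \{ g_\beta(\mathcal{P}) : \beta\in \overline{\mathcal{T}}(y),\ \mathcal{P}\in S\}$ is the single (double-indexed) family of all congruences being combined. This is the elementary associativity law for the meet of an arbitrary family, namely that the infimum of a union of subfamilies equals the infimum of their infima; covering $T$ by the blocks $\{\mathcal{P}:\mathcal{P}\in S\}$ and $\{g_\beta(\mathcal{P}):\mathcal{P}\in S\}$ (one per $\beta$) yields the left display, while covering $T$ by the blocks $\{\mathcal{P}\}\cup\{g_\beta(\mathcal{P}):\beta\}$ (one per $\mathcal{P}$) yields the right display. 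I expect this bookkeeping to be the main obstacle, not because it is deep but because it must be stated carefully for an arbitrary (possibly infinite) index set and because one must track the edge cases --- in particular when $\overline{\mathcal{T}}(y) = \emptyset$, where every $\beta$-block is absent, $\bigwedge_{\beta}$ evaluates to $\top$, and both sides collapse to $\bigwedge S$, consistent with $f$ being the identity there.

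As a self-contained alternative, I would note a direct verification at the level of relations, should the reorganization-of-infima lemma not be available in exactly the form needed. Using the characterization of $\bigwedge S$ as the intersection of the underlying equivalences (from Lemma~\ref{lem:subsetInfimum}) together with Definition~\ref{def:non-det-trans}, the assertion $t \cong_{f(\bigwedge S)} t'$ unfolds to ``$t \conp t'$ and $t[y\leftarrow\beta]\conp t'[y\leftarrow\beta]$ for every $\mathcal{P}\in S$ and every $\beta\in\overline{\mathcal{T}}(y)$'', while $t \cong_{\bigwedge f(S)} t'$ unfolds to the same conjunction with the quantifiers over $\mathcal{P}$ and over $\beta$ merely reordered. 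The two relations therefore coincide, and equality of the corresponding congruences is immediate.
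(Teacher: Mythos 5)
Your proposal is correct and takes essentially the same route as the paper's proof: dispose of $\top$, expand $f_{_{y=*}}$ on both sides via Theorem~\ref{thm:ntrans-char}, apply continuity of the deterministic transfer functions (Theorem~\ref{thm:continuous}), and finish by reorganizing iterated infima. Your explicit appeal to the infimum-of-a-union law (Lemma~\ref{Alem:meet-union}) merely makes precise what the paper labels ``by properties of $\bigwedge$'', and your edge case $\overline{\mathcal{T}}(y)=\emptyset$ is vacuous since every constant belongs to $\overline{\mathcal{T}}(y)$.
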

In the following, we derive a characterization for non-deterministic assignment that does not depend on the axiom of choice.   Condition (3) of the definition of congruence
(Definition~\ref{def:cong}) is necessary to derive this characterization.  We first note a lemma which states that if the equivalence 
between two terms $t,t'$ is preserved under substitution of $y$ with any two distinct constants chosen arbitrarily, 
then the equivalence between the two terms will be preserved under substitution of $y$  with any other term $\beta$ in which $y$ does not appear.   
\begin{lemma}\label{lem:nondet-constant}
 Let $\mathcal{P}\in \gt$.  Let $t,t'\in \mathcal{T}$ and $c_1,c_2\in C$ with $c_1\neq c_2$.  Then 
 $t[y\leftarrow c_1]\conp t'[y\leftarrow c_1]$ and $t[y\leftarrow c_2]\conp t'[y\leftarrow c_2]$ if and only if  
 $t[y\leftarrow \beta]\conp t'[y\leftarrow \beta]$ for all $\beta\in \overline{\mathcal{T}}(y)$.  
\end{lemma}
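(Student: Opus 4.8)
The reverse implication is immediate: since $c_1,c_2\in C$ contain no variable, they both lie in $\overline{\mathcal{T}}(y)$, so if $t[y\leftarrow\beta]\conp t'[y\leftarrow\beta]$ holds for \emph{all} $\beta\in\overline{\mathcal{T}}(y)$ it holds in particular for $\beta=c_1$ and $\beta=c_2$. The whole content of the lemma is therefore the forward implication, and the plan is to prove it by structural induction on the pair $(t,t')$, after first isolating one key auxiliary fact.

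The auxiliary fact I would establish first is: \emph{if $y$ occurs in a term $s$ (that is, $s\in\mathcal{T}(y)$), then $s[y\leftarrow c_1]\ncong_{\mathcal{P}} s[y\leftarrow c_2]$ whenever $c_1\neq c_2$.} This is proved by a short structural induction on $s$. If $s=y$, the two substitutions give $c_1$ and $c_2$, which are incongruent by condition (1) of Definition~\ref{def:cong}. A constant or a variable other than $y$ does not contain $y$, so the only remaining case is $s=s_1+s_2$ with $y$ occurring in $s_1$ or $s_2$, say in $s_1$; the induction hypothesis gives $s_1[y\leftarrow c_1]\ncong s_1[y\leftarrow c_2]$, and then the operator-respecting condition (2) of Definition~\ref{def:cong} forces $s[y\leftarrow c_1]\ncong s[y\leftarrow c_2]$. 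Informally, this says that substituting two distinct constants for $y$ can never identify the results once $y$ is genuinely present, and it is exactly what makes the mixed cases of the main induction vacuous.

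For the forward implication I would induct on the total size $|t|+|t'|$. If $t$ and $t'$ are both sums, $t=t_1+t_2$ and $t'=t_1'+t_2'$, then condition (2) splits each hypothesis $t[y\leftarrow c_i]\conp t'[y\leftarrow c_i]$ into $t_1[y\leftarrow c_i]\conp t_1'[y\leftarrow c_i]$ and $t_2[y\leftarrow c_i]\conp t_2'[y\leftarrow c_i]$ for $i=1,2$; applying the induction hypothesis to $(t_1,t_1')$ and to $(t_2,t_2')$ yields the corresponding $\beta$-statements, which recombine via condition (2). Since the statement is symmetric in $t$ and $t'$, the remaining case is that one of them, say $t$, is an atom. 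If $t\neq y$ (so $y\notin t$ and $t[y\leftarrow\beta]=t$): when $y\notin t'$ the substitutions are trivial and the hypothesis already gives $t\conp t'$; when $y\in t'$, transitivity gives $t'[y\leftarrow c_1]\conp t'[y\leftarrow c_2]$, contradicting the auxiliary fact, so this case is vacuous. If $t=y$: when $t'=y$ the conclusion $\beta\conp\beta$ is trivial; otherwise, if $y\notin t'$ the two hypotheses give $c_1\conp t'$ and $c_2\conp t'$, hence $c_1\conp c_2$, a contradiction, while if $y\in t'$ (so $t'\neq y$) Observation~\ref{obs:const2} gives $c_1\ncong t'[y\leftarrow c_1]$, again contradicting the hypothesis.

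The only real obstacle is the mixed case in which one side is a variable $z\neq y$ and the other is a compound term containing $y$: here condition (3) of Definition~\ref{def:cong} does not preclude a variable from being congruent to a sum, so one genuinely needs the auxiliary fact to see that the two hypotheses cannot simultaneously hold. Everything else is either a direct application of the operator-respecting condition (2) or an immediate consequence of the incongruence of distinct constants.
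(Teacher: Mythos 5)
Your proof is correct and takes essentially the same route as the paper's: your auxiliary fact is precisely the paper's Lemma~\ref{lem:const1}, proved there by the same structural induction via conditions (1) and (2) of Definition~\ref{def:cong}, and your main argument mirrors the paper's case analysis --- the mixed case (one side free of $y$, the other containing it) is killed by that auxiliary fact, the $t=y$ case by Observation~\ref{obs:const2}, and the compound case by splitting and recombining with condition (2). The only differences are organizational (explicit induction on $|t|+|t'|$ with a finer atom/compound split, versus the paper's reduction to $t,t'\in\mathcal{T}(y)$ before inducting), not mathematical.
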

The above observation leads to a characterization of non-deterministic assignment that does not involve the axiom of choice.   
\begin{theorem}\label{thm:nondet-const}
Let $\mathcal{P}\in \gt$ and let $c_1,c_2\in C$, $c_1\neq c_2$.  Then, for any $y\in X$,\\ 
$\ntrans=\mathcal{P}\wedge f_{_{y\leftarrow c_1}}(\mathcal{P}) \wedge f_{_{y\leftarrow c_2}}(\mathcal{P})$
\end{theorem}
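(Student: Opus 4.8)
The plan is to prove the stated identity at the level of equivalence relations, by unfolding each operation on the right hand side into a condition on pairs of terms and then matching the result against Definition~\ref{def:non-det-trans}. Both sides are already known to be congruences---the left by Theorem~\ref{thm:non-det-trans} and the right by Theorem~\ref{thm:confluence} (applied twice, noting that confluence is associative because set intersection is)---so two partitions are equal precisely when they induce the same equivalence relation. Hence it suffices to fix arbitrary $t,t'\in\mathcal{T}$ and establish a chain of ``if and only if'' statements showing that $t\nconfp t'$ holds exactly when $t$ and $t'$ lie in a common class of $\mathcal{P}\wedge f_{_{y\leftarrow c_1}}(\mathcal{P})\wedge f_{_{y\leftarrow c_2}}(\mathcal{P})$.

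First I would unfold the confluence. By Definition~\ref{def:confluence} the classes of a confluence are the non-empty intersections of classes of its arguments, so $t$ and $t'$ belong to a common class of $\mathcal{P}\wedge f_{_{y\leftarrow c_1}}(\mathcal{P})\wedge f_{_{y\leftarrow c_2}}(\mathcal{P})$ exactly when $t\conp t'$, $t\concone t'$, and $t\conctwo t'$ all hold simultaneously. Since constants contain no occurrence of $y$, we have $c_1,c_2\in\overline{\mathcal{T}}(y)$, so the transfer functions $f_{_{y=c_1}}$ and $f_{_{y=c_2}}$ are well defined and Observation~\ref{obs:trans} applies: it rewrites the last two conditions as $t[y\leftarrow c_1]\conp t'[y\leftarrow c_1]$ and $t[y\leftarrow c_2]\conp t'[y\leftarrow c_2]$ respectively.

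The key step is then to invoke Lemma~\ref{lem:nondet-constant}. Because $c_1\neq c_2$, that lemma allows me to replace the pair of constant-substitution conditions by the single universally quantified condition that $t[y\leftarrow\beta]\conp t'[y\leftarrow\beta]$ for every $\beta\in\overline{\mathcal{T}}(y)$. Combined with the surviving condition $t\conp t'$, this is verbatim the two requirements of Definition~\ref{def:non-det-trans}, that is, precisely the condition characterizing $t\nconfp t'$. Reading the resulting chain of equivalences in both directions then yields the equality of the two congruences.

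I do not expect a genuine obstacle here: the essential mathematical content---the reduction from substitution by all of $\overline{\mathcal{T}}(y)$ down to just two distinct constants---has already been absorbed into Lemma~\ref{lem:nondet-constant}, which in turn rests on condition~(3) of the definition of congruence. The only points that demand care are the bookkeeping identifying confluence with simultaneous equivalence under both arguments, the verification that $c_1,c_2\in\overline{\mathcal{T}}(y)$ so that Observation~\ref{obs:trans} is legitimately applicable, and ensuring the hypothesis $c_1\neq c_2$ is threaded through to exactly the point where Lemma~\ref{lem:nondet-constant} consumes it.
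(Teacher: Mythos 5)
Your proposal is correct and follows essentially the same route as the paper's proof: unfold the triple confluence into simultaneous equivalence conditions, rewrite the two constant-substitution conditions via Observation~\ref{obs:trans}, and apply Lemma~\ref{lem:nondet-constant} to pass to the universally quantified condition. The only (harmless) difference is that you terminate the chain directly at Definition~\ref{def:non-det-trans}, whereas the paper continues one step further and identifies the result with $\mathcal{P}\wedge\bigl(\bigwedge_{\beta\in\overline{\mathcal{T}}(y)}\trans\bigr)$; your shortcut is, if anything, slightly cleaner since it avoids any appeal to Theorem~\ref{thm:ntrans-char}.
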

It follows from the above theorem that non-deterministic assignments can be characterized in terms of just three 
congruences (instead of dealing with infinitely many as in Theorem~\ref{thm:ntrans-char}). 
\section{Data Flow Analysis Frameworks}\label{sec:data-flow-frameworks}
We next formalize the notion of a data flow framework and apply the formalism developed above to characterize Herbrand equivalence at each point in 
a program.   
\begin{definition}\label{def:control-flow-graph}
 A control flow graph $G(V,E)$ is a directed graph over the vertex set $V=\{1,2,\ldots,n\}$ for some  $n\geq 1$ satisfying the following properties:  
 \begin{itemize}
  \item $1\in V$ is called the entry point and has no predecessors.
  \item Every vertex $i\in V$, $i\neq 1$ is reachable from $1$ and has at least one predecessor and at most two predecessors.
  \item Vertices with two predecessors are called confluence points.
  \item Vertices with a single predecessor are called (transfer) function points.
 \end{itemize}
\end{definition}
\begin{definition}\label{def:data-flow-framework}
A data flow framework over $\mathcal{T}$ is a pair $D=(G,F)$, where  $G(V,E)$ is a control flow graph on the vertex set $V=\{1,2,\ldots, n\}$ and 
 $F$ is a collection of transfer functions over $\gt$ such that for each function point $i\in V$, there is an associated transfer function
$h_i\in F$, and $F=\{h_i: i\in V \text{ is a transfer function point}\}$.   
\end{definition}
\vspace{0.1cm}
Data flow frameworks can be used to represent programs.  An example is given in Figure~\ref{fig:programpoints}.
\begin{remark}
In the sections that follow, for any $h_i\in F$, we will simply write $h_i$ to actually denote the extended transfer function $\overline{h}_i$ 
(see Definition~\ref{def:ext-transfun} and Definition~\ref{def:ext-ntransfun}) without further explanation.  
\end{remark}
\section{Herbrand Equivalence}\label{sec:Herbrand}
Let $D=(G,F)$ be a data flow framework over $\mathcal{T}$.  In the following, we will define the Herbrand Congruence function $H_D : V(G)\mapsto \gtbar$.  
For each vertex $i\in V(G)$, the congruence $H_D(i)$ will be called
the \emph{Herbrand Congruence} associated with the vertex $i$ of the data flow framework $D$.  The function $H_D$ will be defined as the maximum fix-point of a continuous
function $\fd:\gtbar^{n}\mapsto \gtbar^{n}$. The function $\fd$ will be called the \emph{composite transfer function} associated with the data flow
framework $D$. 
\begin{definition}[Product Lattice]\label{def:Prod_lattice}  
Let $n$ a positive integer.  The product lattice, \\$(\gtbar^{n},\preceq_n,\bot_n,\top_n)$ is defined as follows: 
for $\overline{\mathcal{P}}=(\mathcal{P}_1, \mathcal{P}_2,\ldots, \mathcal{P}_n)$, 
$\overline{\mathcal{Q}}=(\mathcal{Q}_1, \mathcal{Q}_2,\ldots, \mathcal{Q}_n) \in \gtbar^{n}$, $\overline{\mathcal{P}}\preceq_n \overline{\mathcal{Q}}$ 
if $\mathcal{P}_i\preceq \mathcal{Q}_i$ for each $1\leq i\leq n$, 
$\bot_n=(\bot, \bot, \ldots,\bot)$ and $\top_n=(\top,\top, \ldots, \top)$.
\end{definition}
For $S\subset \gtbar^{n}$, the notation $\bigwedge_n S$ will be used to denote the least upper bound of $S$ in the product lattice.  

By Theorem~\ref{thm:complete}, Theorem~\ref{Athm:prod_complete} and Corollary~\ref{Acor:prod_complete}, we have:
\begin{theorem}\label{thm:prod_complete}
The product lattice satisfies the following properties:  
\begin{enumerate}
 \item $(\gtbar^{n},\preceq_n,\bot_n,\top_n)$ is a complete lattice.
 \item If $\tilde{S}\subseteq \gtbar^{n}$ is non-empty, with $\tilde{S}=S_1\times S_2\times \cdots \times S_n$,  where $S_i\subseteq \gtbar$ for $1\leq i\leq n$.  
       Then $\bigwedge_n \tilde{S}=(\bigwedge S_1,\bigwedge S_2, \ldots, \bigwedge S_n)$.
\end{enumerate}
\end{theorem}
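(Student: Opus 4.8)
The plan is to prove both parts simultaneously by directly exhibiting, for an arbitrary subset of $\gtbar^{n}$, its greatest lower bound constructed coordinatewise, and then to invoke the characterization (Theorem~\ref{Athm:complete}) that a meet semi-lattice in which every subset has a greatest lower bound is a complete lattice. Before that I would record the routine fact that $(\gtbar^{n},\preceq_n)$ is a partial order: reflexivity, antisymmetry and transitivity of $\preceq_n$ all follow coordinatewise from the corresponding properties of $(\gtbar,\preceq)$, which is a partial order because it is a complete lattice by Theorem~\ref{thm:complete}.

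For the core step, let $T\subseteq \gtbar^{n}$ be arbitrary (possibly empty, possibly containing $\top_n$). For each coordinate $1\le i\le n$, let $\pi_i(T)=\{\,\mathcal{P}_i : (\mathcal{P}_1,\ldots,\mathcal{P}_n)\in T\,\}\subseteq \gtbar$ denote the projection onto the $i$-th factor. Since $(\gtbar,\preceq,\bot,\top)$ is a complete lattice by Theorem~\ref{thm:complete}, the infimum $\bigwedge \pi_i(T)$ exists in $\gtbar$ for every $i$ (including the degenerate case $\pi_i(T)=\emptyset$, whose infimum is $\top$ by Definition~\ref{def:PBar}). Setting $\overline{\mathcal{R}}=(\bigwedge \pi_1(T),\ldots,\bigwedge \pi_n(T))$, I would verify $\overline{\mathcal{R}}=\bigwedge_n T$ via the two standard inequalities. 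First, $\overline{\mathcal{R}}$ is a lower bound: for any $\overline{\mathcal{P}}\in T$ we have $\mathcal{P}_i\in \pi_i(T)$, hence $\bigwedge \pi_i(T)\preceq \mathcal{P}_i$ in each coordinate, so $\overline{\mathcal{R}}\preceq_n \overline{\mathcal{P}}$. Second, $\overline{\mathcal{R}}$ is the greatest lower bound: any lower bound $\overline{\mathcal{Q}}$ of $T$ satisfies $\mathcal{Q}_i\preceq \mathcal{P}_i$ for every $\overline{\mathcal{P}}\in T$, so $\mathcal{Q}_i$ is a lower bound of $\pi_i(T)$ and therefore $\mathcal{Q}_i\preceq \bigwedge \pi_i(T)$ in each coordinate, giving $\overline{\mathcal{Q}}\preceq_n \overline{\mathcal{R}}$. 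Since $T$ was arbitrary, every subset of $\gtbar^{n}$ (in particular every pair) has a greatest lower bound, so $(\gtbar^{n},\preceq_n)$ is a meet semi-lattice in which every subset has a greatest lower bound, and Theorem~\ref{Athm:complete} then yields part (1). That $\bot_n$ and $\top_n$ are respectively the least and greatest elements is immediate coordinatewise.

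For part (2), I would specialise the coordinatewise formula to a product set $\tilde{S}=S_1\times\cdots\times S_n$. Because $\tilde{S}$ is non-empty, each factor $S_i$ is non-empty, and the projection recovers the factor exactly, that is $\pi_i(\tilde{S})=S_i$ for every $i$. Substituting this into the expression for $\overline{\mathcal{R}}$ obtained above gives $\bigwedge_n \tilde{S}=(\bigwedge S_1,\ldots,\bigwedge S_n)$, as claimed.

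I expect no serious obstacle here, since the argument is the standard ``product of complete lattices is complete'' construction and the substantive content, namely the existence of coordinatewise infima, is already furnished by Theorem~\ref{thm:complete}. The only points that demand care are the edge cases: the empty subset and coordinates whose projection is empty, both handled by the convention $\bigwedge\emptyset=\top$ from Definition~\ref{def:PBar}; and the observation that $\pi_i$ applied to a genuine product set returns the full factor $S_i$. This last fact generally fails for an arbitrary $T\subseteq \gtbar^{n}$, which is precisely why part (2) is stated only for product sets.
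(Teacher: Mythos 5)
Your proof is correct and follows the same skeleton as the paper's: coordinatewise infima inherited from the completeness of $\gtbar$ (Theorem~\ref{thm:complete}), followed by the criterion that a meet-complete lattice is complete (Theorem~\ref{Athm:complete}), with part (2) dropping out of the coordinatewise formula exactly as in Theorem~\ref{Athm:prod_complete} and Corollary~\ref{Acor:prod_complete}. There is, however, one genuine difference worth noting, and it is in your favour. The paper's appendix proof of Theorem~\ref{Athm:prod_complete}, after dispensing with $\tilde{S}=\emptyset$, simply writes ``Let $\tilde{S}=S_1\times S_2\times \cdots \times S_n$'' and verifies the infimum formula only for Cartesian-product subsets, silently treating an arbitrary subset of $L^{n}$ as if it were a product --- which it need not be. Your version takes an arbitrary $T\subseteq \gtbar^{n}$ and works with the projections $\pi_i(T)$; since the two inequalities you check (lower bound, and greatest among lower bounds) never use that $T$ is a product, your argument establishes the existence of $\bigwedge_n T$ for \emph{every} subset, closing the small gap in the paper's formulation. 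Your closing remark is also the correct diagnosis of why part (2) is restricted to product sets: for non-product $T$ one only gets $\bigwedge_n T=\left(\bigwedge \pi_1(T),\ldots,\bigwedge \pi_n(T)\right)$, and $\pi_i(T)$ need not coincide with any prescribed factor; whereas for a nonempty product $\tilde{S}$ one has $\pi_i(\tilde{S})=S_i$ with each $S_i$ nonempty, recovering the stated formula. The edge cases (empty subset, empty projections) are correctly handled by the convention $\bigwedge\emptyset=\top$ from Definition~\ref{def:PBar}, so no repair is needed anywhere.
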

As preparation for defining the composite transfer function, we introduce the following functions: 
\begin{definition}[Projection Maps]\label{def:proj_map} Let $n$ be a positive integer. For each $i\in \{1,2,\ldots,n\}$,
\begin{itemize}
 \item The projection map to the $i^{th}$ co-ordinate, 
       $\pi_i:\gtbar^{n}\mapsto \gtbar$ is defined by \\$\pi_i(\mathcal{P}_1,\mathcal{P}_2,\ldots,\mathcal{P}_n)=\mathcal{P}_i$ for any 
       $(\mathcal{P}_1,\mathcal{P}_2,\ldots,\mathcal{P}_n)\in \gtbar^{n}$.  
 \item The confluence map $\pi_{i,j}:\gtbar^{n}\mapsto \gtbar$ is defined by\\
       $\pi_{i,j}(\mathcal{P}_1,\mathcal{P}_2,\ldots, \mathcal{P}_n)=\mathcal{P}_i\wedge \mathcal{P}_j$ for any $(\mathcal{P}_1,\mathcal{P}_2,\ldots, \mathcal{P}_n)\in \gtbar^{n}$.
\end{itemize}
\end{definition}
In addition to the above functions, we will also use the constant map which maps each element in $\gtbar^{n}$ to $\bot$.  
The following observation is a consequence of Theorem~\ref{Athm:conf_cts}.
\begin{obs}\label{obs:conf_cts}
Constant maps, projection maps and confluence maps  are continuous.
\end{obs}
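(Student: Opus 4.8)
The plan is to check the continuity condition of Definition~\ref{def:cts} for each of the three families of maps in turn, the domain being the product lattice $\gtbar^{n}$ of Definition~\ref{def:Prod_lattice} and the codomain being $\gtbar$. The single structural fact I would use throughout is that infima in the product lattice are taken coordinatewise: for every non-empty $S\subseteq \gtbar^{n}$,
\[
\bigwedge\nolimits_{n} S=\Bigl(\bigwedge \pi_1(S),\;\bigwedge \pi_2(S),\;\ldots,\;\bigwedge \pi_n(S)\Bigr).
\]
This holds because a tuple is a lower bound of $S$ under $\preceq_n$ precisely when each of its coordinates is a lower bound of the corresponding projection $\pi_i(S)$; the greatest such tuple therefore has $i$-th coordinate $\bigwedge \pi_i(S)$. (For product sets this is also recorded in Theorem~\ref{thm:prod_complete}.)

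For the constant map sending every tuple to $\bot$, both sides of the continuity equation equal $\bot$: the left side because the map is constant, and the right side because the image of any non-empty $S$ is the singleton $\{\bot\}$, whose infimum is $\bot$. This is exactly the situation the remark after Definition~\ref{def:cts} anticipates, and it also explains why the empty set must be excluded, since $\bigwedge_{n}\emptyset=\top_n$ would be sent to $\bot$ while $\bigwedge\emptyset=\top$. For the projection map $\pi_i$, the coordinatewise formula gives $\pi_i(\bigwedge_{n} S)=\bigwedge \pi_i(S)$ directly, which is the continuity condition verbatim; nothing further is required.

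The confluence map $\pi_{i,j}$ is the only case requiring more than the product structure. Combining the coordinatewise formula with the definition of $\pi_{i,j}$ (Definition~\ref{def:proj_map}) yields
\[
\pi_{i,j}\!\left(\bigwedge\nolimits_{n} S\right)=\Bigl(\bigwedge \pi_i(S)\Bigr)\wedge\Bigl(\bigwedge \pi_j(S)\Bigr),
\]
whereas the quantity to be matched is $\bigwedge \pi_{i,j}(S)=\bigwedge\{\pi_i(\overline{\mathcal{P}})\wedge \pi_j(\overline{\mathcal{P}}):\overline{\mathcal{P}}\in S\}$. Proving that these two expressions coincide is precisely the assertion that the binary confluence operation on $\gtbar$ commutes with arbitrary infima, which is supplied by Theorem~\ref{Athm:conf_cts}. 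I expect this to be the sole substantive step of the argument: one inclusion follows because $\bigwedge_{n} S$ is a lower bound of $S$ and $\wedge$ is monotone, while the reverse inclusion follows by bounding $\bigwedge\{\pi_i(\overline{\mathcal{P}})\wedge \pi_j(\overline{\mathcal{P}})\}$ above by each of $\bigwedge \pi_i(S)$ and $\bigwedge \pi_j(S)$ separately and then meeting. Once Theorem~\ref{Athm:conf_cts} is invoked the equality is immediate, and the three cases together establish the observation.
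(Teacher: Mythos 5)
Your proposal is correct and takes essentially the same approach as the paper: the observation rests on coordinatewise infima in the product lattice plus the interchange of the binary meet with arbitrary infima, which the paper packages as Theorem~\ref{Athm:conf_cts} (proved via Corollary~\ref{Acor:prod_complete} and Lemma~\ref{Alem:meet-lemma}), and your two-inequality argument is exactly the proof of that lemma. If anything your version is marginally more careful than the appendix, which assumes $\tilde{S}$ is a product set $S_1\times S_2\times\cdots\times S_n$ and applies the lemma to the full product $\{s_i\wedge s_j : s_i\in S_i,\, s_j\in S_j\}$, whereas you handle an arbitrary non-empty $S\subseteq\gtbar^{n}$ and the diagonal set $\{\pi_i(\overline{\mathcal{P}})\wedge\pi_j(\overline{\mathcal{P}}):\overline{\mathcal{P}}\in S\}$ directly, and you also spell out the constant-map case, which the paper leaves implicit.
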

For each $k\in V(G)$, $\pred(k)$ denotes the set of predecessors of the vertex $k$ in the control flow graph $G$.
\begin{definition}[Composite Transfer Function]\label{def:comp-trans-fun}
Let $D=(G,F)$ be a data flow framework over $\mathcal{T}$. For each $k\in V(G)$, define the component map $f_k:\gtbar^{n}\mapsto \gtbar$ as follows:  
\begin{enumerate}
 \item If $k=1$, the entry point, then $f_k=\bot$.  ($f_1$ is the constant function that always returns the value $\bot$). 
 \item If $k$ is a function point with $\pred(k)=\{j\}$,  then $f_k=h_k\circ \pi_j$, where $h_k$ is the (extended) transfer function corresponding to the function point $k$, 
       and $\pi_k$ the projection map to the  $k^{th}$ coordinate as defined in Definition~\ref{def:proj_map}. 
 \item If $k$ is a confluence point with $\pred(k)=\{i,j\}$, then $f_k=\pi_{i,j}$, where $\pi_{i,j}$ is the confluence map as defined in Definition~\ref{def:proj_map}.  
\end{enumerate}
The composite transfer function of $D$ is defined to be the unique function (Observation~\ref{Aobs:component}) $\fd$ satisfying $\pi_k\circ \fd=f_k$ for each $k\in V(G)$.   
\end{definition}
The purpose of defining $\fd$ is the following.  
Suppose we have associated a congruence with each program point in a data flow framework. Then $\fd$ specifies 
how a simultaneous and synchronous application of all the transfer functions/confluence operations at 
the respective program points modifies the congruences at each program point.  The definition of $\fd$ conservatively 
sets the confluence at the entry point to $\bot$, treating terms in $\gt$ to be inequivalent to each other at the entry point. 
See Figure~\ref{fig:programpoints} for an example. The following observation is a direct consequence of the above definition.
\begin{obs}\label{obs:comp-trans}
The composite transfer function $\fd$ (Definition~\ref{def:comp-trans-fun}) satisfies the following properties:
\begin{enumerate}
 \item If $k=1$, the entry point, then $\pi_k\circ \fd =\bot$.  
 \item If $k$ is a function point with $\pred(k)=\{j\}$, then $f_k=\pi_k\circ \fd = h_k\circ \pi_j$,   where $h_k$ is the (extended) transfer function corresponding to the function point $k$.  
 \item If $k$ is a confluence point with $\pred(k)=\{i,j\}$, then $f_k=\pi_k \circ \fd = \pi_{i,j}$.  
\end{enumerate}
\end{obs}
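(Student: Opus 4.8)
The plan is to obtain all three items by directly unwinding the defining identity of the composite transfer function. By Definition~\ref{def:comp-trans-fun}, the map $\fd$ is characterized as the unique function into the product lattice satisfying $\pi_k \circ \fd = f_k$ for every $k \in V(G)$; its existence and uniqueness are supplied by Observation~\ref{Aobs:component}, which is the universal property of the product $\gtbar^{n}$. Granting this, each claimed equality is nothing more than the identity $\pi_k \circ \fd = f_k$ with $f_k$ replaced by its definition in the relevant case.

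Concretely, I would split on the type of the vertex $k$ exactly as in Definition~\ref{def:comp-trans-fun}. For $k = 1$ the component map is the constant map $f_1 = \bot$, which gives $\pi_1 \circ \fd = \bot$ and hence item (1). For a function point $k$ with $\pred(k) = \{j\}$ the definition reads $f_k = h_k \circ \pi_j$, so $\pi_k \circ \fd = h_k \circ \pi_j$, which is item (2). For a confluence point $k$ with $\pred(k) = \{i,j\}$ the definition reads $f_k = \pi_{i,j}$, so $\pi_k \circ \fd = \pi_{i,j}$, which is item (3). In each case the equality closes after a single definitional substitution, so I would not belabour the computation.

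Because the statement is essentially a transcription of Definition~\ref{def:comp-trans-fun} through the identity $\pi_k \circ \fd = f_k$, I expect no genuine obstacle: all the substantive work has already been carried out in establishing that such a $\fd$ exists and is unique (Observation~\ref{Aobs:component}). The present observation merely records the per-coordinate behaviour of $\fd$ in the form that will be convenient for the subsequent fix-point analysis, and its three parts are immediate once the projection identity is invoked.
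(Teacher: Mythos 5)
Your proposal is correct and matches the paper exactly: the paper presents this observation without a separate proof, calling it ``a direct consequence of the above definition,'' which is precisely your argument of unwinding the identity $\pi_k\circ \fd = f_k$ from Definition~\ref{def:comp-trans-fun} (whose existence and uniqueness rest on Observation~\ref{Aobs:component}) and substituting the case-wise definition of $f_k$. Nothing further is needed.
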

The following lemma is a consequence of Observation~\ref{obs:comp-trans}.
\begin{lemma}\label{lem:fd-induction} 
Let $D=(G,F)$ be a data flow framework over $\mathcal{T}$ and  $k\in V(G)$.  Let $S=\{\fd(\top_n), \fd^2(\top_n),\ldots \}$, where $\fd$ is the
composite transfer function of $D$.  
\begin{enumerate}
 \item If $k=1$, the entry point, then $\pi_k\circ \fd^{l}(\top_n)=\bot$ for all $l\geq 1$, hence $\pi_k (\bigwedge_n S) =\top$.  
 \item If $k$ is a function point with $\pred(k)=\{j\}$,  then for all $l\geq 1$, 
 \begin{eqnarray*}
          (\pi_k \circ \fd^{l})(\top_n)&=& (\pi_k\circ \fd)(\fd^{l-1}(\top_n))\\
                                       &=&(h_k\circ \pi_j \circ \fd^{l-1})(\top_n)
 \end{eqnarray*}
 \item If $k$ is a confluence point with $\pred(k)=\{i,j\}$, then for all $l\geq 1$,
\begin{eqnarray*}
 (\pi_k \circ \fd^{l})(\top_n)&=&(\pi_k\circ \fd)(\fd^{l-1}(\top_n)) \\
                              &=& (\pi_{i, j})(\fd^{l-1}(\top_n))\\
                              &=& \left((\pi_i \circ \fd^{l-1})(\top_n)\right)  \wedge \left((\pi_j \circ \fd^{l-1})(\top_n)\right)  
\end{eqnarray*}
\end{enumerate}
\end{lemma}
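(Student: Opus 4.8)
The plan is to prove all three items by the same mechanical device: peel a single application of $\fd$ off the iterate and read off the value of $\pi_k\circ \fd$ from Observation~\ref{obs:comp-trans}. Concretely, for every $l\geq 1$ and every vertex $k$, associativity of function composition together with $\fd^{l}=\fd\circ \fd^{l-1}$ gives
\[
(\pi_k\circ \fd^{l})(\top_n)=(\pi_k\circ \fd)\bigl(\fd^{l-1}(\top_n)\bigr),
\]
so each item reduces to substituting the expression for $\pi_k\circ \fd$ dictated by the type of the vertex $k$.

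First I would dispose of the two ``generic'' cases, which are pure rewriting. For a function point with $\pred(k)=\{j\}$, Observation~\ref{obs:comp-trans}(2) gives $\pi_k\circ \fd=h_k\circ \pi_j$; substituting into the displayed identity immediately yields $(\pi_k\circ \fd^{l})(\top_n)=(h_k\circ \pi_j\circ \fd^{l-1})(\top_n)$, which is item~2. For a confluence point with $\pred(k)=\{i,j\}$, Observation~\ref{obs:comp-trans}(3) gives $\pi_k\circ \fd=\pi_{i,j}$, whence $(\pi_k\circ \fd^{l})(\top_n)=\pi_{i,j}\bigl(\fd^{l-1}(\top_n)\bigr)$; the final line of item~3 is then obtained by unwinding the definition of the confluence map (Definition~\ref{def:proj_map}), i.e.\ $\pi_{i,j}(\mathcal{Q})=\pi_i(\mathcal{Q})\wedge \pi_j(\mathcal{Q})$ evaluated at $\mathcal{Q}=\fd^{l-1}(\top_n)$.

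The entry-point case (item~1) is the only one needing an extra remark. Since $f_1$ is the constant map returning $\bot$, Observation~\ref{obs:comp-trans}(1) makes $\pi_k\circ \fd$ the constant map $\bot$, so $(\pi_k\circ \fd^{l})(\top_n)=\bot$ for every $l\geq 1$, as claimed. To then evaluate the relevant coordinate of the infimum I would argue directly: $\bigwedge_n S$ is a lower bound of $S$ in the product lattice, so $\bigwedge_n S\preceq_n \fd(\top_n)$, and projecting to the first coordinate gives $\pi_k(\bigwedge_n S)\preceq \pi_k(\fd(\top_n))=\bot$; since $\bot$ is the least element of $\gtbar$ (Theorem~\ref{thm:complete}), this forces $\pi_k(\bigwedge_n S)=\bot$, the infimum of the constant sequence. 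I expect no genuine obstacle in this lemma: items~2 and~3 are bookkeeping once the iterate is peeled apart, and the only mildly delicate point is this last inference, which is immediate from the defining property of the greatest lower bound and the minimality of $\bot$, and in particular avoids needing the coordinatewise-infimum formula of Theorem~\ref{thm:prod_complete}(2) (which applies only to product subsets).
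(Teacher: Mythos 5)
Your proof is correct and is essentially the argument the paper intends: the paper gives no separate proof of this lemma, presenting it as a direct consequence of Observation~\ref{obs:comp-trans}, and your device of peeling one application via $\fd^{l}=\fd\circ \fd^{l-1}$ and then substituting the value of $\pi_k\circ \fd$ dictated by the vertex type is exactly that computation; items~2 and~3 need nothing beyond Observation~\ref{obs:comp-trans}(2),(3) and Definition~\ref{def:proj_map}, just as you say. One substantive remark on item~1: your argument concludes $\pi_k(\bigwedge_n S)=\bot$, whereas the lemma as printed asserts $\pi_k(\bigwedge_n S)=\top$. You are right and the printed clause is a typo. Since $\bigwedge_n S\preceq_n \fd(\top_n)$ and $\pi_k(\fd(\top_n))=\bot$, monotonicity of projection and minimality of $\bot$ force $\pi_k(\bigwedge_n S)=\bot$; this is also the only value consistent with the rest of the paper, since Observation~\ref{obs:herbrand} yields $H_D(1)=\top\wedge\bot\wedge\bot\wedge\cdots=\bot$, matching $M_1(1)=\bot$ and $MOP(1)=\bot$ in Section~\ref{sec:MOP}. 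Ideally you should flag the discrepancy explicitly rather than silently proving the corrected claim. Your further point that Theorem~\ref{thm:prod_complete}(2) cannot be invoked verbatim because $S$ need not be a product subset is well taken; your direct lower-bound argument is the clean way around it (alternatively, one can verify that infima in the product lattice are computed coordinatewise for arbitrary subsets, not only product ones, but your route avoids even that).
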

By Theorem~\ref{Athm:prod_cts}, Observation~\ref{Aobs:monotone} and Corollary~\ref{Acor:MFP}, we have:  
\begin{theorem}\label{thm:prop-composite-trans}
The following properties hold for the composite transfer function $\fd$ (Definition~\ref{def:comp-trans-fun}):  
\begin{enumerate}
 \item $\fd$ is monotone, distributive and continuous.  
 \item The component maps $f_k=\pi_k\circ \fd$ are continuous for all $k\in \{1,2,\ldots, n\}$.  
 \item $\fd$ has a maximum fix-point.  
 \item If $S=\{\top, \fd(\top_n), \fd^2(\top_n),\ldots \}$, then $\bigwedge_n S$ is the maximum fix-point of $\fd$.  
\end{enumerate}
\end{theorem}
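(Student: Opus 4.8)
The plan is to reduce the whole statement to the continuity of the individual component maps $f_k=\pi_k\circ\fd$, and then feed that into the product-lattice and fix-point machinery already available. The four parts are tightly coupled: once every $f_k$ is shown continuous, continuity of $\fd$ follows from the coordinate-wise description of infima in the product lattice, monotonicity and distributivity drop out of continuity, and the existence and explicit form of the maximum fix-point come from the general descending-iteration theorem.

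First I would establish that each $f_k$ is continuous by the case analysis of Definition~\ref{def:comp-trans-fun}. For $k=1$, $f_1$ is the constant map to $\bot$, continuous by Observation~\ref{obs:conf_cts}. For a confluence point with $\pred(k)=\{i,j\}$, $f_k=\pi_{i,j}$ is the confluence map, again continuous by Observation~\ref{obs:conf_cts}. For a function point with $\pred(k)=\{j\}$, $f_k=h_k\circ\pi_j$ is a composition: the projection $\pi_j$ is continuous by Observation~\ref{obs:conf_cts}, and the extended transfer function $h_k$ is continuous by Theorem~\ref{thm:continuous} (or Theorem~\ref{thm:ntrans-cts} when $h_k$ is a non-deterministic assignment). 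It then remains to check that composition preserves continuity in the strong sense used here. For any non-empty $S$, continuity of $\pi_j$ gives $(h_k\circ\pi_j)(\bigwedge_n S)=h_k(\bigwedge\pi_j(S))$; since $\pi_j(S)$ is a non-empty subset of $\gtbar$, continuity of $h_k$ yields $h_k(\bigwedge\pi_j(S))=\bigwedge h_k(\pi_j(S))=\bigwedge(h_k\circ\pi_j)(S)$, as required.

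Next I would lift continuity from the components to $\fd$. In a product lattice the infimum is taken coordinate-wise, so $\pi_k(\bigwedge_n S)=\bigwedge\pi_k(S)$ for every non-empty $S\subseteq\gtbar^{n}$ (this is the content of Theorem~\ref{Athm:prod_cts}, building on Theorem~\ref{thm:prod_complete}). Consequently $\fd(\bigwedge_n S)=\bigwedge_n\fd(S)$ holds exactly when, after applying each $\pi_k$, one has $f_k(\bigwedge_n S)=\bigwedge f_k(S)$ for every $k$; the previous step supplies precisely these coordinate equalities, so $\fd$ is continuous. Monotonicity follows because continuity forces preservation of two-element infima: if $\mathcal{P}_1\preceq_n\mathcal{P}_2$ then $\bigwedge_n\{\mathcal{P}_1,\mathcal{P}_2\}=\mathcal{P}_1$, whence $\fd(\mathcal{P}_1)=\fd(\mathcal{P}_1)\wedge\fd(\mathcal{P}_2)$ (Observation~\ref{Aobs:monotone}); distributivity is just continuity specialized to the two-element set $\{\mathcal{P}_1,\mathcal{P}_2\}$. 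This settles parts (1) and (2), the latter already having been proven as the first step.

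Finally, parts (3) and (4) follow from the maximum fix-point theorem for continuous self-maps of a complete lattice (Corollary~\ref{Acor:MFP}). Since $\fd$ is monotone on the complete lattice $\gtbar^{n}$, the iterates satisfy $\top_n\succeq_n\fd(\top_n)\succeq_n\fd^2(\top_n)\succeq_n\cdots$, and continuity guarantees that the infimum $\bigwedge_n S$ of $S=\{\top_n,\fd(\top_n),\fd^2(\top_n),\ldots\}$ is a fix-point which dominates every other fix-point, hence is the maximum fix-point. The main obstacle I anticipate is bookkeeping rather than conceptual: carefully verifying that composition preserves the strengthened continuity condition (arbitrary non-empty infima, not merely chains) and that the coordinate-wise infimum formula of Theorem~\ref{thm:prod_complete} extends from product sets to arbitrary non-empty subsets of $\gtbar^{n}$, since it is this extension that makes the reduction to the component maps legitimate.
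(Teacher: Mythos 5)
Your proposal is correct and takes essentially the same route as the paper, which proves the theorem by citing Theorem~\ref{Athm:prod_cts} (continuity of $\fd$ reduces to continuity of the component maps $f_k$, which are continuous by Observation~\ref{obs:conf_cts} together with Theorem~\ref{Athm:compos} for the compositions $h_k\circ\pi_j$), Observation~\ref{Aobs:monotone} for monotonicity and distributivity, and Theorem~\ref{Athm:MFP}/Corollary~\ref{Acor:MFP} for the maximum fix-point. The composition and coordinate-wise-infimum verifications you carry out inline are precisely the content of those appendix lemmas, so nothing further is needed.
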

 \begin{figure}[h] 
  \begin{center}
 \includegraphics[scale=0.85]{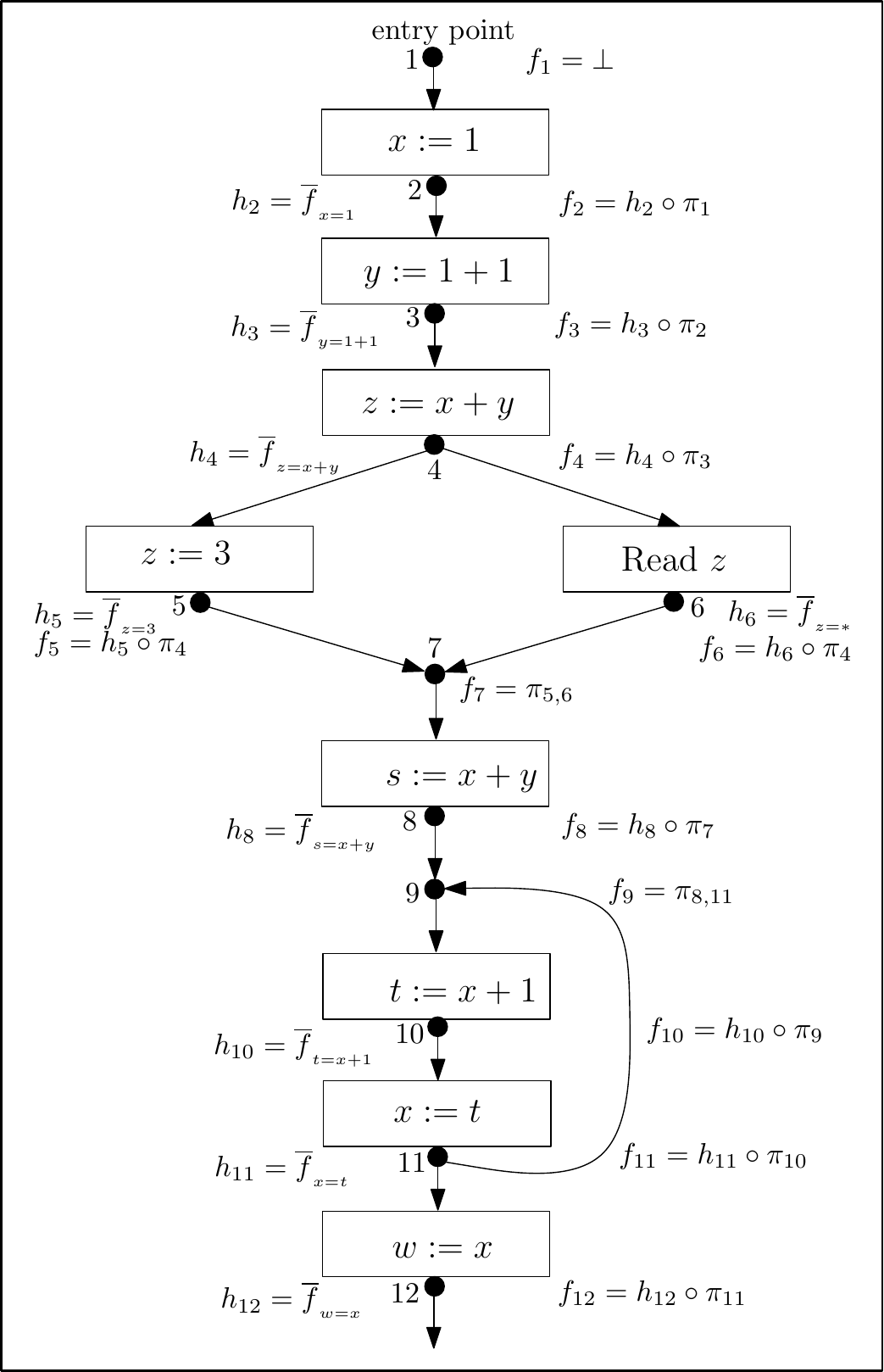}
  \caption{Component Maps of the Composite Transfer Function}
\label{fig:programpoints} 
\end{center}
\end{figure}
The objective of defining Herbrand Congruence as the maximum fix point of the composite transfer function is possible now.    
\begin{definition}[Herbrand Congruence]\label{def:Herbrand}
Given a data flow framework $D=(G,F)$ over $\mathcal{T}$, the Herbrand Congruence function  $H_D : V(G)\mapsto \gtbar$ is defined as the maximum fix point of the 
composite transfer function $\fd$.   For each $k\in V(G)$, the value $H_D(k)\in \gtbar$ is referred to as the Herbrand Congruence at program point $k$.  
\end{definition}
The following is a consequence of Theorem~\ref{thm:prop-composite-trans} and the definition of Herbrand Congruence.  
\begin{obs}\label{obs:herbrand}
For each $k\in V(G)$, $H_D(k)=\bigwedge \{ (\pi_k \circ {\fd}^{l})(\top_n) :  l\geq 0 \}$.  
\end{obs}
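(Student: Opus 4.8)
The plan is to obtain the statement as a direct consequence of two facts already established: the explicit description of the maximum fix point in Theorem~\ref{thm:prop-composite-trans}, and the continuity of the projection maps recorded in Observation~\ref{obs:conf_cts}. By Definition~\ref{def:Herbrand}, $H_D$ is the maximum fix point of $\fd$, and for each $k\in V(G)$ the value $H_D(k)$ is by definition $\pi_k(H_D)$. By part (4) of Theorem~\ref{thm:prop-composite-trans}, this maximum fix point equals $\bigwedge_n S$, where $S=\{\top_n, \fd(\top_n), \fd^2(\top_n),\ldots\}$. Hence I would begin by writing $H_D(k)=\pi_k\!\left(\bigwedge_n S\right)$, reducing the claim to a statement about pushing $\pi_k$ through an infimum.

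Next I would invoke continuity. The set $S$ is a non-empty subset of $\gtbar^{n}$ (it contains $\top_n$), so the continuity hypothesis of Definition~\ref{def:cts}, which is stated only for non-empty subsets, applies. Since projection maps are continuous by Observation~\ref{obs:conf_cts}, the defining property of continuity yields $\pi_k\!\left(\bigwedge_n S\right)=\bigwedge \pi_k(S)$, where $\pi_k(S)=\{\pi_k(s):s\in S\}$ and the infimum on the right is taken in $\gtbar$.

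It then remains to identify the set $\pi_k(S)$. Writing $\top_n=\fd^{0}(\top_n)$, every element of $S$ has the form $\fd^{l}(\top_n)$ for some $l\geq 0$, so $\pi_k(S)=\{(\pi_k\circ \fd^{l})(\top_n):l\geq 0\}$. Combining the three steps gives $H_D(k)=\bigwedge\{(\pi_k\circ \fd^{l})(\top_n):l\geq 0\}$, which is exactly the asserted identity.

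I do not anticipate a genuine obstacle here, as the result is essentially a corollary: the entire content is the commutation of $\pi_k$ with the infimum, licensed by continuity. The only points meriting a line of justification are checking that $S$ is non-empty so that Definition~\ref{def:cts} is applicable, and confirming that $\pi_k$ is among the continuous maps covered by Observation~\ref{obs:conf_cts}; both are immediate. Everything else is a routine rewriting of $\pi_k(S)$ into the indexed family $\{(\pi_k\circ \fd^{l})(\top_n):l\geq 0\}$.
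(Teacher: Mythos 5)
Your proposal is correct and follows exactly the paper's own argument: write $H_D(k)=\pi_k\bigl(\bigwedge_n\{\fd^{l}(\top_n):l\geq 0\}\bigr)$ via Theorem~\ref{thm:prop-composite-trans}, push $\pi_k$ through the infimum by its continuity (Observation~\ref{obs:conf_cts}), and rewrite $\pi_k(S)$ as the indexed family. Your added remarks on the non-emptiness of $S$ and the applicability of Definition~\ref{def:cts} are sensible but the paper treats them as implicit.
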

\begin{proof}
 \begin{eqnarray*}
  H_D(k)&=&\pi_k(\bigwedge\text{$_{_n}$} \{{\fd}^{l}(\top_n) : l\geq 0\}) \text{ (by Theorem~\ref{thm:prop-composite-trans}) }\\
        &=&\bigwedge \{ \pi_k({\fd}^{l}(\top_n)) : l\geq 0 \} \text{ (by continuity of $\pi_k$) } \\
        &=&\bigwedge \{ (\pi_k\circ {\fd}^{l})(\top_n) : l\geq 0\}
 \end{eqnarray*}
\end{proof}
The definition of Herbrand congruence must be shown to be consistent with the traditional meet-over-all-paths description of Herbrand equivalence of terms in a data flow
framework. The next section addresses this issue.  
\section{MOP characterization}\label{sec:MOP}
In this section, we give a meet over all paths characterization for the Herbrand Congruence at each program point. 
This is essentially a lattice theoretic reformulation of the characterization presented by Steffen et.~al.~\cite[p.~393]{Steffen90}.
In the following, assume that we are given a data flow framework 
$D=(G,F)$ over $\mathcal{T}$, with $V(G)=\{1,2,\ldots, n\}$.    
\begin{definition}[Path]\label{def:prog-path}
For any non-negative integer $l$, a program path (or simply a path) of length $l$ to a vertex $k\in V(G)$ is a sequence $\alpha=(v_0,v_2,\ldots v_{l})$ 
satisfying $v_0=1$, $v_l=k$ and $(v_{i-1},v_{i})\in E(G)$ for 
each $i\in \{1,2,\ldots l\}$. For each $i\in\{0,1,\ldots, l\}$,  $\alpha_i$ denotes the initial segment of $\alpha$ up to the  $i^{th}$ vertex, given by $(v_0,v_1,\ldots, v_i)$.
\end{definition}
\begin{remark}
Note that the vertices in a path need not be distinct under this definition.   
\end{remark}
The next definition associates a congruence in $\gtbar$ with each path in $D$.  The path function captures the effect of application of transfer functions along the path on 
the initial congruence $\bot$, in the order in which the transfer functions appear along the path.  
 \begin{definition}[Path Congruence] \label{def:path-fun}
Let $\alpha=(v_0,v_1,\ldots, v_l)$ be a path of length $l$ to vertex $k\in V(G)$ for some $l\geq 0$. We define:  
\begin{enumerate}
 \item When $i=0$, $m_{\alpha_i}=\bot$.
 \item If $i>0$ and $v_{i}=j$, where $j\in V(G)$ is a function point, then $m_{\alpha_i}=h_j(m_{\alpha_{i-1}})$, where $h_j\in F$
       is the extended transfer function associated with the function point $j$. 
 \item If $i>0$ and $v_{i}$ is a confluence point, then $m_{\alpha_i}=m_{\alpha_{i-1}}$.  
 \item $m_\alpha=m_{\alpha_l}$.
\end{enumerate}
The congruence $m_\alpha$ is defined as the path congruence associated with the path $\alpha$.    
\end{definition}
For $k\in V(G)$ and $l\geq 0$, let $\Phi_l(k)$ denote the set of all paths of length {\em less than $l$} from the entry point $1$ to the vertex $k$.  
In particular, $\Phi_0(k) =\emptyset$, for all $k \in V(G)$.
The following observation is a consequence of the definition of  $\Phi_l(k)$.
\begin{obs}\label{obs:obs-path}
 If $k \in V(G)$ and $l\geq 1$, 
 \begin{enumerate}
   \item If $k$ is the entry point, then $\Phi_{l}(k)=\{(1)\}$, the set containing only the path of length zero, starting and ending at vertex $1$.
   \item If $k$ is a function point with $\pred(k)=\{j\}$, then \\
   $\{\alpha_{l-1} : \alpha\in \Phi_l(k)\}=\{\alpha' : \alpha' \in  \Phi_{l-1}(j)\}=\Phi_{l-1}(j)$.
   \item If $k$ is a confluence point with $\pred(k)=\{i,j\}$, then\\ 
   $\{\alpha_{l-1} : \alpha\in \Phi_l(k)\}=\{\alpha' : \alpha' \in  \Phi_{l-1}(i)\} \cup \{\alpha' : \alpha' \in  \Phi_{l-1}(j)\}=\Phi_{l-1}(i) \cup \Phi_{l-1}(j)$.
 \end{enumerate}
\end{obs}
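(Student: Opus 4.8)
The plan is to prove the three cases separately by a direct combinatorial analysis of the control flow graph, since the statement is purely about the structure of $\Phi_l(k)$ and does not invoke the lattice of congruences or any transfer function. The unifying idea is a \emph{strip-the-last-vertex} correspondence: for any non-entry vertex $k$, every path $\alpha=(v_0,\ldots,v_m)$ reaching $k$ has length $m\ge 1$ (a length-zero path would force $v_0=k=1$), and by Definition~\ref{def:control-flow-graph} its penultimate vertex $v_{m-1}$ is forced to be a predecessor of $k$, since $(v_{m-1},k)\in E(G)$. Throughout, I read $\alpha_{l-1}$ as the initial segment of $\alpha$ obtained by deleting its terminal vertex, i.e.\ $(v_0,\ldots,v_{m-1})$; the index bookkeeping this induces is the crux and I address it below.

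For the entry point ($k=1$) I would argue that $1$ has no predecessors, so no path of positive length can terminate at $1$; the only path from $1$ to $1$ is the trivial path $(1)$, which has length $0<l$ whenever $l\ge 1$. Hence $\Phi_l(1)=\{(1)\}$, establishing case~(1). For the remaining cases I would introduce the map $\sigma$ sending $\alpha=(v_0,\ldots,v_m)$ to $(v_0,\ldots,v_{m-1})$, together with the operation of appending $k$ to a path. If $k$ is a function point with $\pred(k)=\{j\}$, then the penultimate vertex of every $\alpha\in\Phi_l(k)$ must be $j$, so $\sigma(\alpha)$ is a path reaching $j$; conversely, appending $k$ to any path reaching $j$ yields a path reaching $k$, and these operations are mutually inverse, giving a bijection $\Phi_l(k)\to\Phi_{l-1}(j)$. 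If $k$ is a confluence point with $\pred(k)=\{i,j\}$, the penultimate vertex is either $i$ or $j$, so $\sigma(\alpha)$ lands in $\Phi_{l-1}(i)\cup\Phi_{l-1}(j)$, and surjectivity onto this union follows by appending $k$ to an arbitrary path reaching $i$ or reaching $j$ (these two families being disjoint, as their members end at distinct vertices $i\ne j$). In each case $\{\alpha_{l-1}:\alpha\in\Phi_l(k)\}$ is exactly the claimed image.

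The main point requiring care is the length accounting, which is precisely what turns the index $l$ into $l-1$. A path $\alpha$ reaching $k$ has length $m<l$ if and only if $\sigma(\alpha)$, reaching a predecessor, has length $m-1<l-1$; thus $\sigma$ carries $\Phi_l(k)$ into paths of length $<l-1$ at a predecessor, i.e.\ into the sets $\Phi_{l-1}(\cdot)$, and the append operation respects this correspondence in the reverse direction. I would also verify the two easy-to-miss edge conditions: that every path reaching a non-entry vertex has length at least $1$ (so that $\sigma$ is defined and no trivial path intrudes), and that appending $k$ is legitimate because $(j,k)$, respectively $(i,k)$ and $(j,k)$, are edges of $G$. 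Beyond this bookkeeping the argument is routine, and I expect the only genuine friction to be pinning down the intended reading of $\alpha_{l-1}$ so that the claimed set equalities hold verbatim.
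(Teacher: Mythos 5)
Your proof is correct and is essentially the argument the paper tacitly relies on: the paper gives no explicit proof (it states the observation as a direct consequence of the definition of $\Phi_l(k)$), and your strip-the-last-vertex/append correspondence, with the length bookkeeping $m<l \iff m-1<l-1$ and the fact that every path to a non-entry vertex has length at least $1$ and must pass through a predecessor last, is exactly what that claim amounts to. You were also right to flag and repair the notational abuse: for $\alpha\in\Phi_l(k)$ of length $m\le l-1$, the subscript in $\alpha_{l-1}$ must be read as $\alpha_{m-1}$ (deletion of the terminal vertex), which is the same reading needed for the paper's later use of $m_{\alpha_{l-1}}$ in Lemma~\ref{lem:paths}.
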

For $l\ge 0$, we define the congruence $M_l(k)$ to be the meet of all path congruences associated with paths of length less than $l$ from the entry point to vertex $k$ in $G$. 
Stated formally,
$$ M_l(k)=\bigwedge\{ m_\alpha: \alpha \in \Phi_{l}(k)\}, \text{for $l \ge 0$}$$
\begin{obs}\label{obs:base-cases}
If $l=0$, $\Phi_{l}(k)=\emptyset$ and hence $M_0(k)=\top$, for all $k \in V(G)$. Further, $M_1(1)=\bot$ and $M_1(k)=\top$, for $k\ne 1$. In general, 
$M_l(k)=\top$ if there are no paths of length less than $l$ from $1$ to $k$ in $G$.  
\end{obs}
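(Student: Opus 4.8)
The plan is to unwind the definitions of $\Phi_l(k)$, $m_\alpha$ and $M_l(k)$ directly; every assertion reduces to the convention fixed in Definition~\ref{def:PBar}, namely that $\top$ is the greatest lower bound of the empty set, so that $\bigwedge \emptyset = \top$. I would prove the four assertions in the order they are stated, since each is logically independent and none requires the others.

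First I would dispose of the case $l=0$. Since $\Phi_0(k) = \emptyset$ for every $k \in V(G)$ by the defining convention for $\Phi_l$, the formula defining $M_l$ gives $M_0(k) = \bigwedge \{ m_\alpha : \alpha \in \Phi_0(k)\} = \bigwedge \emptyset = \top$, where the last equality is Definition~\ref{def:PBar}. This settles the first claim uniformly in $k$. Next I would treat $M_1(1)$. A path of length less than $1$ is a path of length $0$, and by Definition~\ref{def:prog-path} the only length-$0$ path to a vertex $k$ is the single-vertex sequence $(1)$, which exists precisely when $k=1$; this is exactly Observation~\ref{obs:obs-path}(1) specialised to $l=1$, giving $\Phi_1(1) = \{(1)\}$. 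Its path congruence is read off from the $i=0$ clause of Definition~\ref{def:path-fun}, which yields $m_{(1)} = \bot$. Hence $M_1(1) = \bigwedge \{\bot\} = \bot$, as required.

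For $k \neq 1$ I would observe that a length-$0$ path to $k$ would force both $v_0 = 1$ and $v_0 = k$, hence $k=1$, a contradiction; thus $\Phi_1(k) = \emptyset$ and $M_1(k) = \bigwedge \emptyset = \top$. The final, general statement is the same argument in the abstract: whenever $G$ contains no path of length less than $l$ from $1$ to $k$, the set $\Phi_l(k)$ is empty by definition, so $M_l(k) = \bigwedge \emptyset = \top$; the earlier special cases $M_0(k)=\top$ and $M_1(k)=\top$ (for $k\neq 1$) are then just instances of this general principle.

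I do not expect any genuine obstacle, as the observation is a direct consequence of the definitions. The only points requiring care are to invoke the empty-meet convention from Definition~\ref{def:PBar} explicitly, rather than leaving $\bigwedge \emptyset$ undefined, and to correctly identify the unique length-$0$ path $(1)$ together with its congruence $m_{(1)} = \bot$ when verifying $M_1(1) = \bot$, since this is the only one of the four claims whose value is $\bot$ rather than $\top$.
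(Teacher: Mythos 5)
Your proof is correct and takes essentially the same route the paper intends: the paper states this observation without proof, as a direct consequence of the definitions, and your argument simply spells out that unwinding --- $\Phi_0(k)=\emptyset$ together with the empty-meet convention $\bigwedge \emptyset = \top$ of Definition~\ref{def:PBar}, the unique length-zero path $(1)$ with $m_{(1)}=\bot$ read off from Definition~\ref{def:path-fun}, and emptiness of $\Phi_l(k)$ in all remaining cases. The care you take to invoke the empty-meet convention explicitly and to identify $(1)$ as the sole path behind $M_1(1)=\bot$ is exactly what makes the observation rigorous.
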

We define $\Phi_k$ to be the set of all paths from vertex $1$ to vertex $k$ in $G$, i.e., $\Phi_k=\bigcup_{l\geq 1} \Phi_l(k)$  
and $MOP(k)=\bigwedge\{m_\alpha: \alpha \in \Phi(k)   \}=\bigwedge\{ M_l(k): l\geq 0\}$.  
(The second equality follows from Lemma~\ref{Alem:meet-union} and Observation~\ref{obs:base-cases}.) 
The congruence $MOP(k)$ is the meet of all path congruences associated with paths in $\Phi_k$. 

Our objective is to prove $MOP(k)=H_{D}(k)$ for each $k\in \{1,2,\ldots, n\}$ so that $H_D$ captures the meet over all paths information about equivalence
of expressions in $\mathcal{T}$.  We begin with the following observations.  
\begin{lemma}\label{lem:paths}
 For each $k\in V(G)$ and $l\geq 1$
  \begin{enumerate}
 \item If $k=1$, the entry point, then $M_l(k)=\bot$.  
 \item If $k$ is a function point with $\pred(k)=\{j\}$, then $M_l(k)=h_k(M_{l-1}(j))$,  where $h_k$ is the (extended) transfer function corresponding to the function point $k$.  
 \item If $k$ is a confluence point with $\pred(k)=\{i,j\}$, then $M_l(k)=M_{l-1}(i) \wedge M_{l-1}(j)$.  
\end{enumerate}
\end{lemma}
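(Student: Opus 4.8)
The plan is to unfold the definition $M_l(k)=\bigwedge\{m_\alpha:\alpha\in\Phi_l(k)\}$ and, in each of the three cases, reindex the meet using Observation~\ref{obs:obs-path} so that it reduces to the corresponding recursive step in the definition of the path congruence $m_\alpha$ (Definition~\ref{def:path-fun}). For the entry point, Observation~\ref{obs:obs-path}(1) gives $\Phi_l(1)=\{(1)\}$ for $l\geq 1$, and the single length-zero path satisfies $m_{(1)}=\bot$ by Definition~\ref{def:path-fun}(1); hence $M_l(1)=\bigwedge\{\bot\}=\bot$, settling case~(1).

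For a function point $k$ with $\pred(k)=\{j\}$, the key point is that every $\alpha\in\Phi_l(k)$ ends at the function point $k$, so Definition~\ref{def:path-fun}(2) factors its congruence as $m_\alpha=h_k(m_{\alpha'})$, where $\alpha'$ is the initial segment of $\alpha$ obtained by deleting the last vertex (a path ending at $j$). By Observation~\ref{obs:obs-path}(2) these predecessor segments range exactly over $\Phi_{l-1}(j)$, so the set $\{m_{\alpha'}\}$ used in the meet coincides with $\{m_{\alpha'}:\alpha'\in\Phi_{l-1}(j)\}$. When $\Phi_{l-1}(j)\neq\emptyset$ I would then commute $h_k$ with the (possibly infinite) meet using the continuity of transfer functions (Theorems~\ref{thm:continuous} and~\ref{thm:ntrans-cts}):
\[
M_l(k)=\bigwedge_{\alpha\in\Phi_l(k)}h_k(m_{\alpha'})
      =h_k\Bigl(\bigwedge_{\alpha'\in\Phi_{l-1}(j)}m_{\alpha'}\Bigr)
      =h_k\bigl(M_{l-1}(j)\bigr).
\]
The degenerate case must be treated separately: if $\Phi_{l-1}(j)=\emptyset$ then Observation~\ref{obs:obs-path}(2) forces $\Phi_l(k)=\emptyset$ as well, so $M_l(k)=\top=h_k(\top)=h_k(M_{l-1}(j))$, using $h_k(\top)=\top$ from the extension of transfer functions to $\top$ (Definitions~\ref{def:ext-transfun} and~\ref{def:ext-ntransfun}).

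For a confluence point $k$ with $\pred(k)=\{i,j\}$, no transfer function is applied at the final step, so Definition~\ref{def:path-fun}(3) gives $m_\alpha=m_{\alpha'}$, with $\alpha'$ again the last-vertex-deleted segment. Observation~\ref{obs:obs-path}(3) says these segments range over $\Phi_{l-1}(i)\cup\Phi_{l-1}(j)$, whence $M_l(k)=\bigwedge\{m_{\alpha'}:\alpha'\in\Phi_{l-1}(i)\cup\Phi_{l-1}(j)\}$. Splitting a meet over a union (Lemma~\ref{Alem:meet-union}, with the convention $\bigwedge\emptyset=\top$) then yields $M_l(k)=M_{l-1}(i)\wedge M_{l-1}(j)$; this formulation also subsumes the cases where one or both of $\Phi_{l-1}(i),\Phi_{l-1}(j)$ are empty, since the empty meet contributes $\top$.

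The step I expect to be the main obstacle is the commutation of the meet with $h_k$ in the function-point case: this is where the genuine content of the lemma lies, and it hinges on the continuity of both the deterministic and the non-deterministic transfer functions. The accompanying care is to keep the continuity hypothesis ($S\neq\emptyset$) from being violated, which is exactly why the empty path set must be isolated and handled by hand through $h_k(\top)=\top$. The remaining bookkeeping — the reindexing via Observation~\ref{obs:obs-path} and the union-splitting of the meet — is routine once these conventions are in place.
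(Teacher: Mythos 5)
Your proof is correct and follows essentially the same route as the paper's: unfold $M_l(k)$, reindex the meet via Observation~\ref{obs:obs-path}, commute $h_k$ with the meet by continuity in the function-point case, and split the meet over the union via Lemma~\ref{Alem:meet-union} in the confluence case. Your explicit isolation of the degenerate case $\Phi_{l-1}(j)=\emptyset$ (handled through $h_k(\top)=\top$, since the paper's continuity definition exempts the empty set) is a small point of care that the paper's proof passes over silently, but it does not change the argument.
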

\begin{proof}
 If $k=1$, $\Phi_l(k)=\{(1)\}$, by Observation~\ref{obs:obs-path}.  Hence $M_l(k)=M_{1}(1)=\bot$, by Observation~\ref{obs:base-cases}.\\
 If $k$ is a function point with $\pred(k)=\{j\}$, then
 \begin{eqnarray*}
   M_l(k)&=&\bigwedge\{ m_\alpha: \alpha \in \Phi_l(k)\}\\
   &=&\bigwedge\{ h_k (m_{\alpha_{l-1}}): \alpha \in \Phi_l(k)\}\text{ (by Definition~\ref{def:path-fun})} \\
   &=&\bigwedge\{ h_k (m_{\alpha'}): \alpha' \in \Phi_{l-1}(j)\}\text{ (by Observation~\ref{obs:obs-path})}\\ 
   &=&h_k\left(\bigwedge\{m_{\alpha'}: \alpha' \in \Phi_{l-1}(j)\}\right)\text{ (by continuity of $h_k$) }\\ 
   &=&h_k(M_{l-1}(j))\text{ (by definition of $M_{l-1}(j)$)}
 \end{eqnarray*}
If $k$ is a confluence point with $\pred(k)=\{i,j\}$, then 
 \begin{eqnarray*}
  M_l(k)&=&\bigwedge\{ m_\alpha: \alpha \in \Phi_l(k)\}\\
        &=&\bigwedge( \{m_{\alpha_{l-1}}: \alpha \in \Phi_l(k)\}) \text{ (by Definition~\ref{def:path-fun})}\\
        &=&\bigwedge\left(\{m_{\alpha'}: \alpha' \in \Phi_{l-1}(i)\} \cup \{m_{\alpha'}: \alpha' \in \Phi_{l-1}(j) \}\right) \text{ (by Observation~\ref{obs:obs-path})}  \\
         &=&\left( \bigwedge \{m_{\alpha'}: \alpha' \in \Phi_{l-1}(i)\}\right) \wedge \left(\bigwedge\{m_{\alpha'}: \alpha' \in \Phi_{l-1}(j)\}\right)\text{ (by Lemma~\ref{Alem:meet-union})}\\
        &=&M_{l-1}(i) \wedge M_{l-1}(j)\text{ (by definition of $M_{l-1}(i)$ and $M_{l-1}(j)$)} 
 \end{eqnarray*}
\end{proof}
\begin{lemma}\label{lem:MOP-MFP}
For each $k\in V(G)$ and $l\geq 0$, $M_l(k)=(\pi_k \circ {\fd}^{l})(\top_n)$.
\end{lemma}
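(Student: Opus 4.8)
The plan is to prove the identity by induction on $l$, exploiting the fact that Lemma~\ref{lem:paths} and Lemma~\ref{lem:fd-induction} furnish \emph{identical} recurrences for the two quantities $M_l(k)$ and $(\pi_k\circ\fd^l)(\top_n)$, anchored at identical base cases. Once the two recursions are placed side by side, the equality follows mechanically, since essentially all the real work has already been discharged in those two lemmas.

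For the base case $l=0$, I would note that $M_0(k)=\top$ for every $k\in V(G)$ by Observation~\ref{obs:base-cases}, while $\fd^0$ is the identity map, so $(\pi_k\circ\fd^0)(\top_n)=\pi_k(\top_n)=\top$ by the definition of $\top_n$ and of the projection map $\pi_k$. Hence the two sides agree when $l=0$.

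For the inductive step, I would assume the identity holds at level $l-1$ for \emph{every} vertex of $G$ and establish it at level $l$, splitting into the three cases of the control flow graph. If $k=1$, both sides equal $\bot$ by part (1) of each lemma. If $k$ is a function point with $\pred(k)=\{j\}$, then Lemma~\ref{lem:paths} gives $M_l(k)=h_k(M_{l-1}(j))$ and Lemma~\ref{lem:fd-induction} gives $(\pi_k\circ\fd^l)(\top_n)=h_k\big((\pi_j\circ\fd^{l-1})(\top_n)\big)$; applying the induction hypothesis $M_{l-1}(j)=(\pi_j\circ\fd^{l-1})(\top_n)$ inside $h_k$ equates them. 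If $k$ is a confluence point with $\pred(k)=\{i,j\}$, the same two lemmas reduce both sides to the meet of the level-$(l-1)$ values at $i$ and $j$, and the induction hypothesis applied to each of $i$ and $j$ closes the case.

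There is no genuinely hard step here; the only points demanding care are bookkeeping at the level boundary. Both recurrence lemmas are stated for $l\geq 1$, so I must keep $l=0$ as a separate base case rather than folding it into the recursion. More importantly, the recurrence for a vertex $k$ at level $l$ refers to the level-$(l-1)$ values at its \emph{predecessors}, so the induction hypothesis must be quantified uniformly over all vertices of $G$, not merely over $k$; this is the subtlety that makes the simultaneous (all-vertex) induction the correct formulation.
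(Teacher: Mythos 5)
Your proof is correct and takes essentially the same route as the paper: induction on $l$ with the base case $l=0$ settled by Observation~\ref{obs:base-cases}, and an inductive step split into the three vertex types, matching the recurrences of Lemma~\ref{lem:paths} against those of Lemma~\ref{lem:fd-induction}. Your explicit remark that the induction hypothesis must be quantified over \emph{all} vertices (since the recurrence at $k$ invokes level $l-1$ at the predecessors) is a careful formulation of a point the paper leaves implicit when it fixes $k$ before inducting.
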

\begin{proof}
 Let $k \in V(G)$ be chosen arbitrarily.  We prove the lemma by induction on $l$.\\
 When $l=0$, $M_l(k)=\top$, by Observation~\ref{obs:base-cases}, as required. Otherwise,
 \begin{enumerate}
  \item If $k=1$, the entry point, then  by Lemma~\ref{lem:fd-induction} and  Lemma~\ref{lem:paths},\\ $M_l(k)=\bot=(\pi_k \circ {\fd}^{l})(\top_n)$. 
  \item If $k$ is a function point with $\pred(k)=\{j\}$ and $h_k$ is the (extended) transfer function corresponding to the function point $k$, then
  \begin{eqnarray*}
   M_l(k)&=&h_k(M_{l-1}(j)) \text{ (By Lemma~\ref{lem:paths})} \\
         &=&h_k((\pi_j \circ {\fd}^{l-1})(\top_n)) \text{ (By induction hypothesis) } \\
         &=&(h_k\circ \pi_j\circ {\fd}^{l-1})(\top_n)  \\
         &=&(\pi_k \circ \fd^{l})(\top_n) \text{ (By Lemma~\ref{lem:fd-induction}) }         
  \end{eqnarray*}
  \item If $k$ is a confluence point with $\pred(k)=\{i,j\}$, then
  \begin{eqnarray*}
   M_l(k)&=& M_{l-1}(i) \wedge M_{l-1}(j) \text{ (By Lemma~\ref{lem:paths}) }\\
         &=& \left((\pi_i \circ {\fd}^{l-1})(\top_n)\right) \wedge  \left((\pi_j \circ {\fd}^{l-1})(\top_n)\right) \text{ (By induction hypothesis) } \\
         &=& (\pi_k \circ \fd^{l})(\top_n) \text{ (By Lemma~\ref{lem:fd-induction}) }          
  \end{eqnarray*}
 \end{enumerate}
\end{proof}
Finally, we show that the iterative fix-point characterization of Herbrand equivalence and the meet over all paths characterization coincide.  
\begin{theorem}\label{thm:MOP-MFP}
Let $D=(G,F)$ be a data flow framework.  Then, for each $k\in V(G)$, $MOP(k)=H_D(k)$.  
\end{theorem}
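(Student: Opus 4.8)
The plan is to obtain the theorem as an immediate corollary of the two characterizations of the relevant congruences that have already been set up: the path-based quantities $M_l(k)$ on the one side, and the iterates $(\pi_k \circ \fd^{l})(\top_n)$ of the composite transfer function on the other. The whole strategy rests on the fact that both $MOP(k)$ and $H_D(k)$ are, by construction, infima over the index set $l \ge 0$ of two sequences that Lemma~\ref{lem:MOP-MFP} shows to be term-by-term identical.

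Concretely, I would proceed in three steps. First, recall that $MOP(k) = \bigwedge\{m_\alpha : \alpha \in \Phi_k\} = \bigwedge\{M_l(k) : l \ge 0\}$; the second equality is exactly the one recorded just after the definition of $\Phi_k$ (justified by Lemma~\ref{Alem:meet-union} together with the base cases in Observation~\ref{obs:base-cases}). Second, apply Lemma~\ref{lem:MOP-MFP}, which gives $M_l(k) = (\pi_k \circ \fd^{l})(\top_n)$ for every $l \ge 0$ and every $k \in V(G)$; substituting this identity term by term into the meet yields $MOP(k) = \bigwedge\{(\pi_k \circ \fd^{l})(\top_n) : l \ge 0\}$. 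Third, observe that the right-hand side is precisely the expression for $H_D(k)$ furnished by Observation~\ref{obs:herbrand}. Chaining the three equalities gives $MOP(k) = H_D(k)$ for each $k$, as required.

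I do not expect any genuine obstacle at this final stage: the statement is a book-keeping combination of results already in hand, and the only point demanding care is to check that the index sets match (both sides run over all $l \ge 0$, including the degenerate values where $\Phi_l(k) = \emptyset$ contributes $\top$, handled by Observation~\ref{obs:base-cases}). The substantive work was discharged earlier: in Lemma~\ref{lem:paths}, where continuity of the transfer functions (Theorem~\ref{thm:continuous}, and for non-deterministic assignments Theorem~\ref{thm:ntrans-cts}) is what lets a transfer function be pulled through an arbitrary meet of path congruences, and in the induction of Lemma~\ref{lem:MOP-MFP}, which identifies the path formulation with the fix-point iteration case by case over the three vertex types. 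Given those, the present statement simply closes the equivalence between the meet-over-all-paths definition and the maximum-fix-point characterization of Herbrand congruence.
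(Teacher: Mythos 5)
Your proposal is correct and follows exactly the same route as the paper's proof: rewrite $MOP(k)$ as $\bigwedge\{M_l(k): l\geq 0\}$ via Lemma~\ref{Alem:meet-union} and Observation~\ref{obs:base-cases}, substitute $M_l(k)=(\pi_k\circ \fd^{l})(\top_n)$ term by term using Lemma~\ref{lem:MOP-MFP}, and identify the result with $H_D(k)$ via Observation~\ref{obs:herbrand}. Nothing is missing; as you note, the substantive work lives in the earlier lemmas, and this step is pure book-keeping.
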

\begin{proof}
\begin{eqnarray*}
   MOP(k)&=&\bigwedge\{m_\alpha: \alpha \in \Phi(k) \}  \\
         &=&\bigwedge\{ M_l(k): l\geq 0\}  \text{ ( by Lemma~\ref{Alem:meet-union} and Observation~\ref{obs:base-cases}) }\\
         &=&\bigwedge\{ (\pi_k \circ {\fd}^{l})(\top_n) : l\geq 0 \} \text{ (by Lemma~\ref{lem:MOP-MFP}) }\\       
         &=& H_D(k)  \text{ (by Observation~\ref{obs:herbrand})}
\end{eqnarray*}         
\end{proof}
\section{Conclusion}
We have shown that Herbrand equivalence of terms in a data flow framework admits a lattice theoretic fix-point characterization.  
Though not the concern addressed here, we note that this fix-point characterization naturally leads to algorithms that work on the  
restriction of congruences to terms that actually appear in given data flow framework 
and iteratively compute the maximum fix-point (see, for example, the algorithm presented in Appendix~\ref{sec:appendixc}).  This allows the algorithmic detection of
Herbrand equivalence among the expressions that actually appear in any given program. Moreover, as mentioned in the introduction, analysis of the completeness of iterative fix point based algorithms that detects
Herbrand equivalence among program expressions would be potentially easier using the fix-point characterization of Herbrand equivalences presented here, 
when compared to the current practice of using a meet over all paths characterization as the reference point.
Further, we have shown that the lattice theoretic fix-point characterization presented here is equivalent to a meet over all paths characterization.
\newpage

\appendix
\makeatletter
\edef\thetheorem{\expandafter\noexpand\thesection\@thmcountersep\@thmcounter{theorem}}
\makeatother
\newpage
\section{Proofs of Theorems}
\textbf{Theorem~\ref{thm:confluence}. }
If $\mathcal{P}_1$ and $\mathcal{P}_2$ are congruences, then $\mathcal{P}_1\wedge \mathcal{P}_2$ is a congruence.  
\begin{proof}
Let $\mathcal{P}_1=\{A_i\}_{i\in I}$ and $\mathcal{P}_2=\{B_j\}_{j\in J}$. Let $C_{i,j}=A_i\cap B_j$ for all $i\in I$ and $j\in J$.   Clearly 
$\mathcal{P}_1\wedge \mathcal{P}_2=\{C_{i,j}: i\in I,j\in J, C_{i,j}\neq \emptyset\}$ is a partition of $\mathcal{T}$.  It suffices to prove that 
$\mathcal{P}_1\wedge \mathcal{P}_2$ satisfies properties (1) to (3) of Definition~\ref{def:cong}.  

\begin{enumerate}
 \item [(1)] For any $c,c'\in C$,  $c\concap c' \iff c \conone c'$ and $c\contwo c'\iff c=c'$. 
 \item [(2)] For $t,t',s,s'\in \mathcal{T}$, $t+s \concap t'+s'$ if and only if $t+s \conone t'+s'$ and $t+s \contwo t'+s'$\\
 $\iff t'\conone t$, $t'\contwo t$, $s'\conone s$ and $s'\contwo s$
 $\iff t'\concap t$ and $s'\concap s$.
 \item[(3)] For any $c\in C$, $t\in \mathcal{T}$, $c\concap t$ if and only if $c\conone t$ and $c\contwo t$, only if either $t=c$ or $t\in X$ (by condition (3) of the definition of
 congruence).  
\end{enumerate}
\end{proof}
\textbf{Lemma~\ref{lem:poset}. }
$(\mathcal{G}(\mathcal{T}),\preceq)$ is a meet semi-lattice with meet operation $\wedge$ and bottom element $\bot$. 
\begin{proof}
It is evident from the definition of $\preceq$ that $(\mathcal{G}(\mathcal{T}),\preceq)$ is a partial order.  Let $\mathcal{P}_1=\{A_i\}_{i\in I}$ and $\mathcal{P}_2=\{B_j\}_{j\in J}$ 
be congruences in $\in \mathcal{G}(\mathcal{T})$. We next show that $\mathcal{P}_1\wedge \mathcal{P}_2$ is the meet of $\mathcal{P}_1$ and $\mathcal{P}_2$ with respect to $\preceq$.  
By Definition~\ref{def:confluence}, the relations $\mathcal{P}_1\wedge \mathcal{P}_2\preceq \mathcal{P}_1$ and $\mathcal{P}_1\wedge \mathcal{P}_2\preceq \mathcal{P}_2$ must hold true.  
Suppose that a congruence $\mathcal{P}=\{C_k\}_{k\in K}$ satisfies $\mathcal{P}\preceq \mathcal{P}_1$   and $\mathcal{P}\preceq \mathcal{P}_2$.  Then by definition of $\preceq$,
for each $k\in K$, there must exist $i\in I$ and $j\in J$ such that $C_k\subseteq A_i$ and $C_k\subseteq B_j$.
Consequently,  $\emptyset \neq C_k\subseteq (A_i\cap B_j)$.  By Definition~\ref{def:confluence} we have $A_i\cap B_j\in \mathcal{P}_1\wedge \mathcal{P}_2$ and 
hence $\mathcal{P}\preceq \mathcal{P}_1\wedge \mathcal{P}_2$.  Thus, $\mathcal{P}_1\wedge \mathcal{P}_2$ is the meet of $\mathcal{P}_1$ and $\mathcal{P}_2$.  Finally, $\bot$ is
the bottom element of $\gt$ by Observation~\ref{obs:bot}.
\end{proof}
\textbf{Lemma~\ref{lem:subsetInfimum}. }
 Every non-empty subset of $(\mathcal{G}(\mathcal{T}),\preceq)$ has a greatest lower bound.  
\begin{proof}
 We will show that any arbitrary non-empty family of congruences $\{\mathcal{P}_r\}_{r\in R}$ of $\mathcal{T}$ has a greatest lower bound. 
 For each $r \in R$, let $\mathcal{P}_r =\{A _{r, i_r}\}_{i_r \in I_r}$, where $I_r\neq \emptyset$. Let $\mathcal{I}$ be the Cartesian product of the
 index sets of the congruences; i.e., $\mathcal{I}=\prod_{r \in R}{I_r}$. Note that, each element in $\mathcal{I}$ is a sequence $(i_r)_{r \in R}$.
 For each $(i_r)_{r\in R} \in \mathcal{I}$, define $B_{(i_r)}= \bigcap_{r\in R} A_{r,i_r}$.   (Note that the index element $i_r \in I_r$ and $A_{r,i_r}$ is the set in partition  
 $\{\mathcal{P}_r\}_{r\in R}$ having index $i_r$). 
 Let $\mathcal{P}=\{B_{(i_r)} : (i_r) \in \mathcal{I}, B_{(i_r)} \ne \emptyset \}$. 
 
 We claim that $\mathcal{P}$ is a congruence and it is the greatest lower bound of  $\{\mathcal{P}_r\}_{r\in R}$.  To show that $\mathcal{P}$ is a partition, first assume $t\in \mathcal{T}$.
 Then, for each $r\in R$, there exists some $i_r\in I_r$ such that $t\in A_{r,i_r}$.  Hence $t\in B_{(i_r)}$.  Thus every element appears in at least one set $B_{(i_r)}$ in $\mathcal{P}$.  
 Next, if $(i_r),(j_r)\in \mathcal{I}$ such that $(i_r)\neq (j_r)$,  we show that $B_{(i_r)}\cap B_{(j_r)}=\emptyset$.  To this end, assume that some $t\in \mathcal{T}$ satisfies 
 $t\in B_{(i_r)}\cap B_{(j_r)}$.  Since $(i_r)\neq (j_r)$, there exists some index $r_0$ in which the sequences $(i_r)$ and $(j_r)$ differ; i.e, 
 $i_{r_0}\neq j_{r_0}$.  As $t\in B_{(i_r)}\cap B_{(j_r)}$, it must be the case that $t\in A_{r_0, i_{r_0}}\cap A_{r_0, j_{r_0}}$,
 which contradicts our assumption that $\mathcal{P}_{r_0}$ is a congruence.  

 Next, we prove that $\mathcal{P}$ satisfies properties (1) to (3) of Definition~\ref{def:cong}. 
 \begin{enumerate}
  \item [(1)] Let $c,c'\in C$ be constants such that $c\neq c'$.    Suppose $c,c'\in B_{(i_r)}$ for some $(i_r)\in \mathcal{I}$.  
    Then, for each $r\in R$, $c,c'\in A_{r,i_r}$.  However, this is impossible as $\{A_{r,i_r}\}_{i_r\in I_r}$ is a congruence for each $r$.  This proves (1).  
  \item [(2)] Let $t,t',s,s'\in \mathcal{T}$.  We have $t,t'\in  B_{(i_r)}$ and $s,s'\in  B_{(j_r)}$ for some $(i_r),(j_r)\in \mathcal{I}$ if and only if 
  $t,t'\in A_{r,i_r}$ and $s,s'\in A_{r,j_r}$ for each $r\in R$ if and only if there exists $(k_r)\in \mathcal{I}$ such that $t+s,t'+s'\in A_{r,k_r}$ for each $r\in R$ (because each
  $\mathcal{P}_r$ is a congruence) if and only if $t+s,t'+s'\in B_{(k_r)}$ (by definition of $B_{(k_r)}$), proving (3).
  \item [(3)] For any $c\in C$ and $t\in \mathcal{T}$, suppose $c,t\in B_{(i_r)}$.   Then, $c,t\in A_{r,i_r}$ for each $r\in R$.  Hence either $t=c$ or $t\in X$ because $\mathcal{P}_r$ 
  is a congruence for each $r\in R$.  
 \end{enumerate}
  
  Next we show that $\mathcal{P}\preceq \mathcal{P}_r$ for each $r\in R$.   Let $B_{(i_r)}\in \mathcal{P}$.  By definition, $B_{(i_r)}=\bigcap_{r\in R} A_{r,i_r}$. Thus $B_{(i_r)}\subseteq A_{r,i_r}$ for each $r\in R$.    
  As $A_{r,i_r}\in \mathcal{P}_r$ for each $r\in R$, we have $\mathcal{P}\preceq \mathcal{P}_r$ for each $r\in R$.
 
  To prove that $\mathcal{P}$ is the greatest lower bound of $\{\mathcal{P}_r\}_{r\in R}$, assume $\mathcal{Q}\preceq \mathcal{P}_r$ for each $r\in R$.  Let $\mathcal{Q}=\{C_k\}_{k\in K}$.  Then, for each
  $k\in K$ and each $r\in R$, there exists $i_r\in I_r$ such that $C_k\subseteq A_{r,i_r}$.  Hence $C_k\subseteq B_{(i_r)}=\bigcap_{r\in R} A_{r,i_r}$.  Since $B_{(i_r)}\in \mathcal{P}$, we have 
  $\mathcal{Q}\preceq \mathcal{P}$.  
  
  The proof of the lemma is complete.  Note that axiom of choice was used in assuming that the set $\mathcal{I}$ is non-empty.

\end{proof}
\textbf{Theorem~\ref{thm:trans-function}. }
 If $\mathcal{P}$ is a congruence, then for any $y\in X$, $\beta \in \overline{\mathcal{T}}(y)$, $\trans$ is a congruence.  
\begin{proof}
Let $\mathcal{P}=\{A_{i}\}_{i\in I}$.  Let  $f(\mathcal{P})=\trans$ and $\{B_i\}_{i\in I}$ be defined as in Definition~\ref{def:trans-function}. 
Since $\mathcal{P}$ is a congruence, for each $t\in \mathcal{T}$, there exists classes $A_i,A_j\in \mathcal{P}$ such that $t\in A_i$ and $t[y\leftarrow \beta]\in A_j$.  
(Note that $i=j$ is possible, but $A_i$ and $A_j$ are uniquely determined by the terms $t$ and $t[y\leftarrow \beta]$).  First we show that $f(\mathcal{P})$ has $t$ in exactly
one class.  
\begin{itemize}
 \item If $t\notin A_i(y)$ (i.e., $y$ does not appear in $t$), then clearly $t\in B_i$ and no other class of $f(\mathcal{P})$ by definition.  In particular, when $t=c$ for any constant $c\in C$, 
 $c\in B_i$ if and only if $c\in A_i$, establishing condition (1) of Definition~\ref{def:cong}.
 \item If $t\in A_i(y)$, then by definition of $B_j$, $t\in B_j$ and no other class in $f(\mathcal{P})$ contains $t$.  
\end{itemize}
We have shown that $f(\mathcal{P})$ is a partition that satisfies condition (1) of Definition~\ref{def:cong}.   Next, we prove that $f(\mathcal{P})$ satisfies the remaining conditions of 
the definition of congruence. \\ 
\emph{Condition (2)}:  we need to prove that for all $t,t',s,s'\in \mathcal{T}$, $t'\confp t$ and $s'\confp s$ if and only if  $t'+s'\confp t+s$.  We have:
\begin{eqnarray*}
&&t'+s'\confp t+s\\
&\iff& (t'+s')[y\leftarrow \beta] \conp (t+s)[y\leftarrow \beta] \text{ (by definition of $\trans$) }\\
&=& t'[y\leftarrow \beta]+s'[y\leftarrow \beta] \conp t[y\leftarrow \beta]+s[y\leftarrow \beta] \text{ (by Definition~\ref{def:substitution})}\\
&\iff& t'[y\leftarrow \beta] \conp t[y\leftarrow \beta] \text{ and } s'[y\leftarrow \beta] \conp s[y\leftarrow \beta] \text{( by definition of }\conp \text{)}\\ 
&\iff& t' \confp t \text{ and } s' \confp s \text{ (by definition of $\trans$) }
\end{eqnarray*}
\emph{Condition (3)}: For any $c\in C$, let $t\in \mathcal{T}$ such that $t\neq c$.   Suppose $c\confp t$, then by Observation~\ref{obs:trans} $t[y\leftarrow \beta]\in X\cup \{c\}$ (by condition (3)
of the definition of congruence), which is possible only if one among the following cases are true: (1) $t=x$ for some $x\in X$, $x\neq y$ such that $x\conp c$ or 
(2) $t=y$ and $\beta=x$ for some $x\in X$, $x\conp c$ or (3) $t=y$ and $\beta=c$.  In any case $t\in X$.  
\end{proof}
\textbf{Lemma~\ref{lem:distrb}. }
 $\transone \wedge \transtwo = \transmeet$. 
\begin{proof}
 If $\mathcal{P}_1 =\top$, then $\transone \wedge \transtwo= \top \wedge \transtwo= \transtwo= f(\top \wedge \mathcal{P}_2) $ and the lemma holds true. 
 Similarly, if $\mathcal{P}_2 =\top$, the lemma holds true. Otherwise, let $\mathcal{P}_1=\{A_i\}_{i\in I}$ and $\mathcal{P}_2=\{B_j\}_{j\in J}$.
 By Definition~\ref{def:trans-function}, we have:
 \begin{eqnarray*}
  \transone&=&\{A'_i : i\in I, A'_i\neq \emptyset\} \text{, where } A'_i=\{t\in \mathcal{T} :  t[y\leftarrow \beta]\in A_i\}\\
  \transtwo&=&\{B'_j : j\in J, B'_j\neq \emptyset\} \text{, where } B'_j=\{t\in \mathcal{T} :  t[y\leftarrow \beta]\in B_j\}
 \end{eqnarray*}
By the definition of meet, we have:  
\begin{eqnarray}\label{eqn:eqdistr}
 \transone \wedge \transtwo &=& \{A'_i\cap B'_j : i\in I, j\in J, A'_i\cap B'_j\neq \emptyset\}\\
 \mathcal{P}_1\wedge \mathcal{P}_2 &=& \{A_i\cap B_j : i\in I, j\in J, A_i\cap B_j\neq \emptyset\}
\end{eqnarray}
Moreover, 
\begin{equation}\label{eqn:eqmeet}
 \transmeet = \{D_{i,j} : i\in I, j\in J, D_{i,j}\neq \emptyset\}
\end{equation}
where, 
\begin{eqnarray*}
D_{i,j}&=&\{t\in \mathcal{T} :  t[y\leftarrow \beta]\in A_i\cap B_j\}\\
&=&\{t\in \mathcal{T} :  t[y\leftarrow \beta]\in A_i \text{ and } t[y\leftarrow \beta]\in B_j\}\\   
&=&\{t\in \mathcal{T} :  t\in A'_i \text{ and } t\in B'_j\}=A'_i\cap B'_j         
\end{eqnarray*}
Thus, Equation~(\ref{eqn:eqmeet}) becomes,
\begin{equation}\label{eqn:eqfinalmeet}
 \transmeet = \{A'_i\cap B'_j  : i\in I, j\in J, A'_i\cap B'_j \neq \emptyset\}
\end{equation}
 The result follows by comparing equations (\ref{eqn:eqdistr}) and (\ref{eqn:eqfinalmeet}). 
\end{proof}
\textbf{Theorem~\ref{thm:continuous}. }
 For any $\emptyset \neq S \subseteq \gtbar$, $f(\bigwedge S)=\bigwedge f(S)$.  
 \begin{figure}[h] 
 \begin{center}
 \includegraphics[scale=0.75]{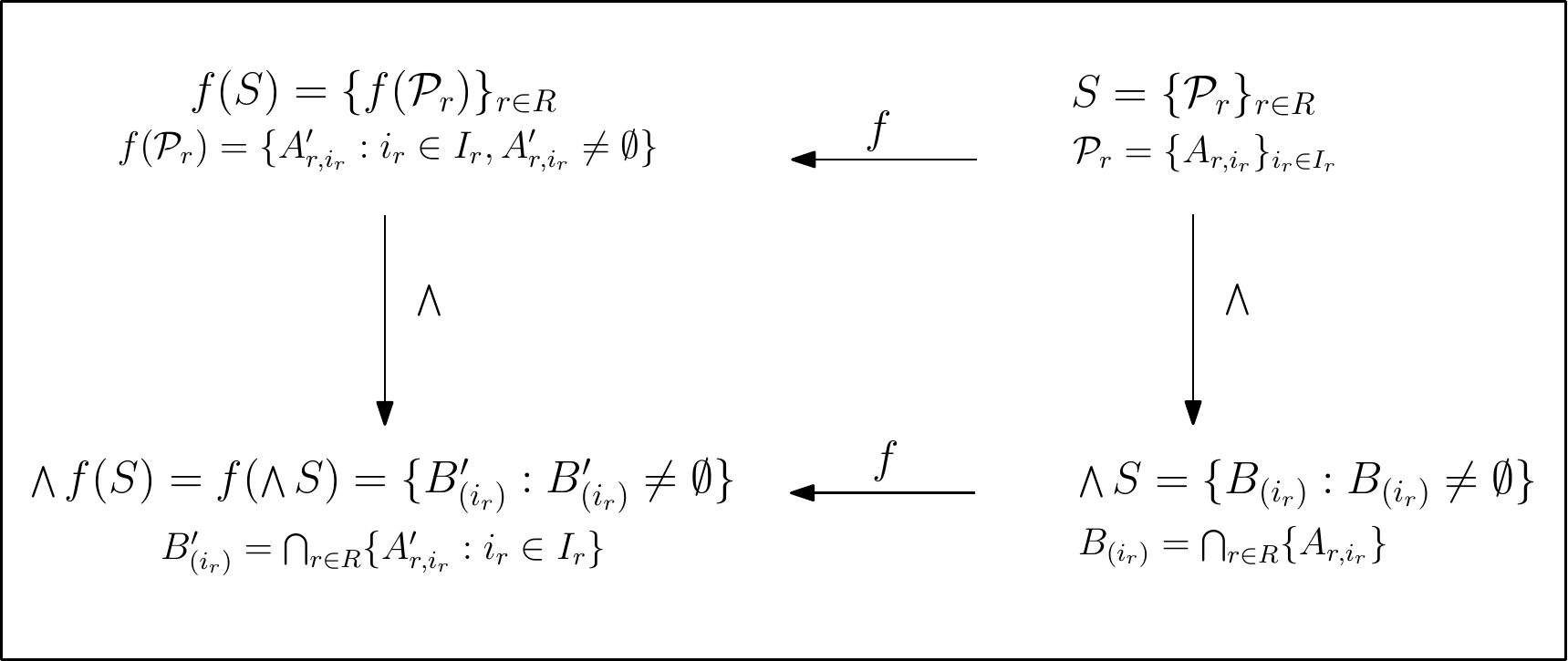}
  \caption{Commutativity diagram for Theorem~\ref{thm:continuous}}
\label{fig:commutativity} 
\end{center}
\end{figure}
 \begin{proof}
 If $S=\{\top\}$, $f(\bigwedge S)=\top=\bigwedge f(S)$.  
 Otherwise, we may assume without loss of generality that $\top \notin S$ (else consider $S\setminus \{\top\}$).  

 Let $S=\{\mathcal{P}_r\}_{r\in R}$ be a non-empty collection of congruences, not containing $\top$.  
 For each $r\in R$, let $\mathcal{P}_r=\{A_{r,i_r}\}_{i_r\in I_r}$ for some index set $I_r\neq \emptyset$.
 By Definition~\ref{def:trans-function} of transfer function,  for each $r\in R$ we have:  
 \begin{eqnarray}\label{eqn:fmeetP}
  f(\mathcal{P}_r)&=&\{A'_{r,i_r} : i_r\in I_r, A'_{r,i_r}\neq \emptyset\} \text{, where } A'_{r,i_r}=\{t\in \mathcal{T} :  t[y\leftarrow \beta]\in A_{r,i_r}\}
 \end{eqnarray} 
 Hence $f(S)=\{ f(\mathcal{P}_r) :  r\in R\}$.   
 
 Let $\mathcal{I}$ be the Cartesian product of the index sets of the congruences; i.e., $\mathcal{I}=\prod_{r \in R}{I_r}$. Each element in $\mathcal{I}$ is a sequence $(i_r)_{r \in R}$.
 For each $(i_r)_{r\in R} \in \mathcal{I}$, define \\$B_{(i_r)}= \bigcap_{r\in R} A_{r,i_r}$ and $B'_{(i_r)}= \bigcap_{r\in R} A'_{r,i_r}$.  As shown in the proof of Lemma~\ref{lem:subsetInfimum},
 we have:  
\begin{equation}\label{eqn:meetS}
 \bigwedge S = \{B_{(i_r)} : (i_r) \in \mathcal{I}, B_{(i_r)} \ne \emptyset \}
\end{equation}
\begin{equation}\label{eqn:meetfS}  
 \bigwedge f(S) = \{B'_{(i_r)} : (i_r) \in \mathcal{I}, B'_{(i_r)} \ne \emptyset \}
\end{equation}
From Equation~\ref{eqn:meetS} and Definition~\ref{def:trans-function} of transfer function we write: 
\begin{eqnarray}\label{eqn:fmeetS}
 f(\bigwedge S)= \{D_{(i_r)} : (i_r) \in \mathcal{I}, D_{(i_r)} \ne \emptyset \} \text{ where, } D_{(i_r)} = \{t\in \mathcal{T}: t[y\leftarrow \beta]\in B_{(i_r)}\}
\end{eqnarray}
Comparing Equation~\ref{eqn:meetfS} and Equation~\ref{eqn:fmeetS}, we see that to complete the proof, it is enough to prove that $D_{(i_r)}= B'_{(i_r)}$ for each $(i_r)\in I$ (see 
Figure~\ref{fig:commutativity}).
\begin{eqnarray*}
D_{(i_r)}&=&\{t\in \mathcal{T}: t[y\leftarrow \beta]\in B_{(i_r)}\}\\
         &=&\{t\in \mathcal{T}: t[y\leftarrow \beta]\in \cap_{r\in R} A_{r,i_r}\}\text{ (by definition of $B_{(i_r)}$) } \\
         &=&\{t\in \mathcal{T}: t[y\leftarrow \beta]\in A_{r,i_r}\text{ for each }r\in R\} \\
         &=&\{t\in \mathcal{T}: t\in A'_{r,i_r}\text{ for each }r\in R\} \text{ (by Equation~\ref{eqn:fmeetP}) }\\ 
         &=&\{t\in \mathcal{T}: t\in \cap_{r\in R} A'_{r,i_r}\}\\
         &=&\{t\in \mathcal{T}: t\in B'_{(i_r)}\} \text{ (by definition of $B'_{(i_r)}$) }\\
         &=& B'_{(i_r)}
\end{eqnarray*}

The proof is now complete.  Note that the axiom of choice was used in the definition of the set $\mathcal{I}$.  
\end{proof}
\textbf{Theorem~\ref{thm:non-det-trans}. }
If $\mathcal{P}$ is a congruence, then for any $y\in X$, $\ntrans$ is a congruence.  
\begin{proof}
We write $f(\mathcal{P})$ instead of $\ntrans$ to avoid cumbersome notation.  
First we show that $\confp$ is an equivalence relation, thereby establishing that $f(\mathcal{P})$ is a partition of $\mathcal{T}$.    
Reflexivity, symmetry and transitivity of $\confp$ is clear from Definition~\ref{def:non-det-trans}.  Moreover, each 
equivalence class in $\ntrans$ is a subset of some equivalence class in the congruence $\mathcal{P}$. Consequently, 
\emph{ Condition (1) } and \emph{ Condition (3) } in the definition of congruence (Definition~\ref{def:cong}) will be satisfied by $f(\mathcal{P})$.  

It suffices to show that $\ntrans$ satisfies Condition (2) of Definition~\ref{def:cong}.  \\
\emph{Condition (2)}: we need to prove that for all $t,t',s,s'\in \mathcal{T}$, $t'\confp t$ and $s'\confp s$ if and only if  $t'+s'\confp t+s$.  We have:
\begin{eqnarray*}
t'+s'&\confp& t+s\\
&\iff& (t'+s')\conp (t+s) \text{ and, for every } \beta \in \mathcal{T}\setminus \mathcal{T}(y),\\
&&(t+s)[y\leftarrow \beta]\conp (t'+s')[y\leftarrow \beta]\text{ (by definition of $\ntrans$) }\\
&\iff& (t'+s')\conp (t+s) \text{ and, for every } \beta \in \mathcal{T}\setminus \mathcal{T}(y),\\
&&(t[y\leftarrow \beta]+s[y\leftarrow \beta])\conp (t'[y\leftarrow \beta]+s'[y\leftarrow \beta])\\
&\iff& (t' \conp t) \text{ and } (s' \conp s) \text{ and, for every } \beta \in \mathcal{T}\setminus \mathcal{T}(y),\\
&&(t[y\leftarrow \beta] \conp t'[y\leftarrow \beta]) \text{ and } s[y\leftarrow \beta]\conp s'[y\leftarrow \beta])\text{ (by Definition~\ref{def:cong}) }\\
&\iff& (t' \confp t) \text{ and } (s' \confp s) \text{ (by definition of $\ntrans$) }\\
\end{eqnarray*}
The proof of the theorem is complete. 
\end{proof}
\textbf{Theorem~\ref{thm:ntrans-char}. }
If $\mathcal{P}$ is a congruence, then for any $y\in X$, \\
\indent $\ntrans=\mathcal{P}\wedge \left( \bigwedge_{\beta \in \overline{\mathcal{T}}(y)} \trans \right)$.
\begin{proof}
Let $\mathcal{Q}=\mathcal{P}\wedge \left( \bigwedge_{\beta \in \overline{\mathcal{T}}(y)} \trans \right)$.
It follows from condition (1) of Definition~\ref{def:non-det-trans} and Theorem~\ref{thm:non-det-trans} that $\ntrans$ is a congruence, and a refinement of $\mathcal{P}$.
That is, $\ntrans \preceq \mathcal{P}$.  
From condition (2) of Definition~\ref{def:non-det-trans}, we have,  
\begin{eqnarray*}
 t\nconfp t' &\implies& \text{ for every }\beta\in \mathcal{T}\setminus \mathcal{T}(y)\text{, }t[y\leftarrow \beta]\conp t'[y\leftarrow \beta] \\
 &\iff& \text{ for every }\beta\in \mathcal{T}\setminus \mathcal{T}(y)\text{, }t \cong_{_{f_{_{y=\beta}}(\mathcal{P})}}  t' \text{ (by Observation~\ref{obs:trans})} \\
\end{eqnarray*}
Hence for each $\beta\in \mathcal{T}\setminus \mathcal{T}(y)$, $\ntrans \preceq \trans$.  It follows from the definition of $\bigwedge$ that  
$\ntrans \preceq  \bigwedge_{\beta \in \overline{\mathcal{T}}(y)} \trans$ . Combining this with the fact that $\ntrans \preceq \mathcal{P}$ and using the definition
of $\wedge$, we get $\ntrans \preceq Q$.

Conversely let $t,t'\in \mathcal{T}$ such that $t\conq t'$.
Then by definition of $\mathcal{Q}$, we have $t\conp t'$ and, for all $\beta \in \mathcal{T}\setminus \mathcal{T}(y)$ $t \cong_{_{f_{_{y=\beta}}(\mathcal{P})}} t'$.  
By the definition of $\ntrans$ (Definition~\ref{def:non-det-trans}), we have $t\nconfp t'$.  Consequently $Q\preceq \ntrans$.  
 
Hence, $\ntrans=\mathcal{Q}$.  Note that the proof involves use of the  axiom of choice in assuming existence of $\bigwedge$.  
\end{proof}
\textbf{Theorem~\ref{thm:ntrans-cts}. }
 For any $\emptyset \neq S \subseteq \gtbar$, ${f}_{_{y=*}}(\bigwedge S)=\bigwedge {f}_{_{y=*}}(S)$, where\\  ${f}_{_{y=*}}(S)=\{f_{_{y=*}}(s): s\in S\}$.
\begin{proof}
To avoid cumbersome notation, we will write $f$ for $f_{_{y=*}}$. 
 If $S=\{\top\}$, $f(\bigwedge S)=\top=\bigwedge f(S)$.  
 Otherwise, we may assume without loss of generality that $\top \notin S$ (else consider $S\setminus \{\top\}$).  
 
 Let $S=\{\mathcal{P}_r\}_{r\in R}$ be a non-empty collection of congruences, not containing $\top$. 
 We denote by $\bigwedge_{r\in R} \mathcal{P}_r$ the congruence $\bigwedge \{\mathcal{P}_r: r\in R\}$ and 
 $\bigwedge_{r\in R}f(\mathcal{P}_r)$ for the congruence $\bigwedge \{ f(\mathcal{P}_r): r\in R\}$.
 
 \begin{eqnarray*}
 f(\bigwedge S)&=& f(\bigwedge_{r\in R} \mathcal{P}_r)\\
               &=& (\bigwedge_{r\in R} \mathcal{P}_r) \wedge ( \bigwedge_{\beta \in \overline{\mathcal{T}}(y)}f(\bigwedge_{r\in R} \mathcal{P}_r) )  \text{ (by Theorem~\ref{thm:ntrans-char}) } \\ 
               &=& (\bigwedge_{r\in R} \mathcal{P}_r) \wedge ( \bigwedge_{\beta \in \overline{\mathcal{T}}(y)}\bigwedge_{r\in R}\transr ) \text{ (by continuity of $f_{y=\beta}$) } \\    
               &=& (\bigwedge_{r\in R} \mathcal{P}_r) \wedge ( \bigwedge_{r\in R} \bigwedge_{\beta \in \overline{\mathcal{T}}(y)}   \transr ) \text{ (by properties of $\bigwedge$) } \\  
               &=& \bigwedge_{r\in R} ( \mathcal{P}_r \wedge  \bigwedge_{\beta \in \overline{\mathcal{T}}(y)}  \transr ) \text{ (by properties of $\bigwedge$) } \\     
               &=& \bigwedge_{r\in R} f(\mathcal{P}_r)  \text{ (by Theorem~\ref{thm:ntrans-char}) } \\
               &=& \bigwedge f(S) 
 \end{eqnarray*}
Note that the axiom of choice was implicitly used in the proof.  
 \end{proof}
\textbf{Lemma~\ref{lem:nondet-constant}. }
 Let $\mathcal{P}\in \gt$.  Let $t,t'\in \mathcal{T}$ and $c_1,c_2\in C$ with $c_1\neq c_2$.  Then 
 $t[y\leftarrow c_1]\conp t'[y\leftarrow c_1]$ and $t[y\leftarrow c_2]\conp t'[y\leftarrow c_2]$ if and only if for every $\beta\in \overline{\mathcal{T}}(y)$,
 $t[y\leftarrow \beta]\conp t'[y\leftarrow \beta]$.  
\begin{proof}
 One direction is trivial.  For the other, assume that $\mathcal{P}\in \gt$ and $c_1,c_2 \in C$, $c_1\neq c_2$.  If $t=t'$ or $t,t'\notin \mathcal{T}(y)$,
 the lemma holds true trivially.  Hence we assume that $t\in \mathcal{T}(y)$ and $t'\neq t$.  If $t'\notin \mathcal{T}(y)$,  Then our assumption leads to 
 $t[y\leftarrow c_1]\conp t'[y\leftarrow c_1]=t'=t'[y\leftarrow c_2]\conp t[y\leftarrow c_2]$.  This is impossible by Lemma~\ref{lem:const1}.  
 Hence we may assume that $t,t'\in \mathcal{T}(y)$. 
 
 If $t=y$, then by Observation~\ref{obs:const2}, $t[y\leftarrow c_1]=c_1\ncong_{_{\mathcal{P}}} t'[y\leftarrow c_1]$ for any $t'\in \mathcal{T}(y)$, $t'\neq y$.  
 Hence we may assume that $t\neq y$.  Similarly, we may assume that $t'\neq y$ as well.   Hence the lemma holds true whenever at least
 one among  $t,t'$ is in $X\cup C$.   We now proceed by induction.  
 
 Suppose $t=t_1+t_2$ and $t'=t'_1+t'_2$. We have:  
\begin{eqnarray*}\label{eqn:const1}
t[y\leftarrow c_1]\conp t'[y\leftarrow c_1] &\iff& t_1[y\leftarrow c_1]+t_2[y\leftarrow c_1] \conp t'_1[y\leftarrow c_1]+t'_2[y\leftarrow c_1] \\
 &\iff& t_1[y\leftarrow c_1] \conp t'_1[y\leftarrow c_1] \text{ and } t_2[y\leftarrow c_1] \conp t'_2[y\leftarrow c_1]\\
 &&\text{ (By condition (2) of Definition~\ref{def:cong})}
\end{eqnarray*}
Similarly,
\begin{eqnarray*}\label{eqn:const2}
t[y\leftarrow c_2]\conp t'[y\leftarrow c_2] &\iff&t_1[y\leftarrow c_2] \conp t'_1[y\leftarrow c_2] \text{ and } t_2[y\leftarrow c_2] \conp t'_2[y\leftarrow c_2]
\end{eqnarray*}
We may assume as induction hypothesis that:  
 \begin{eqnarray*}
 t_1[y\leftarrow c_1] \conp t'_1[y\leftarrow c_1] &\text{ and }& t_1[y\leftarrow c_2] \conp t'_1[y\leftarrow c_2] \\
 &\iff&t_1[y\leftarrow \beta]\conp t'_1[y\leftarrow \beta] \text{, for every }\beta\in \overline{\mathcal{T}}(y)
\end{eqnarray*}
\begin{eqnarray*}
 t_2[y\leftarrow c_1] \conp t'_2[y\leftarrow c_1] &\text{ and }& t_2[y\leftarrow c_2] \conp t'_2[y\leftarrow c_2] \\
 &\iff&t_2[y\leftarrow \beta]\conp t'_2[y\leftarrow \beta] \text{, for every }\beta\in \overline{\mathcal{T}}(y)
\end{eqnarray*}
Since the left sides of the above equivalences hold by assumption, we have:  
\begin{eqnarray*}
 t_1[y\leftarrow \beta]\conp t'_1[y\leftarrow \beta] \text{ and }
 t_2[y\leftarrow \beta]\conp t'_2[y\leftarrow \beta] \text{, for every }\beta\in \overline{\mathcal{T}}(y)
\end{eqnarray*}
Hence, by condition (3) of Definition~\ref{def:cong}, 
$$t[y\leftarrow \beta]\conp t'[y\leftarrow \beta] \text{, for every }\beta\in \overline{\mathcal{T}}(y)$$
\end{proof}
\begin{lemma}\label{lem:const1}
 Let $t\in \mathcal{T}(y)$ and $c_1,c_2\in C$, $c_1\neq c_2$.  Let $\mathcal{P}$ be a congruence.  Then, $t[y\leftarrow c_1]\ncong_{_{\mathcal{P}}} t[y\leftarrow c_2]$.  
 \end{lemma}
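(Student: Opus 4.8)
The plan is to prove the statement by structural induction on the term $t\in\mathcal{T}(y)$, exploiting that every term is built from constants, variables, and the single binary operator $+$ according to the grammar in Section~\ref{sec:terminology}. The two congruence axioms that do the work are condition (1) of Definition~\ref{def:cong} (distinct constants are never congruent) at the leaves, and condition (2) (congruences respect $+$) at the internal nodes.

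For the base case, since $t$ contains the variable $y$, the only leaf possibility is $t=y$: a constant leaf contains no variable, and a variable leaf containing $y$ must equal $y$. Then $t[y\leftarrow c_1]=c_1$ and $t[y\leftarrow c_2]=c_2$ by Definition~\ref{def:substitution}, and since $c_1\neq c_2$, condition (1) of Definition~\ref{def:cong} gives $c_1\ncong_{_{\mathcal{P}}} c_2$, i.e. $t[y\leftarrow c_1]\ncong_{_{\mathcal{P}}} t[y\leftarrow c_2]$, as required.

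For the inductive step, suppose $t=t_1+t_2$. Since $y$ appears in $t$, it appears in $t_1$ or in $t_2$; without loss of generality assume $y$ appears in $t_1$. By Definition~\ref{def:substitution}, $t[y\leftarrow c_r]=t_1[y\leftarrow c_r]+t_2[y\leftarrow c_r]$ for $r\in\{1,2\}$. I would argue by contradiction: assuming $t[y\leftarrow c_1]\conp t[y\leftarrow c_2]$, condition (2) of Definition~\ref{def:cong}, applied in the direction from congruence of the sums to congruence of the summands, forces $t_1[y\leftarrow c_1]\conp t_1[y\leftarrow c_2]$ and $t_2[y\leftarrow c_1]\conp t_2[y\leftarrow c_2]$. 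But $t_1\in\mathcal{T}(y)$, so the induction hypothesis gives $t_1[y\leftarrow c_1]\ncong_{_{\mathcal{P}}} t_1[y\leftarrow c_2]$, contradicting the first conjunct. Hence $t[y\leftarrow c_1]\ncong_{_{\mathcal{P}}} t[y\leftarrow c_2]$, completing the induction.

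I expect the only subtlety to be the correct invocation of condition (2): the argument hinges on its forward direction, namely that congruence of $t_1[y\leftarrow c_1]+t_2[y\leftarrow c_1]$ with $t_1[y\leftarrow c_2]+t_2[y\leftarrow c_2]$ forces component-wise congruence of the operands. The case where $y$ occurs only in $t_1$ is handled smoothly because $t_2[y\leftarrow c_1]=t_2=t_2[y\leftarrow c_2]$ makes the second conjunct hold by reflexivity; it therefore suffices to refute the conjunct corresponding to whichever subterm actually contains $y$, and no appeal to the induction hypothesis on the $y$-free subterm is needed. Everything else reduces to routine applications of the substitution rules in Definition~\ref{def:substitution}.
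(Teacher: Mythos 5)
Your proof is correct and follows essentially the same route as the paper: structural induction on $t$, with the base case $t=y$ settled by condition (1) of Definition~\ref{def:cong} and the inductive step by the ``sums congruent implies summands congruent'' direction of condition (2). The only cosmetic difference is that you phrase the inductive step as a proof by contradiction where the paper applies the contrapositive of condition (2) directly; the underlying argument is identical.
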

\begin{proof}
 If $t=y$, the lemma holds by condition (1) of the definition of congruence (Definition~\ref{def:cong}).  Otherwise, we proceed by induction.   
 Let $t=t_1+t_2$.  Since $t\in \mathcal{T}(y)$, either $t_1\in \mathcal{T}(y)$ or  $t_2\in \mathcal{T}(y)$.  Without loss of generality, we assume that  $t_1\in \mathcal{T}(y)$.  
 We have by induction hypothesis:
 \begin{equation}
  t_1[y\leftarrow c_1] \ncong_{_{\mathcal{P}}} t_1[y\leftarrow c_2]
 \end{equation}
Hence we have:
\begin{eqnarray*}
 t[y\leftarrow c_1]&=& t_1[y\leftarrow c_1]+t_2[y\leftarrow c_1]\\
                   &\ncong_{_{\mathcal{P}}}&  t_1[y\leftarrow c_2]+t_2[y\leftarrow c_2] \text{ (by condition (2) of Definition~\ref{def:cong}) }\\
                   &=& t[y\leftarrow c_2]                   
 \end{eqnarray*}

\end{proof}
\textbf{Theorem~\ref{thm:nondet-const}. }
Let $\mathcal{P}\in \gt$ and let $c_1,c_2\in C$, $c_1\neq c_2$.  Then, for any $y\in X$,\\ 
$\ntrans=\mathcal{P}\wedge f_{_{y\leftarrow c_1}}(\mathcal{P}) \wedge f_{_{y\leftarrow c_2}}(\mathcal{P})$
\begin{proof}
 Let $\mathcal{Q}=\mathcal{P}\wedge f_{_{y\leftarrow c_1}}(\mathcal{P}) \wedge f_{_{y\leftarrow c_2}}(\mathcal{P})$ and let  
 $\mathcal{Q}'=\mathcal{P}\wedge \left( \bigwedge_{\beta \in \overline{\mathcal{T}}(y)} \trans \right)$
 We have:   
\begin{eqnarray*}
t\conq t' &\iff& t\conp t' \text{ and }t \concone t' \text{ and } t \conctwo t' \text{ (by definition of confluence) }\\
           &\iff& t\conp t'\text{ and } t[y\leftarrow c_1]\conp t'[y\leftarrow c_1] \text{ and } t[y\leftarrow c_2]\conp t'[y\leftarrow c_2]\\
           &&\text{ (by Observation~\ref{obs:trans}) } \\
           &\iff & t\conp t' \text{ and } \forall \beta\in \overline{\mathcal{T}}(y)\text{ } t[y\leftarrow \beta]\conp t'[y\leftarrow \beta] \text{ (by Lemma~\ref{lem:nondet-constant}) } \\
           &\iff & t\conp t' \text{ and } \forall \beta\in \overline{\mathcal{T}}(y)\text{ } t \cong_{_{f_{_{y=\beta}}(\mathcal{P})}}  t' \text{ (by Observation~\ref{obs:trans})}\\
           &\iff & t\cong_{\mathcal{Q}'} t' \text{ (by definition of confluence)  } 
\end{eqnarray*}
\end{proof}
\section{Lattice Properties}
Certain standard facts from lattice theory used in the paper are proved here for easy reference.  

In the following,  $(L,\leq, \bot, \top)$ denotes a lattice with greatest element $\top$ and least element $\bot$. (We will refer to the lattice simply as $L$ when there is not scope for confusion).  
For $a,b\in L$, $a\wedge b$ (respectively $a\vee b$) denotes the meet (respectively join) of $a$ and $b$.   For $S\subseteq L$,   $\bigwedge S$ (respectively $\bigvee S$)
denotes the greatest lower bound (respectively least upper bound) of the set $S$ whenever it exists. The lattice $L$ is meet complete if  
$\bigwedge S$ exists for every subset $S$ of $L$.  In particular, $\bigwedge \emptyset = \top$ and $\bigwedge L=\bot$. A lattice $L$ is a complete lattice if $\bigwedge S$ 
and $\bigvee S$ exists for every $S \subseteq L$.
\begin{theorem}\label{Athm:complete}
If $(L,\leq,\bot, \top)$  is meet complete, then it is a complete lattice.   
\end{theorem}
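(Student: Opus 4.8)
The plan is to establish the one missing half of completeness: since meet completeness already guarantees that every subset has a greatest lower bound, I only need to produce a least upper bound for an arbitrary $S \subseteq L$. The standard device is to build the join of $S$ out of meets, by taking the infimum of the set of \emph{upper} bounds of $S$.

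Concretely, fix an arbitrary $S \subseteq L$ and let $U = \{u \in L : s \leq u \text{ for all } s \in S\}$ be the set of all upper bounds of $S$. First I would observe that $U$ is non-empty, since the top element $\top$ dominates every element and hence lies in $U$. (Even were $U$ empty, meet completeness supplies $\bigwedge \emptyset = \top$, so the construction never fails.) By meet completeness, $m := \bigwedge U$ exists in $L$. I claim $m = \bigvee S$.

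The verification splits into two steps. For the first, I would show $m$ is itself an upper bound of $S$: each $s \in S$ is, by the very definition of $U$, a lower bound of the set $U$; since $m$ is the \emph{greatest} lower bound of $U$, it follows that $s \leq m$ for every $s \in S$, so $m \in U$. For the second, I would show $m$ is the \emph{least} upper bound: if $v$ is any upper bound of $S$, then $v \in U$, and because $m$ is a lower bound of $U$ we get $m \leq v$. Thus $m$ is below every upper bound of $S$ while being one itself, which is exactly the defining property of $\bigvee S$. Consequently $\bigvee S$ exists for every $S$, and together with the hypothesis that every $\bigwedge S$ exists, this shows $L$ is a complete lattice.

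There is no real obstacle here beyond the conceptual inversion at the heart of the argument: one must resist computing the join directly and instead recognize that the meet of the upper-bound set does the work. The only point demanding a little care is the direction-reversing interplay between ``$s$ is a lower bound of $U$'' and ``$m$ is the greatest lower bound of $U$,'' which is what converts the infimum of $U$ into an upper bound of $S$; getting the inequalities pointing the correct way is the sole place a slip could occur.
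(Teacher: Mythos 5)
Your proof is correct and follows exactly the paper's own argument: the paper likewise defines the set $T=\{t\in L : s\leq t \text{ for all } s\in S\}$ of upper bounds of $S$, takes $t_0=\bigwedge T$ via meet completeness, and verifies $t_0$ is the least upper bound by the same two-step check you give. Your added remarks (that $\top\in U$, and that $\bigwedge\emptyset=\top$ covers the degenerate case) are harmless refinements the paper leaves implicit.
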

\begin{proof}
It suffices to prove that for each $S\subseteq L$, $\bigvee S$ exists.  Let $S$ be a subset of
 $L$.  Let $T=\{t \in L:  s\leq t \text{ for each } s \in S\}$.  Since $L$ is meet complete there exists $t_0\in L$ such that $t_0=\bigwedge T$.  We claim
 that $t_0$ is the least upper bound of $S$.  By definitions of the set $T$, for each $t\in T$ and $s\in S$, $s\leq t$.  Hence, by the definition of meet, 
 $s\leq t_0$ for each $s\in S$.   Consequently $t_0$ is an upper bound to $S$.   Next, if any $t\in L$ satisfies $s\leq t$ for each $s\in S$,  then by the definition
 of $T$, $t\in T$. Hence $t_0\leq t$ (by the definition of $t_0$).  Thus $t_0$ is the least upper bound of $S$.  
\end{proof}

The following property of the meet operation will be frequently used.  

\begin{lemma}\label{Alem:meet-union}
If $(L,\leq,\bot,\top)$ is a complete lattice.  Let $\{S_i\}_{i\in I}$ be subsets of $L$. Then, $\bigwedge (\bigcup_{i\in I} S_i)=\bigwedge \{ \bigwedge S_i : i\in I\}$.  
\end{lemma}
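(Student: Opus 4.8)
The plan is to show directly that the element $m := \bigwedge\{\bigwedge S_i : i\in I\}$ satisfies the two defining properties of the greatest lower bound of $U := \bigcup_{i\in I} S_i$: that $m$ is a lower bound of $U$, and that $m$ dominates every lower bound of $U$. Both infima exist since $L$ is a complete lattice, so no existence issues arise, and by uniqueness of greatest lower bounds this establishes $\bigwedge U = m$, which is exactly the claimed identity. The conceptual core is that $U$ and the set $T := \{\bigwedge S_i : i\in I\}$ have precisely the same collection of lower bounds, and hence the same infimum.

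First I would verify that $m$ is a lower bound of $U$. Given any $s\in U$, choose $i_0\in I$ with $s\in S_{i_0}$. Then $\bigwedge S_{i_0}\leq s$ by the definition of the meet of $S_{i_0}$, while $m\leq \bigwedge S_{i_0}$ since $m$ is a lower bound of the set $T$; transitivity yields $m\leq s$. As $s$ was arbitrary, $m$ is a lower bound of $U$.

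Next I would show $m$ is the greatest such lower bound. Let $x\in L$ be an arbitrary lower bound of $U$. Since $S_i\subseteq U$ for each $i\in I$, the element $x$ is a lower bound of each $S_i$ individually, and therefore $x\leq \bigwedge S_i$ by the universal property of $\bigwedge S_i$. Thus $x$ is a lower bound of $T$, whence $x\leq \bigwedge T = m$. Combining the two steps shows that $m$ is the greatest lower bound of $U$, i.e.\ $\bigwedge U = \bigwedge\{\bigwedge S_i : i\in I\}$.

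The argument is entirely routine, and there is no genuine obstacle; the only point requiring mild care is the degenerate cases, which I would handle uniformly rather than as separate branches. If some $S_i=\emptyset$ then $\bigwedge S_i=\top$ by the convention $\bigwedge\emptyset=\top$ (and every $x$ is vacuously a lower bound of $\emptyset$, consistent with $x\leq\top$), and if $I=\emptyset$ then both $U$ and $T$ are empty so both sides equal $\top$. I would note this convention explicitly so that the two displayed steps above remain valid verbatim in all cases.
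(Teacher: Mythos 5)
Your proof is correct and takes essentially the same route as the paper: both show that $\bigwedge\{\bigwedge S_i : i\in I\}$ is a lower bound of $\bigcup_{i\in I} S_i$ and that it dominates every lower bound, hence equals the infimum. Your explicit treatment of the degenerate cases ($S_i=\emptyset$ or $I=\emptyset$ via the convention $\bigwedge\emptyset=\top$) is a small addition the paper omits, but it does not change the argument.
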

\begin{proof}
For each $i\in I$, let $\alpha_i=\bigwedge S_i$ and $\alpha=\bigwedge \{\alpha_i : i\in I\}$.  Let $S=\bigcup_{i\in I} S_i$. We have to prove that $\alpha = \bigwedge S$.    
Since $\alpha\leq \alpha_i$ for each $i\in I$, $\alpha \leq s$ for each
$s\in S_i$ for each $i\in I$.  Hence $\alpha$ is a lower bound to $S$.  If $\beta \in L$ satisfies $\beta \leq s$ for each $s\in S$, then $\beta \leq s_i$ for each $s_i\in S_i$
for all $i\in I$.  Hence $\beta \leq \alpha_i$ for each $i\in I$.  Consequently, $\beta \leq \alpha$.  
\end{proof}

We next define the Cartesian product of lattices.

\begin{definition}[Product Lattice]\label{Adef:Prod_lattice}  
Let $(L,\leq,\bot, \top)$ be a lattice and $n$ a positive integer.  The product lattice, $(L^{n},\leq_n,\bot_n,\top_n)$ is defined as follows: 
for $\overline{a}=(a_1, a_2,\ldots, a_n)$, 
$\overline{b}=(b_1, b_2,\ldots, b_n) \in L^{n}$, $\overline{a}\leq_n \overline{b}$ if $a_i\leq b_i$ for each $1\leq i\leq n$, 
$\bot_n=(\bot, \bot, \ldots,\bot)$ and $\top_n=(\top,\top, \ldots, \top)$.
\end{definition}

It is easy to see that $L^{n}$ is a lattice with the meet of $\overline{a}$ and $\overline{b}$ given by $(a_1\wedge b_1, a_2\wedge b_2,\ldots, a_n\wedge b_n)$.  
The meet of $\overline{a}$ and $\overline{b}$ will be denoted by  $\overline{a}\wedge_n \overline{b}$.  Similarly,
the join of $\overline{a}$ and $\overline{b}$ is $\overline{a}\vee_n \overline{b}=(a_1\vee b_1, a_2\vee b_2,\ldots, a_n\vee b_n)$.  For $\tilde{S}\subseteq L^{n}$, the notation
$\bigwedge_n \tilde{S}$ (respectively $\bigvee_n \tilde{S}$) denotes the greatest lower bound (respectively least upper bound) of $\tilde{S}$ in $L^{n}$ whenever it exists.  

\begin{theorem}\label{Athm:prod_complete}
If $(L,\leq,\bot,\top)$ is a complete lattice, then  $(L^{n},\leq_n,\bot_n,\top_n)$ is a complete lattice. 
\end{theorem}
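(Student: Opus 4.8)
The plan is to reduce the statement to the meet-completeness criterion established in Theorem~\ref{Athm:complete}. By that theorem, it suffices to show that $(L^{n},\leq_n,\bot_n,\top_n)$ is meet complete, i.e., that every subset $\tilde{S}\subseteq L^{n}$ has a greatest lower bound in $L^{n}$. Since $L^{n}$ is already known to be a lattice with coordinatewise meet $\wedge_n$ (see Definition~\ref{Adef:Prod_lattice}), completeness is the only property left to verify, and the natural candidate for the infimum is the coordinatewise one.

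First I would fix an arbitrary $\tilde{S}\subseteq L^{n}$ and, for each coordinate $i\in\{1,\ldots,n\}$, form the projected set $S_i=\{a_i : (a_1,\ldots,a_n)\in \tilde{S}\}\subseteq L$. Because $L$ is a complete lattice, each $S_i$ has a greatest lower bound $m_i=\bigwedge S_i$ in $L$. I would then set $\overline{m}=(m_1,m_2,\ldots,m_n)$ and claim that $\overline{m}=\bigwedge_n \tilde{S}$.

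Next I would verify the two defining properties of the greatest lower bound. For the lower bound property: for any $\overline{a}=(a_1,\ldots,a_n)\in \tilde{S}$ and each $i$, we have $m_i=\bigwedge S_i\leq a_i$ since $a_i\in S_i$; hence $\overline{m}\leq_n \overline{a}$ by the definition of $\leq_n$. For maximality: if $\overline{b}=(b_1,\ldots,b_n)$ satisfies $\overline{b}\leq_n \overline{a}$ for every $\overline{a}\in \tilde{S}$, then for each $i$ the element $b_i$ is a lower bound of $S_i$, so $b_i\leq \bigwedge S_i=m_i$ by the defining property of the greatest lower bound in $L$; hence $\overline{b}\leq_n \overline{m}$. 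Thus $\overline{m}$ is indeed the greatest lower bound of $\tilde{S}$ in $L^{n}$.

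Since $\tilde{S}$ was arbitrary, $L^{n}$ is meet complete, and Theorem~\ref{Athm:complete} then yields that $L^{n}$ is a complete lattice. The empty set is handled uniformly by the same argument: when $\tilde{S}=\emptyset$, each $S_i=\emptyset$, so $m_i=\bigwedge\emptyset=\top$ and $\overline{m}=\top_n$, consistent with $\top_n$ being the required infimum of the empty set. I do not anticipate a genuine obstacle here; the proof is a routine coordinatewise verification. The only points requiring care are keeping the per-coordinate reasoning clean and confirming that the argument degenerates correctly on the empty set, so that no appeal to the completeness of $L$ beyond the existence of each $\bigwedge S_i$ is needed.
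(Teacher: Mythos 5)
Your proof is correct, and it follows the same overall strategy as the paper: reduce to meet completeness via Theorem~\ref{Athm:complete} and exhibit the coordinatewise meet as the greatest lower bound. There is one genuine difference worth noting, and it is in your favour. The paper's proof writes ``Let $\tilde{S}=S_1\times S_2\times \cdots \times S_n$'', i.e.\ it silently assumes the given subset of $L^{n}$ is a Cartesian product, which an arbitrary $\tilde{S}\subseteq L^{n}$ need not be (e.g.\ $\{(\bot,\top),(\top,\bot)\}$ in $L^{2}$ is not a product unless it is closed under recombining coordinates). Your version instead takes the coordinate projections $S_i=\pi_i(\tilde{S})$ and verifies directly that $\overline{m}=(\bigwedge S_1,\ldots,\bigwedge S_n)$ is a lower bound of $\tilde{S}$ and dominates every lower bound; this argument is valid for arbitrary subsets, because a tuple $\overline{b}$ is a lower bound of $\tilde{S}$ exactly when each $b_i$ is a lower bound of the projection $S_i$, even though $\tilde{S}$ may be a proper subset of $S_1\times\cdots\times S_n$. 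So your proof repairs a small imprecision in the paper's (the paper's statement and its Corollary~\ref{Acor:prod_complete} are only invoked later for product-shaped sets, so nothing downstream breaks, but the theorem as stated requires your more careful handling). Your treatment of the empty set, with $\bigwedge\emptyset=\top$ in each coordinate yielding $\top_n$, also makes explicit what the paper dismisses as trivial.
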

\begin{proof}
In view of Theorem~\ref{Athm:complete}, it suffices to prove that for each $\tilde{S}\subseteq L^{n}$, $\bigwedge_n \tilde{S}$ exists in $L^{n}$.  If $\tilde{S}=\emptyset$, the proof is
trivial.  
Let $\tilde{S}=S_1\times S_2\times \cdots \times S_n$, where $S_i\subseteq L$ for $1\leq i\leq n$.  We will show that $\bigwedge_n \tilde{S}=(\bigwedge S_1,\bigwedge S_2, \ldots, \bigwedge S_n)$.
Let $\alpha_i=\bigwedge S_i$ for $1\leq i\leq n$. Let $\overline{\alpha}=(\alpha_1,\alpha_2,\ldots, \alpha_n)$.  Since $\alpha_i\leq s_i$ for each $s_i\in S_i$, by the definition of $\leq_n$, 
we have $\overline{\alpha}\leq_n \overline{s}$ for each $\overline{s}\in \tilde{S}$.   Suppose $\overline{\beta}=(\beta_1,\beta_2,\ldots,\beta_n)\in L^{n}$ satisfies 
$\overline{\beta}\leq_n \overline{s}$ for each $\overline{s}\in \tilde{S}$.  Then, by the definition of $\leq_n$, we have $\beta_i\leq s_i$ for each $s_i\in S_i$, $1\leq i\leq n$.  It follows from the
definition of $\alpha_i$ that $\beta_i\leq \alpha_i$ for all $1\leq i\leq n$.  Consequently, $\beta \leq_n \alpha$.  The proof is complete.  
\end{proof}

The proof of Theorem~\ref{Athm:prod_complete} yields the following corollary.  

\begin{corollary}\label{Acor:prod_complete} 
Let $\tilde{S}\subseteq L^{n}$ be non-empty with $\tilde{S}=S_1\times S_2\times \cdots \times S_n$, where $S_i\subseteq L$ for $1\leq i\leq n$.  
Then $\bigwedge_n \tilde{S}=(\bigwedge S_1,\bigwedge S_2, \ldots, \bigwedge S_n)$.
\end{corollary}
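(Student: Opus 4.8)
The plan is to reuse, almost verbatim, the computation already carried out in the proof of Theorem~\ref{Athm:prod_complete}, which established exactly the claimed identity for a product $\tilde{S}=S_1\times S_2\times\cdots\times S_n$. Since $(L,\leq,\bot,\top)$ is a complete lattice, each infimum $\alpha_i=\bigwedge S_i$ exists in $L$; set $\overline{\alpha}=(\alpha_1,\alpha_2,\ldots,\alpha_n)$. First I would verify that $\overline{\alpha}$ is a lower bound of $\tilde{S}$: an arbitrary $\overline{s}\in\tilde{S}$ has $i^{th}$ coordinate $s_i\in S_i$, so $\alpha_i\leq s_i$ for every $i$, whence $\overline{\alpha}\leq_n\overline{s}$ by the definition of $\leq_n$ (Definition~\ref{Adef:Prod_lattice}).

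Next I would show $\overline{\alpha}$ is the \emph{greatest} lower bound. Suppose $\overline{\beta}=(\beta_1,\beta_2,\ldots,\beta_n)$ satisfies $\overline{\beta}\leq_n\overline{s}$ for every $\overline{s}\in\tilde{S}$. Fixing a coordinate $i$ and letting $\overline{s}$ range over tuples whose $i^{th}$ entry is an arbitrary element of $S_i$, I obtain $\beta_i\leq s_i$ for every $s_i\in S_i$. By the definition of $\alpha_i=\bigwedge S_i$ this forces $\beta_i\leq\alpha_i$ for each $i$, i.e. $\overline{\beta}\leq_n\overline{\alpha}$. Combining the two parts gives $\overline{\alpha}=\bigwedge_n\tilde{S}$, as required.

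The only substantive point --- and the reason the statement is phrased as a corollary restricted to non-empty $\tilde{S}$ rather than following from the theorem without comment --- lies in the step where I conclude $\beta_i\leq s_i$ for \emph{all} $s_i\in S_i$: to probe the $i^{th}$ coordinate I must exhibit genuine tuples of $\tilde{S}$ realizing each element of $S_i$ in position $i$, which is possible precisely because every $S_j$ is non-empty (the remaining coordinates can be filled), and non-emptiness of each $S_j$ is equivalent to $\tilde{S}\neq\emptyset$. This is the hypothesis's whole purpose: were $\tilde{S}$ empty because some $S_j=\emptyset$, one would have $\bigwedge_n\tilde{S}=\top_n$ while $(\bigwedge S_1,\ldots,\bigwedge S_n)$ could carry coordinates strictly below $\top$, so the formula would fail. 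I therefore expect no real obstacle here; beyond isolating this non-emptiness remark the corollary is a direct reading of the proof of Theorem~\ref{Athm:prod_complete}.
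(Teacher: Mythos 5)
Your proof is correct and is essentially the paper's own argument: the paper simply notes that the proof of Theorem~\ref{Athm:prod_complete} already establishes this identity (lower bound via $\alpha_i\leq s_i$, greatest lower bound via probing coordinates), and you reproduce that computation faithfully. Your added observation about why non-emptiness of every $S_j$ is needed to realize each $s_i\in S_i$ as the $i^{th}$ coordinate of an actual tuple is a sound clarification of a point the paper leaves implicit, not a different method.
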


Next we define continuous maps between lattices and show that continuous maps are distributive and monotone.   
Let $(L,\leq,\bot,\top)$ and $(L'\leq',\bot',\top')$ be complete lattices.  For arbitrary subsets $S\subseteq L$ and $S'\subseteq L'$, we 
use the notation $\bigwedge S$ and $\bigwedge' S'$ to denote the greatest lower bounds of $S$ and $S'$ in the respective lattices 
$L$ and $L'$.  

\begin{definition}\label{Adef:cts}
Let $(L,\leq,\bot,\top)$ and $(L'\leq',\bot',\top')$ be complete lattices. A function $f: L\rightarrow L'$ is continuous 
if for each $\emptyset \neq S\subseteq L$, $f(\bigwedge S)=\bigwedge' f(S)$, where $f(S)=\{f(s): s\in S\}$.
\end{definition}

\begin{obs}\label{Aobs:monotone}
Let $(L,\leq,\bot,\top)$ and $(L'\leq',\bot',\top')$ be complete lattices. A function $f: L\rightarrow L'$ be continuous.  Let $s,s'\in L$ be chosen arbitrarily. 
Then $f$ satisfies:  
\begin{enumerate}
 \item Distributivity:  $f(s\wedge s')=f(s)\wedge f(s')$.
 \item Monotonicity:  if $s\leq s'$ then $f(s)\leq f(s')$.  
\end{enumerate}
The first property is immediate from the definition of continuity.  For the second, assume that $s\leq s'$. Then $s=s\wedge s'$ and we have  $f(s)=f(s\wedge s')=f(s)\wedge f(s')\leq f(s')$.  
\end{obs}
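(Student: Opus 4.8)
The plan is to obtain both properties as immediate consequences of the definition of continuity (Definition~\ref{Adef:cts}), specialized to the two-element subset $\{s,s'\}\subseteq L$. The only point requiring care is that this set is non-empty, so that the continuity hypothesis actually applies (recall the accompanying remark: the definition deliberately exempts the empty set, precisely so that maps such as constant maps remain continuous).

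For distributivity, I would apply continuity to $S=\{s,s'\}$. Since $S\neq\emptyset$, the definition yields $f(\bigwedge S)=\bigwedge' f(S)$. It then remains to identify both sides with the binary meets: $\bigwedge\{s,s'\}=s\wedge s'$ and $f(S)=\{f(s),f(s')\}$, whence $\bigwedge' f(S)=f(s)\wedge f(s')$. Substituting gives $f(s\wedge s')=f(s)\wedge f(s')$, which is exactly the first claim.

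For monotonicity, I would assume $s\leq s'$ and first record the elementary lattice identity $s\wedge s'=s$, which holds precisely because $s$ is then the greatest lower bound of $\{s,s'\}$. Applying the distributivity just established, $f(s)=f(s\wedge s')=f(s)\wedge f(s')$. Since a meet is always a lower bound of its arguments, $f(s)\wedge f(s')\leq' f(s')$, and hence $f(s)\leq' f(s')$, which completes the argument.

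The proof is entirely routine, so there is no substantial obstacle beyond careful bookkeeping. The one item worth flagging is the non-emptiness of $\{s,s'\}$: without it the continuity hypothesis could not be invoked, and this is exactly the situation the paper's remark on Definition~\ref{Adef:cts} is designed to avoid. Everything else is a direct unwinding of definitions, together with the standard fact that the infimum of a two-element set coincides with the binary meet operation.
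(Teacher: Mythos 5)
Your proposal is correct and follows essentially the same route as the paper: distributivity is obtained directly from the definition of continuity applied to the (non-empty) set $\{s,s'\}$, and monotonicity then follows from the identity $s = s\wedge s'$ together with $f(s)=f(s\wedge s')=f(s)\wedge f(s')\leq' f(s')$. Your explicit remarks about identifying $\bigwedge\{s,s'\}$ with the binary meet and about non-emptiness merely spell out details the paper leaves implicit.
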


We next show that two particular families of maps from $L^{n}$ to $L$ called projection maps and confluence maps are continuous.

\begin{definition}[Projection Maps]\label{Adef:proj_map}
 Let $(L,\leq,\bot,\top)$ be a lattice. For each $i\in \{1,2,\ldots,n\}$,  the projection map to the $i^{th}$ co-ordinate, 
 $\pi_i:L^{n}\mapsto L$ is defined by $\pi_i(s_1,s_2,\ldots,s_n)=s_i$ for any $(s_1,s_2,\ldots,s_n)\in L^{n}$.  
\end{definition}

\begin{definition}[Confluence Maps]\label{Adef:conf_map}
Let $(L,\leq,\bot,\top)$ be a lattice. For each $i,j\in \{1,2,\ldots,n\}$, the confluence map $\pi_{i,j}:L^{n}\mapsto L$ is defined
by $\pi_{i,j}(s_1,s_2,\ldots, s_n)=s_i\wedge s_j$ for any $(s_1,s_2,\ldots, s_n)\in L^{n}$.  
  
\end{definition}

Note that $\pi_{i,j}(\overline{s})=\pi_i(\overline{s})\wedge \pi_j(\overline{s})$ and $\pi_{i,i}(\overline{s})=\pi_i(\overline{s})$ for 
all $\overline{s}\in L^{n}$ and $i,j\in \{1,2,\ldots, n\}$.  Thus projection maps are special instances of confluence maps.

\begin{theorem}\label{Athm:conf_cts}
Projection maps and confluence maps over a complete lattice $(L,\leq,\bot,\top)$ are continuous.
\end{theorem}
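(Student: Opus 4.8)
The plan is to reduce everything to a single coordinatewise description of infima in $L^{n}$ and then dispatch the meet inside the confluence map by a short two-sided inequality. Since $\pi_{i,i}=\pi_i$, projection maps are the special case $i=j$ of confluence maps, so in principle it suffices to treat confluence maps; but the argument factors most cleanly through the projections, so I would establish those first. The first step is to show that for any non-empty $S\subseteq L^{n}$ the greatest lower bound is taken coordinatewise, namely
\[
\textstyle\bigwedge_n S=\big(\bigwedge \pi_1(S),\,\bigwedge \pi_2(S),\,\ldots,\,\bigwedge \pi_n(S)\big),
\]
where $\pi_k(S)=\{\pi_k(\overline{s}):\overline{s}\in S\}$ and each $\bigwedge \pi_k(S)$ exists because $L$ is complete. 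This is a mild generalisation of Corollary~\ref{Acor:prod_complete}, which covers only product sets, and is proved the same way: setting $\alpha_k=\bigwedge \pi_k(S)$ and $\overline{\alpha}=(\alpha_1,\ldots,\alpha_n)$, the inequality $\alpha_k\leq \pi_k(\overline{s})$ for every $\overline{s}\in S$ shows $\overline{\alpha}\leq_n\overline{s}$, so $\overline{\alpha}$ is a lower bound; and any lower bound $\overline{\beta}$ satisfies $\beta_k\leq \pi_k(\overline{s})$ for all $\overline{s}$, hence $\beta_k\leq\alpha_k$ and $\overline{\beta}\leq_n\overline{\alpha}$.

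Reading off the $k^{\text{th}}$ coordinate of this formula gives $\pi_k(\bigwedge_n S)=\bigwedge \pi_k(S)$, which is exactly continuity of the projection map $\pi_k$. For a confluence map $\pi_{i,j}$ and non-empty $S\subseteq L^{n}$, I would then use the identity $\pi_{i,j}(\overline{s})=\pi_i(\overline{s})\wedge \pi_j(\overline{s})$ noted after Definition~\ref{Adef:conf_map} together with the coordinatewise formula to compute
\[
\pi_{i,j}(\textstyle\bigwedge_n S)=\pi_i(\textstyle\bigwedge_n S)\wedge \pi_j(\textstyle\bigwedge_n S)=\big(\textstyle\bigwedge \pi_i(S)\big)\wedge\big(\textstyle\bigwedge \pi_j(S)\big).
\]
What remains is the lattice identity $\big(\bigwedge \pi_i(S)\big)\wedge\big(\bigwedge \pi_j(S)\big)=\bigwedge\{\pi_i(\overline{s})\wedge \pi_j(\overline{s}):\overline{s}\in S\}=\bigwedge \pi_{i,j}(S)$, after which continuity of $\pi_{i,j}$ follows, and the projection case is recovered by taking $i=j$.

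This meet-redistribution identity is the crux, and I would prove it by two inequalities. Writing $\gamma$ for the left side and $\delta=\bigwedge \pi_{i,j}(S)$ for the right, for $\gamma\leq\delta$ observe that for each $\overline{s}\in S$ we have $\bigwedge \pi_i(S)\leq \pi_i(\overline{s})$ and $\bigwedge \pi_j(S)\leq \pi_j(\overline{s})$, so $\gamma\leq \pi_i(\overline{s})\wedge \pi_j(\overline{s})$; thus $\gamma$ is a lower bound of $\pi_{i,j}(S)$ and $\gamma\leq\delta$. For $\delta\leq\gamma$, from $\pi_i(\overline{s})\wedge \pi_j(\overline{s})\leq \pi_i(\overline{s})$ we get $\delta\leq \pi_i(\overline{s})$ for every $\overline{s}$, whence $\delta\leq\bigwedge \pi_i(S)$, and symmetrically $\delta\leq\bigwedge \pi_j(S)$, so $\delta\leq\gamma$. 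The main obstacle is precisely this identity: unlike the projection case, which only rearranges coordinates, here one must commute a binary meet past two arbitrary infima taken over \emph{paired} elements of $S$, and completeness of $L$ is what guarantees that all the infima invoked actually exist. Everything else is routine, and no appeal to the axiom of choice is needed since no product of index sets is formed.
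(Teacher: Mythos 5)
Your proof is correct, and it takes a genuinely different route from the paper's. The paper's proof writes the non-empty set $\tilde{S}\subseteq L^{n}$ as a product $S_1\times S_2\times\cdots\times S_n$, applies Corollary~\ref{Acor:prod_complete} to get $\pi_{i,j}(\bigwedge_n\tilde{S})=\alpha\wedge\beta$ with $\alpha=\bigwedge S_i$ and $\beta=\bigwedge S_j$, and then invokes Lemma~\ref{Alem:meet-lemma}, which computes the infimum of the set of \emph{all} pairwise meets $\{s_i\wedge s_j: s_i\in S_i,\ s_j\in S_j\}$ with the two coordinates varying independently. You instead prove the coordinatewise infimum formula $\bigwedge_n S=\left(\bigwedge\pi_1(S),\ldots,\bigwedge\pi_n(S)\right)$ for an \emph{arbitrary} non-empty $S\subseteq L^{n}$, which immediately gives continuity of the projections, and then establish the \emph{paired} identity $\left(\bigwedge\pi_i(S)\right)\wedge\left(\bigwedge\pi_j(S)\right)=\bigwedge\{\pi_i(\overline{s})\wedge\pi_j(\overline{s}):\overline{s}\in S\}$ by a two-sided inequality; both inequalities check out. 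The distinction is not merely cosmetic: a general subset of $L^{n}$ need not be a product set, so the paper's opening move silently replaces $\tilde{S}$ by its product closure. That replacement is harmless for $\bigwedge_n\tilde{S}$ (the closure has the same lower bounds), but it enlarges $\pi_{i,j}(\tilde{S})$ --- the paired meets form a subset, in general proper, of all pairwise meets --- so the paper leaves implicit the step that the two infima coincide, whereas your argument works with $\pi_{i,j}(S)$ exactly as defined and needs no such bridge. What the paper's route buys is reuse of its already-proved auxiliary results (Corollary~\ref{Acor:prod_complete} and Lemma~\ref{Alem:meet-lemma}); what yours buys is validity for non-product subsets without any extra step, a self-contained argument, and, as you correctly note, no appeal to the axiom of choice since no product of index sets is ever formed.
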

\begin{proof}
 Let $i,j\in\{1,2,\ldots,n\}$.  Let $\emptyset \neq \tilde{S}\subseteq L^{n}$.  We need to prove that $\pi_{i,j}(\bigwedge_n \tilde{S})=\bigwedge \pi_{i,j}(\tilde{S})$.  
 Let $\tilde{S}=S_1\times S_2\times \cdots \times S_n$, $S_i\neq \emptyset$. Let $\alpha=\bigwedge S_i$ and $\beta =\bigwedge S_j$.  
 By Corollary~\ref{Acor:prod_complete} and the definition of $\pi_{i,j}$ we have:
 \begin{eqnarray}
  \pi_{i,j}(\bigwedge_n \tilde{S})&=&\pi_{i,j}(\bigwedge S_1,\bigwedge S_2, \ldots, \bigwedge S_n)=\alpha \wedge \beta \\
  \bigwedge \pi_{i,j}(\tilde{S})&=&\bigwedge \pi_{i,j}(S_1\times S_2\times \cdots \times S_n)=\bigwedge \{s_i\wedge s_j : s_i\in S_i, s_j\in S_j\}
 \end{eqnarray}
 The equality between the left sides of the two equations above follows from Lemma~\ref{Alem:meet-lemma}.
\end{proof}

\begin{lemma}\label{Alem:meet-lemma}
Let $(L,\leq, \bot, \top)$ be a complete lattice.  Let $S_1$ and $S_2$ be non empty subsets of $L$ with $\alpha =\bigwedge S_1$ and $\beta =\bigwedge S_2$.  
Let $S'=\{s_1\wedge s_2: s_1\in S_1, s_2\in S_2\}$.  Then $\bigwedge S'=\alpha \wedge \beta$.
\end{lemma}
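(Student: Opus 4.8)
The plan is to verify directly that $\alpha \wedge \beta$ satisfies the two defining properties of $\bigwedge S'$: namely, that it is a lower bound of $S'$, and that it dominates every other lower bound of $S'$. Both directions reduce to the characterization of the binary meet as the greatest lower bound of a pair, together with the fact that $\alpha$ and $\beta$ are the infima of $S_1$ and $S_2$.

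First I would show that $\alpha \wedge \beta$ is a lower bound of $S'$. Fix an arbitrary element $s_1 \wedge s_2 \in S'$ with $s_1 \in S_1$ and $s_2 \in S_2$. Since $\alpha = \bigwedge S_1$ and $\beta = \bigwedge S_2$, we have $\alpha \leq s_1$ and $\beta \leq s_2$, and hence $\alpha \wedge \beta \leq \alpha \leq s_1$ as well as $\alpha \wedge \beta \leq \beta \leq s_2$. Thus $\alpha \wedge \beta$ is a lower bound of the pair $\{s_1, s_2\}$, and by the definition of $s_1 \wedge s_2$ as the greatest lower bound of that pair, $\alpha \wedge \beta \leq s_1 \wedge s_2$. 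As this element was arbitrary, $\alpha \wedge \beta$ is a lower bound of $S'$.

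Next I would show that $\alpha \wedge \beta$ is the greatest lower bound. Let $\gamma \in L$ satisfy $\gamma \leq s_1 \wedge s_2$ for every $s_1 \in S_1$ and every $s_2 \in S_2$. Since $S_2 \neq \emptyset$, I may fix some $s_2 \in S_2$; then for each $s_1 \in S_1$ we have $\gamma \leq s_1 \wedge s_2 \leq s_1$, so $\gamma$ is a lower bound of $S_1$ and therefore $\gamma \leq \alpha$. Symmetrically, using $S_1 \neq \emptyset$, one obtains $\gamma \leq \beta$. Hence $\gamma \leq \alpha \wedge \beta$. Combining the two parts gives $\bigwedge S' = \alpha \wedge \beta$.

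The argument is entirely routine, so the only point I would be careful to make explicit is the role of the hypothesis that $S_1$ and $S_2$ are nonempty: the step deducing $\gamma \leq s_1$ from $\gamma \leq s_1 \wedge s_2$ requires producing an actual element of the opposing set to discard, and this is exactly where nonemptiness is indispensable. (If, say, $S_2$ were empty, then $S'$ would be empty and $\bigwedge S' = \top$, generally differing from $\alpha \wedge \beta$.) This is the one place where the stated hypothesis cannot be weakened, so I would flag it rather than leave it implicit.
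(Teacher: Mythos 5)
Your proof is correct and follows essentially the same route as the paper's: show $\alpha\wedge\beta$ is a lower bound of $S'$ via $\alpha\wedge\beta\leq\alpha\leq s_1$ and $\alpha\wedge\beta\leq\beta\leq s_2$, then show any lower bound $\gamma$ of $S'$ satisfies $\gamma\leq\alpha$ and $\gamma\leq\beta$, hence $\gamma\leq\alpha\wedge\beta$. Your only addition is making explicit where the nonemptiness of $S_1$ and $S_2$ is used (fixing an element of the opposing set to deduce $\gamma\leq s_1$ from $\gamma\leq s_1\wedge s_2$), a step the paper passes over with ``clearly''; this is a welcome clarification but not a different argument.
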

\begin{proof}
Since $\alpha$ and $\beta$ are lower bounds to $S_1$ and $S_2$,  we have $\alpha \wedge \beta \leq \alpha \leq s_1$ and 
$\alpha\wedge \beta \leq \beta \leq s_2$, for each $s_1\in S_1$ and $s_2\in S_2$.   Consequently, from the definition of meet,  
we have $\alpha\wedge \beta \leq s_1\wedge s_2$ for any $s_1\in S_1$, $s_2\in S_2$
and thus $\alpha\wedge \beta$ is a lower bound to the set $S'$.  Now, if $\gamma \leq s_1\wedge s_2$ for all $s_1\in S_1$ and $s_2\in S_2$, then clearly 
$\gamma \leq s_1$ for all $s_1\in S_1$ and $\gamma \leq s_2$ for all $s_2\in S_2$.  Since $\alpha=\bigwedge S_1$ and $\beta =\bigwedge S_2$, we have $\gamma \leq \alpha$
and $\gamma \leq \beta$.  Consequently, by the definition of meet, $\gamma \leq \alpha\wedge \beta$.  This shows that $\alpha\wedge \beta$ is the greatest lower bound
of the set $S'$.  
\end{proof}

Next we show that continuous maps are closed under composition.  

\begin{theorem}[Composition]\label{Athm:compos}  
Let $(L,\leq,\bot,\top)$, $(L',\leq',\bot',\top')$ and $(L'',\leq'',\bot'',\top'')$ be complete lattices.  Let $f:L\mapsto L'$ and $g:L'\mapsto L''$ be continuous maps.
Then the composition map, $g\circ f: L\mapsto L''$, defined by $(g\circ f)(s)=g(f(s))$ for each $s\in L$, is continuous.
\end{theorem}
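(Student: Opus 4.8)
The plan is to unwind the composition directly against the definition of continuity (Definition~\ref{Adef:cts}) and apply the continuity hypotheses of $f$ and $g$ one after the other. Fix an arbitrary non-empty $S\subseteq L$. The goal is to show $(g\circ f)(\bigwedge S)=\bigwedge'' (g\circ f)(S)$, where $\bigwedge''$ denotes the infimum in $L''$.

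First I would apply continuity of $f$ to rewrite $f(\bigwedge S)=\bigwedge' f(S)$, so that
\begin{equation*}
(g\circ f)(\bigwedge S)=g\bigl(f(\bigwedge S)\bigr)=g\bigl(\textstyle\bigwedge' f(S)\bigr).
\end{equation*}
Next I would apply continuity of $g$ to the subset $f(S)\subseteq L'$, obtaining $g(\bigwedge' f(S))=\bigwedge'' g(f(S))$. Since $g(f(S))=\{g(f(s)):s\in S\}=(g\circ f)(S)$ by definition of the composite map, chaining the two equalities yields $(g\circ f)(\bigwedge S)=\bigwedge''(g\circ f)(S)$, which is exactly the continuity of $g\circ f$.

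The one point that needs a moment's care, and which I regard as the main (though minor) obstacle, is the applicability of the continuity hypothesis for $g$: Definition~\ref{Adef:cts} only guarantees the infimum-preservation identity for \emph{non-empty} subsets. Thus before invoking continuity of $g$ on $f(S)$, I must observe that $f(S)\neq\emptyset$, which holds precisely because $S\neq\emptyset$. This is why the statement restricts attention to non-empty $S$ throughout, and it is the reason the exemption of the empty set in the definition of continuity (noted in the remark following Definition~\ref{def:cts}) does not cause any trouble here. No use of the axiom of choice or of any structural properties of the particular lattices is needed; the argument is purely a two-step substitution.
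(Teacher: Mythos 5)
Your proof is correct and follows essentially the same two-step argument as the paper: rewrite $f(\bigwedge S)$ as $\bigwedge' f(S)$ by continuity of $f$, then apply continuity of $g$ to $f(S)$ and identify $g(f(S))$ with $(g\circ f)(S)$. Your explicit remark that $f(S)\neq\emptyset$ (needed to invoke continuity of $g$ under Definition~\ref{Adef:cts}) is a small point the paper's proof leaves implicit, but otherwise the two arguments coincide.
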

\begin{proof}
Let $\emptyset \neq S\subseteq L$.  We need to prove that $(g\circ f)(\bigwedge S)=\bigwedge (g\circ f)(S)$.  
Let $f(S)=\{f(s): s\in S\}$ and $(g\circ f)(S)=g(f(S))=\{ (g\circ f)(s) : s\in S\}$.    
\begin{eqnarray*}
 (g\circ f)(\bigwedge S) &=& g(f(\bigwedge S))\\
                         &=& g(\bigwedge f(S)) \text{ (By the continuity of $f$) } \\
                         &=& \bigwedge g(f(S)) \text{ (By the continuity of $g$) } \\
                         &=& \bigwedge (g\circ f)(S).
\end{eqnarray*}
\end{proof}

The next theorem shows that $f$ is continuous if and only if each of its component maps are continuous.  We first note the following:

\begin{obs}\label{Aobs:component}
Let $(L,\leq,\bot,\top)$ be a complete lattice. For each $i\in \{1,2,\ldots, n\}$,
let $f_i:L^{n}\mapsto L$ be arbitrary functions.  Then, the function $f:L^{n}\mapsto L^{n}$ defined by 
$f(\overline{s})=(f_1(\overline{s}),f_2(\overline{s}),\ldots, f_n(\overline{s}))$ satisfies $\pi_i\circ f=f_i$ for each $1\leq i\leq n$.
Conversely, for any function  $f:L^{n}\mapsto L^{n}$, the $i^{th}$ component map $f_i:L^{n}\mapsto L$ defined by  
$f_i=\pi_i\circ f$ for each $i\in\{1,2,\ldots, n\}$ satisfies $f_i(\overline{s})=\pi_i(f(\overline{s}))$ for all $\overline{s}\in L^{n}$.  
\end{obs}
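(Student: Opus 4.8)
The plan is to prove both assertions by directly unwinding the definition of the projection maps (Definition~\ref{Adef:proj_map}) together with the definition of function composition. No lattice structure is actually used, so the completeness of $L$ plays no role; the statement is essentially a bookkeeping fact about functions into a finite product, and I would present it as such.

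First I would treat the forward direction. Fix an arbitrary $\overline{s}=(s_1,\ldots,s_n)\in L^{n}$ and an index $i\in\{1,\ldots,n\}$. By the definition of $f$ we have $f(\overline{s})=(f_1(\overline{s}),\ldots,f_n(\overline{s}))$, and applying the projection map $\pi_i$ to this tuple returns its $i^{th}$ coordinate, namely $f_i(\overline{s})$. Hence $(\pi_i\circ f)(\overline{s})=\pi_i(f(\overline{s}))=f_i(\overline{s})$, and since $\overline{s}$ was arbitrary, $\pi_i\circ f=f_i$, as claimed.

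For the converse, I would start from the stipulated definition $f_i=\pi_i\circ f$ and apply it pointwise: for every $\overline{s}\in L^{n}$, the definition of composition gives $f_i(\overline{s})=(\pi_i\circ f)(\overline{s})=\pi_i(f(\overline{s}))$, which is exactly the asserted identity.

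There is no genuine obstacle here: both parts are immediate consequences of the definitions of $\pi_i$ and of composition. The only point requiring the slightest care is to confirm that $\pi_i$ selects precisely the $i^{th}$ coordinate of the output tuple, so that the indexing on the two sides matches; once this is noted, the observation follows with no further work.
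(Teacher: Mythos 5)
Your proof is correct and matches the paper's intent: the paper states this observation without proof, treating it as an immediate consequence of the definitions of $\pi_i$ and composition, which is exactly the definitional unwinding you carry out (including your accurate remark that the lattice structure plays no role). Nothing further is needed.
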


\begin{theorem}\label{Athm:prod_cts}
Let $(L,\leq,\bot,\top)$ be a complete lattice.  The map $f:L^{n}\mapsto L^{n}$ is continuous if and only if $f_i=\pi_i\circ f$ is continuous for each 
$1\leq i\leq n$.  
\end{theorem}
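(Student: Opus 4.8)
The plan is to prove the two implications separately, disposing of the forward direction with the machinery already in place and concentrating the effort on the converse.

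For the forward direction, suppose $f:L^{n}\mapsto L^{n}$ is continuous. By definition the $i^{th}$ component map is $f_i=\pi_i\circ f$. Projection maps are continuous (Theorem~\ref{Athm:conf_cts}, since $\pi_i=\pi_{i,i}$ is a special confluence map) and continuous maps are closed under composition (Theorem~\ref{Athm:compos}); hence $f_i=\pi_i\circ f$ is continuous for each $1\leq i\leq n$. This direction needs no computation.

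For the converse, assume each $f_i=\pi_i\circ f$ is continuous and fix an arbitrary non-empty $S\subseteq L^{n}$; the goal is $f(\bigwedge_n S)=\bigwedge_n f(S)$. The first step is a lemma that upgrades Corollary~\ref{Acor:prod_complete} from product subsets to \emph{arbitrary} subsets: for any non-empty $T\subseteq L^{n}$, the infimum is computed coordinatewise through the projections, namely $\bigwedge_n T=(\bigwedge \pi_1(T),\ldots,\bigwedge \pi_n(T))$, where $\pi_i(T)=\{\pi_i(\overline{t}):\overline{t}\in T\}$. This follows straight from the definition of $\leq_n$: the tuple $\overline{\alpha}=(\bigwedge \pi_i(T))_i$ is a lower bound of $T$ because $\bigwedge \pi_i(T)\leq \pi_i(\overline{t})$ for every $\overline{t}\in T$, and any lower bound $\overline{\beta}$ of $T$ satisfies $\beta_i\leq \pi_i(\overline{t})$ for all $\overline{t}\in T$, whence $\beta_i\leq \bigwedge \pi_i(T)$ and so $\overline{\beta}\leq_n \overline{\alpha}$.

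With this lemma available, I would verify $f(\bigwedge_n S)=\bigwedge_n f(S)$ coordinate by coordinate, using that two tuples in $L^{n}$ coincide precisely when all their coordinates agree. For each $i$, using $f_i=\pi_i\circ f$, the continuity of $f_i$, the identity $\pi_i(f(S))=f_i(S)$, and the lemma applied to $f(S)$:
\begin{eqnarray*}
\pi_i(f(\bigwedge_n S)) &=& f_i(\bigwedge_n S)\\
                        &=& \bigwedge f_i(S) \text{ (by continuity of $f_i$)}\\
                        &=& \bigwedge \pi_i(f(S))\\
                        &=& \pi_i(\bigwedge_n f(S)) \text{ (by the lemma applied to $f(S)$)}.
\end{eqnarray*}
Since the coordinates agree for every $i$, the two tuples are equal, which completes the converse. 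The main obstacle is precisely the coordinatewise-infimum lemma for arbitrary subsets: Corollary~\ref{Acor:prod_complete} is stated only for product sets $S_1\times\cdots\times S_n$, whereas both $S$ and $f(S)$ here are general subsets of $L^{n}$ that need not factor as products. Establishing that lemma directly from the order $\leq_n$ (as sketched) is the crux; once it is in hand, everything reduces to routine bookkeeping with projections and the continuity hypothesis on the component maps.
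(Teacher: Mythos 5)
Your proof is correct, and in the forward direction it coincides exactly with the paper's: projections are continuous by Theorem~\ref{Athm:conf_cts} and compositions of continuous maps are continuous by Theorem~\ref{Athm:compos}. In the converse you diverge in one substantive way. The paper takes an arbitrary non-empty $\tilde{S}\subseteq L^{n}$, writes ``let $\tilde{S}=S_1\times S_2\times\cdots\times S_n$'', and invokes Corollary~\ref{Acor:prod_complete} both for $\bigwedge_n\tilde{S}$ and for $\bigwedge_n f(\tilde{S})$, tacitly identifying $f(\tilde{S})$ with the product $f_1(\tilde{S})\times\cdots\times f_n(\tilde{S})$, of which it is in general only a subset; strictly speaking neither $\tilde{S}$ nor $f(\tilde{S})$ need factor as a product, so the paper's computation silently rests on the fact that replacing a set by the product of its coordinate projections does not change the infimum. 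Your coordinatewise-infimum lemma --- $\bigwedge_n T=(\bigwedge\pi_1(T),\ldots,\bigwedge\pi_n(T))$ for every non-empty $T\subseteq L^{n}$, proved directly from the definition of $\leq_n$ --- establishes exactly this missing fact, and your coordinate-by-coordinate chain $\pi_i(f(\bigwedge_n S))=f_i(\bigwedge_n S)=\bigwedge f_i(S)=\bigwedge\pi_i(f(S))=\pi_i(\bigwedge_n f(S))$ then closes the argument cleanly (note also that you correctly apply continuity of $f_i$ to $S$ itself, so the lemma is needed only on the image side). In short, your route is the paper's argument with the key lemma made explicit and stated for arbitrary subsets: the paper's version is shorter but elides the non-product case, while yours is airtight and arguably the form the appendix should have taken.
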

\begin{proof}
Suppose $f$ is continuous.  Since $\pi_i$ is continuous (by Theorem~\ref{Athm:conf_cts}) for each $1\leq i\leq n$, by Theorem~\ref{Athm:compos}, $\pi_i\circ f$ is continuous.   
This establishes one direction of the theorem.  

Conversely, suppose $f_i$ is continuous for each $i\in \{1,2,\ldots,n\}$.  Let $\emptyset \neq \tilde{S}\subseteq L^{n}$ be chosen arbitrarily.  We need to prove that 
$f(\bigwedge_n \tilde{S})=\bigwedge_n f(\tilde{S})$.  Let $\tilde{S}=S_1\times S_2\times \cdots \times S_n$, where $S_i \subseteq L$ for each $1\leq i\leq n$.  Let $\alpha_i=\bigwedge S_i$
for each $1\leq i\leq n$.  Let $\overline{\alpha} =(\alpha_1,\alpha_2,\ldots, \alpha_n)$.  By Corollary~\ref{Acor:prod_complete} we have:
\begin{eqnarray}\label{Aeqn:alpha}
 \bigwedge_n \tilde{S}=(\bigwedge S_1,\bigwedge S_2,\ldots,\bigwedge S_n)=(\alpha_1,\alpha_2,\ldots, \alpha_n)=\overline{\alpha} 
\end{eqnarray}
Hence, 
\begin{eqnarray}
 f(\bigwedge_n \tilde{S})&=&f(\overline{\alpha})
\end{eqnarray}
Since for each $\overline{s}\in \tilde{S}$, $f(\overline{s})=(f_1(\overline{s}),f_2(\overline{s}),\ldots, f_n(\overline{s}))$, the above equation becomes:  
\begin{eqnarray}\label{Aeqn:LHS-II}
 f(\bigwedge_n \tilde{S})&=&(f_1(\overline{\alpha}),f_2(\overline{\alpha}),\ldots, f_n(\overline{\alpha}))
\end{eqnarray}
Now,
\begin{eqnarray*}
 \bigwedge_n f(\tilde{S})&=& \bigwedge_n \{ (f_1(\overline{s}), f_2(\overline{s}),\ldots, f_n(\overline{s})) : \overline{s}\in \tilde{S} \}\\
                         &=& \bigwedge_n f_1(\tilde{S}) \times f_2(\tilde{S})\cdots \times f_n(\tilde{S}) \text{ (by definition of Cartesian product) }\\
                         &=& (\bigwedge f_1(\tilde{S}),\bigwedge f_2(\tilde{S}),\ldots, \bigwedge f_n(\tilde{S})) \text{ (by Corollary~\ref{Acor:prod_complete}) }\\
                         &=& (f_1(\bigwedge _n \tilde{S}),f_2(\bigwedge _n\tilde{S}),\ldots, f_n(\bigwedge _n \tilde{S})) \text{ (by continuity of $f_1,f_2,\ldots, f_n$) }\\
                         &=& (f_1(\overline{\alpha}),f_2(\overline{\alpha}),\ldots, f_n(\overline{\alpha})) \text{ (by Equation~\ref{Aeqn:alpha}) }\\
\end{eqnarray*}
The theorem follows by comparing the above equation with Equation~\ref{Aeqn:LHS-II}.
\end{proof}

Next we turn to the computation of the maximum fix-point of a continuous map on a complete lattice. The well known Knaster-Tarski Theorem \cite{tarski1955} asserts the existence of a maximum fix-point for every 
monotone function defined over a complete lattice.  When the function is continuous, the maximum fix-point can be defined as the greatest lower bound of a specific iteratively defined 
subset of the lattice, as described below.  In the following we use the notation $f^{2}=f\circ f$, $f^{3}=f\circ f^2$, etc.  

\begin{definition}[Maximum Fix-Point]
 Let $(L,\leq,\bot,\top)$ be a complete lattice.  Let $f:L\mapsto L$ be any function.  $s\in L$ is a maximum fix point of $f$ if:
 \begin{itemize}
  \item $s$ is a fix point of $f$.  That is, $f(s)=s$.
  \item for any $s'\in L$, if $f(s')=s'$, then $s'\leq s$.  
 \end{itemize}

\end{definition}

\begin{theorem}\label{Athm:MFP}
Let $(L,\leq,\bot,\top)$ be a complete lattice. Let $f:L\mapsto L$ be continuous.  Let $S=\{\top, f(\top), f^{2}(\top), f^{3}(\top),\ldots\}$.  Then $\bigwedge S$ is the maximum fix point of $f$.  
\end{theorem}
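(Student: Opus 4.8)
The plan is to set $m=\bigwedge S$ and verify the two defining properties of a maximum fix point separately: first that $f(m)=m$, and then that every fix point of $f$ lies below $m$. Before doing so, I would record the elementary bookkeeping fact that $S=\{\top\}\cup f(S)$, where $f(S)=\{f^{k}(\top):k\geq 1\}$, since $f(S)$ is obtained by applying $f$ to each element $f^{k}(\top)$ of $S$ and $f^{0}(\top)=\top$. I would also note (for intuition, though it is not strictly needed below) that the iterates form a descending chain $\top\geq f(\top)\geq f^{2}(\top)\geq\cdots$, following from $f(\top)\leq\top$ and repeated application of monotonicity (Observation~\ref{Aobs:monotone}).

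For the fix-point property, the crucial step is continuity. Since $S$ is non-empty, Definition~\ref{Adef:cts} gives $f(\bigwedge S)=\bigwedge f(S)$. It then remains to observe that $\bigwedge S=\bigwedge f(S)$: because $\top$ is the greatest element of $L$, any lower bound of $f(S)$ is automatically below $\top$ and hence a lower bound of $\{\top\}\cup f(S)=S$, while conversely every lower bound of $S$ is a lower bound of the subset $f(S)$. Thus the two sets have identical sets of lower bounds and therefore the same greatest lower bound. Combining these, $f(m)=f(\bigwedge S)=\bigwedge f(S)=\bigwedge S=m$, so $m$ is a fix point.

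For maximality, let $s'$ be an arbitrary fix point, $f(s')=s'$. It suffices to show that $s'$ is a lower bound of $S$, i.e. that $s'\leq f^{k}(\top)$ for every $k\geq 0$, since then $s'\leq\bigwedge S=m$. I would argue by induction on $k$: the base case $s'\leq\top=f^{0}(\top)$ is immediate, and if $s'\leq f^{k}(\top)$ then monotonicity of $f$ yields $s'=f(s')\leq f(f^{k}(\top))=f^{k+1}(\top)$. Hence $s'\leq m$, establishing that $m$ is the maximum fix point.

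The only genuine subtlety, and the step I expect to carry the weight of the argument, is the equality $f(\bigwedge S)=\bigwedge f(S)$, which is precisely where continuity (as opposed to mere monotonicity) is invoked; the remaining work is the order-theoretic observation that adjoining the top element to a set leaves its infimum unchanged, together with a routine induction. Monotonicity, needed in the maximality step, is available from continuity via Observation~\ref{Aobs:monotone}, so no additional hypotheses are required.
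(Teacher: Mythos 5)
Your proof is correct and takes essentially the same route as the paper: continuity applied to the non-empty set $S$ gives $f(\bigwedge S)=\bigwedge f(S)$, this infimum is identified with $\bigwedge S$, and maximality follows by the same induction using monotonicity (via Observation~\ref{Aobs:monotone}). The only difference is a minor (and arguably cleaner) bookkeeping step: you justify $\bigwedge f(S)=\bigwedge S$ uniformly from $S=\{\top\}\cup f(S)$ and the fact that adjoining the top element leaves the set of lower bounds unchanged, whereas the paper separates the case $f(\top)=\top$ and otherwise uses the descending-chain structure to write $f(S)=S\setminus\{\top\}$.
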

\begin{proof} 
Let $s_0=\bigwedge S$. We need to prove that $s_0$ is the maximum fix point of $f$.  If $f(\top)=\top$, then $S=\{\top\}$ and the theorem holds.  Otherwise,
as $f(\top)\leq \top$, it follows from the monotonicity of $f$ (Observation~\ref{Aobs:monotone}) that, $S$ is a descending chain with 
$f^{i+1}(\top)\leq f^{i}(\top)$ for each $i\geq 1$; and we have $f(S)=S\setminus \{\top \}\neq \emptyset$. By continuity of $f$, we get:
$$f(s_0)=f(\bigwedge S)=\bigwedge f(S)=\bigwedge (S\setminus \{\top\})=\bigwedge S=s_0$$  
This shows $s_0$ is a fix-point.  Suppose $s'\in L$ satisfies $f(s')=s'$.  Since $s'\leq \top$, $s'=f(s')\leq f(\top)$. Extending the argument, we see that $s'\leq f^{i}(\top)$ for each $i\geq 1$.  
Hence by the definition of $\bigwedge$, $s'\leq \bigwedge S=s_0$.  This proves that $s_0$ is the maximum fix-point of $f$.  
\end{proof}

The following is a consequence of the above proof.  

\begin{corollary}\label{Acor:MFP}
Let $(L,\leq,\bot,\top)$ be a complete lattice. Let $f:L\mapsto L$ be continuous. Then the set $S'=\{\top, f(\top), f^{2}(\top), f^{3}(\top),\ldots\}$ is a decreasing sequence of lattice values 
with $f^{i+1}(\top)\leq f^{i}(\top)$ for each $i\geq 0$.  Moreover, $f(\bigwedge S')$ is the maximum fix point of $f$.  
\end{corollary}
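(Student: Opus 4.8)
The plan is to read off both assertions directly from Theorem~\ref{Athm:MFP} and the argument already carried out in its proof, since the set $S'$ is literally the same set $S=\{\top, f(\top), f^{2}(\top),\ldots\}$ appearing there. I would not redo the fix-point analysis from scratch; instead I would isolate the two intermediate facts that that proof establishes and repackage them.

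For the first claim — that $S'$ is a decreasing chain — I would argue by induction on $i$. The base case is immediate: $f(\top)\leq \top$ because $\top$ is the greatest element of $L$. For the inductive step, I would use that $f$ is monotone (which follows from continuity by Observation~\ref{Aobs:monotone}); applying $f$ to the inductive hypothesis $f^{i}(\top)\leq f^{i-1}(\top)$ yields $f^{i+1}(\top)=f(f^{i}(\top))\leq f(f^{i-1}(\top))=f^{i}(\top)$. This gives $f^{i+1}(\top)\leq f^{i}(\top)$ for all $i\geq 0$, as required.

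For the second claim I would invoke Theorem~\ref{Athm:MFP} directly: since $S'=S$, that theorem already tells us $\bigwedge S'$ is the maximum fix point of $f$. Being a fix point, $\bigwedge S'$ satisfies $f(\bigwedge S')=\bigwedge S'$, so $f(\bigwedge S')$ equals $\bigwedge S'$ and is therefore itself the maximum fix point. This exhausts the content of the corollary.

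There is no genuine obstacle here; the corollary is merely a convenient packaging of facts already proved. The only point requiring the slightest care is making explicit that continuity enters twice — once indirectly, through monotonicity, to obtain the descending chain, and once inside Theorem~\ref{Athm:MFP}, to guarantee that $f$ commutes with the infimum of the chain so that $\bigwedge S'$ is genuinely fixed. Writing the conclusion in the form $f(\bigwedge S')=\bigwedge S'$ is just a convenience for later use, emphasizing that the maximum fix point is reached by one further application of $f$ to the meet of the iterates.
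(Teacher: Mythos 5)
Your proposal is correct and matches the paper's route: the paper offers no separate proof, presenting the corollary as an immediate consequence of Theorem~\ref{Athm:MFP} and its proof, where the descending chain $f^{i+1}(\top)\leq f^{i}(\top)$ comes from $f(\top)\leq\top$ plus monotonicity (via Observation~\ref{Aobs:monotone}) and the fix-point claim from $f(\bigwedge S)=\bigwedge S$. Your explicit induction and the observation that $f(\bigwedge S')=\bigwedge S'$ justifies the stated form are exactly the intended repackaging.
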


\section{An Algorithm for Computing Program Expression Equivalence}\label{sec:appendixc}
A program expression in a program $P$ is a term over $(X \cup C)$ that actually appear in the program $P$. Two program expressions $e$ and $e'$ are Herbrand equivalent at a program point if
they belong to the same partition class of the Herbrand Congruence at that program point. The \emph{Herbrand equivalence classes of program expressions} at a program point are the partition classes obtained by
restricting the Herbrand Congruence of that point to only the set of program expressions. In this section we describe an algorithm that calculates the Herbrand equivalence classes of program expressions
at each program point in any input program, represented as a data flow framework. We restrict our attention to only intra procedural analysis.
\subsection{Description of Data Structures}
An $ID$ (which stands for a value identifier) is a composite data type with four fields, namely $ftype$, $valueNum$, $idOperand1$ and $idOperand2$. If $ftype$ is $0$, we call it an atomic ID and 
it will have an associated value number called $valueNum$, which is a positive integer and the other two fields $idOperand1$, $idOperand2$ are set to $NIL$. For a non-atomic ID, its $ftype$ will be an operator and 
its $valueNum$ field will be set to $-1$. The field $idOperand1$ will point to the $ID$ of the first operand and $idOperand2$ will point to the $ID$ of the second operand.
An $IdArray$ type represents an array of $ID$s that holds one index for each element $t$ in $(X \cup C) \cup [(X \cup C) \times (X \cup C) \times Op]$ in an order arbitrarily fixed in the beginning. 
Each array element indicates the value identifier of the corresponding element $t$.   

Associated with each program point $p$, there is an $IdArray$ and this together forms an array $Partitions$ which has one index corresponding to each program point. 
This way, for each program point $i$, $Partitions[i]$ will be an array of value identifiers, one index corresponding to each program expression and 
two program expressions $t$ and $t'$ are considered equivalent at program point $i$ if and only if 
$Partitions[i][t]=Partitions[i][t']$. The notation $numClasses$ represents the cardinality of the set $(X \cup C) \cup ((X \cup C) \times (X \cup C) \times Op)$ 
and $numProgPoints$ represents the number of program points in the input program.
\begin{algorithm}
    \caption{Data Structures}
    \label{alg:dataStruct}
    \begin{algorithmic}[1] 
            \State typedef struct \{                 
                 \begin{quote}
                 \vspace{-0.5cm}
                    \State  int ftype; \Comment{$0$ if atomic and the operator number otherwise.}
                    \State  int valueNum; \Comment{value number of the element if ftype is $0$ and $-1$ otherwise.}
                    \State  struct ID* idOperand1; \Comment{$NIL$ if ftype is $0$}
                    \State  struct ID* idOperand2; \Comment{$NIL$ if ftype is $0$}                  
                 \end{quote}
               \State  \} ID;
            \State typedef $IdArray$ $ID[numClasses]$; 
            \State $IdArray$ $Partitions[numProgPoints]$; 
    \end{algorithmic}
\end{algorithm}
\subsection{Description of the Algorithm}
\begin{algorithm}
    \caption{Main Program}
    \label{alg:main}
    \begin{algorithmic}[1] 
        \Procedure{Main}{} 
         \State $StartCounter()$;
         \For{ each term $t \in  X \cup C$ } 
               \State $Partitions[1][t] \gets CreateAtomicId()$;
         \EndFor    
         \For{ each term $t$ of the form $t=t_1 + t_2 $, where $t_1, t_2 \in X \cup C$ and $+$ in $Op$} 
               \State $Partitions[1][t] \gets AssignCompoundId(\&Partitions[1][t_1],\&Partitions[1][t_2],+)$; 
         \EndFor 
         \For{ each program point $k$ from $2$ to $numProgPoints$} 
                   \State $Partitions[k]=\top$;
         \EndFor          
         \State $ConvergeFlag \gets 0$;
         \While{ $ConvergeFlag$ is $0$}
             \State $ConvergeFlag \gets 1$;
             \For{ each program point $k$ from $2$ to $numProgPoints$} 
                   \State $OldPartition \gets Partitions[k]$;
               \If{ $k$ is a function point with $\pred(k)=\{j\}$ and assignment $y \leftarrow \beta$ }                  
                    \State $Partitions[k] \gets AssignStmt(Partitions[j], y, \beta)$;                    
               \ElsIf{ $k$ is a function point with $\pred(k)=\{j\}$ and assignment $y \leftarrow *$ }                    
                    \State $Partitions[k] \gets NonDetAssign(Partitions[j], y)$;    
               \ElsIf{ $k$ is a confluence point with $\pred(k)=\{i, j\}$}     
                    \State $Partitions[k] \gets Confluence(Partitions[i],Partitions[j])$;                    
               \EndIf
               \If{ $IsSame(OldPartition, Partitions[k])$ is $0$}
                           $ConvergeFlag \gets 0$; 
               \EndIf
             \EndFor  
         \EndWhile
            
        \EndProcedure
    \end{algorithmic}
\end{algorithm}

\begin{algorithm}
    \caption{Handling Assignment Statement $y \leftarrow \beta$}
    \label{alg:assgn}
    \begin{algorithmic}[1] 
        \Procedure{AssignStmt}{$IdArray$ $\mathcal{P}$, variable $y$, term $\beta$} \Comment{ Assume that $y$ does not appear in $\beta$.}
            \State $IdArray$ $\mathcal{Q}$;
            \State  Initialize $\mathcal{Q} = \mathcal{P}$   
            \State  $\mathcal{Q}[y] \gets \mathcal{P}[\beta]$
            \For{ each term $t$ of the form $t=y+t'$, where $t' \in X \cup C$} 
                    \State $\mathcal{Q}[t] \gets  AssignCompoundId(\&\mathcal{Q}[y], \&\mathcal{Q}[t'], +)$; 
            \EndFor      
            \For{ each term $t$ of the form $t=t'+y$, where $t' \in X \cup C$} 
               \State $\mathcal{Q}[t] \gets AssignCompoundId(\&\mathcal{Q}[t'], \&\mathcal{Q}[y], +)$; 
            \EndFor   
             \State \textbf{return} $\mathcal{Q}$; 
        \EndProcedure
    \end{algorithmic}
\end{algorithm}
\begin{algorithm}
    \caption{Handling Non-Deterministic Assignment Statement $y \leftarrow *$}
    \label{alg:non-det-assgn}
    \begin{algorithmic}[1] 
        \Procedure{NonDetAssign}{$IdArray$ $\mathcal{P}$, variable $y$} \Comment{}
            \State $IdArray$ $\mathcal{Q}_1,  \mathcal{Q}_2$,  $\mathcal{Q}$;
            \State  Initialize $\mathcal{Q}_1=\mathcal{P}$; $\mathcal{Q}_2=\mathcal{P}$;    
            \State  Let $c_1, c_2 \in C$. 
            \State  $\mathcal{Q}_1[y]=\mathcal{P}[c_1]$;  $\mathcal{Q}_2[y]=\mathcal{P}[c_2]$; 
            \For{ each term $t$ of the form $t=y+t'$, where $t' \in X \cup C$} 
                    \State $\mathcal{Q}_1[t] =$ AssignCompoundId$(\&\mathcal{Q}_1[y], \&\mathcal{Q}_1[t'], +)$; 
                    \State $\mathcal{Q}_2[t] =$ AssignCompoundId$(\&\mathcal{Q}_2[y], \&\mathcal{Q}_2[t'], +)$; 
            \EndFor      
            \For{ each term $t$ of the form $t=t'+y$, where $t' \in X \cup C$} 
               \State $\mathcal{Q}_1[t] =$ AssignCompoundId$(\&\mathcal{Q}_1[t'], \&\mathcal{Q}_1[y], +)$; 
               \State $\mathcal{Q}_2[t] =$ AssignCompoundId$(\&\mathcal{Q}_2[t'], \&\mathcal{Q}_2[y], +)$; 
            \EndFor  
             \State $\mathcal{Q}=Confluence(\mathcal{Q}_1, \mathcal{Q}_2)$;
             \State \textbf{return} $Confluence(\mathcal{Q}, \mathcal{P})$; 
        \EndProcedure
    \end{algorithmic}
\end{algorithm}
\begin{algorithm}
    \caption{Handling Confluence}
    \label{alg:conf}
    \begin{algorithmic}[1] 
        \Procedure{Confluence}{$IdArray$ $\mathcal{P}_1$, $IdArray$ $\mathcal{P}_2$} \Comment{}
            \State $IdArray$ $\mathcal{Q}$; $int$ $AccessFlag[numClasses]$;
             \For{ each term $t \in X \cup C$}
                  \State $AccessFlag[t]=0$;
             \EndFor     
             \For{ each term $t \in X \cup C$} 
                  \If{ $AccessFlag[t]$ is equal to $0$}
                   \State $AccessFlag[t]=1$;
                   \If{ $\mathcal{P}_1[t]$ is equal to $\mathcal{P}_2[t]$}
                         \State $\mathcal{Q}[t]=\mathcal{P}_1[t]$;
                   \Else                    
                         \State $S_1 = getClass(t, \mathcal{P}_1)$; 
                         \State $S_2 = getClass(t, \mathcal{P}_2)$;
                         \State $newId =CreateAtomicId()$;
                         \For{ each term $t'$ in $S_1 \cap S_2 \cap (X \cup C)$}
                               \State $AccessFlag[t']=1$;
                               \State $\mathcal{Q}[t']=newId$;
                         \EndFor      
                   \EndIf
               \EndIf    
            \EndFor
            \For{ each term $t$ of the form $t=t_1+t_2$, where $t_1, t_2 \in X \cup C$} 
                  \State $\mathcal{Q}[t] =$ AssignCompoundId$(\&\mathcal{Q}[t_1], \&\mathcal{Q}[t_2], +)$; 
            \EndFor      
           \State \textbf{return} $\mathcal{Q}$; 
        \EndProcedure
    \end{algorithmic}
\end{algorithm}

\begin{algorithm}
    \caption{Compare Congruences}
    \label{alg:compare}
    \begin{algorithmic}[1] 
        \Procedure{IsSame}{$IdArray$ $\mathcal{P}_1$, $IdArray$ $\mathcal{P}_2$} \Comment{See if two congruences are the same, considering class value-id to be abstract}
            \State $Set$ $S_1, S_2$;
            \For{ each term $t$ in $(X \cup C) \cup ((X \cup C) \times (X \cup C) \times Op)$}
                  \State $S_1 = GetClass(t, \mathcal{P}_1)$;     
                  \State $S_2 = GetClass(t, \mathcal{P}_2)$;
                  \If{ $S_1$ is not equal to $S_2$}
                       \State \textbf{return} $0$;
                  \EndIf     
             \EndFor
             \State \textbf{return} $1$;
        \EndProcedure
    \end{algorithmic}
\end{algorithm}

\begin{algorithm}
    \caption{Get the class of $t$ in the partition $\mathcal{P}$}
    \label{alg:getClass}
    \begin{algorithmic}[1] 
        \Procedure{GetClass}{term $t$, $IdArray$ $\mathcal{P}$} \Comment{Get the partition class of terms congruent to $t$ in the partition $\mathcal{P}$}
            \State $Set$ $S = \emptyset$; 
            \For{ each term $t'$ in $(X \cup C) \cup ((X \cup C) \times (X \cup C) \times Op)$}
                  \If{$\mathcal{P}[t']$ is equal to $\mathcal{P}[t]$}
                       \State $S = S \cup \{t'\}$;
                  \EndIf     
             \EndFor
             \State \textbf{return} $S$;
        \EndProcedure
    \end{algorithmic}
\end{algorithm}

\begin{algorithm}
    \caption{Create a New Atomic Value Id}
    \label{alg:assignAtomicId}
    \begin{algorithmic}[1] 
        \Procedure{CreateAtomicId}{} \Comment{Creates a new atomic value class and assigns it to term $t$}
            \State struct ID $idClass$;
            \State $idClass.valueNum \gets IncCounter()$;
            \State  $idClass.ftype \gets 0$;
            \State $idClass.idOperand1 \gets NIL$; 
            \State $idClass.idOperand2 \gets NIL$;
            \State \textbf{return} $idClass$;
        \EndProcedure
    \end{algorithmic}
\end{algorithm}

\begin{algorithm}
    \caption{Assign a Compound Value Id to a Term}
    \label{alg:assignCompoundId}
    \begin{algorithmic}[1] 
        \Procedure{AssignCompoundId}{struct ID* $id1$, struct ID* $id2$, int $optype$} \Comment{Create a compound value class}
            \State struct ID $idClass$;
            \State $idClass.valueNum \gets-1$    
             \State  $idClass.ftype \gets optype$;
            \State $idClass.idOperand1 \gets id1$; 
            \State $idClass.idOperand2 \gets id2$;
             \State \textbf{return} $idClass$;
        \EndProcedure
    \end{algorithmic}
\end{algorithm}

\begin{algorithm}
    \caption{Initialize Global Counter}
    \label{alg:startCounter}
    \begin{algorithmic}[1] 
        \Procedure{StartCounter}{} \Comment{A global counter is initialized to $0$}
            \State $counter \gets 0 $
        \EndProcedure
    \end{algorithmic}
\end{algorithm}

\begin{algorithm}
    \caption{Increment Global Counter}
    \label{alg:incCounter}
    \begin{algorithmic}[1] 
        \Procedure{IncCounter}{} \Comment{A global counter is incremented on each function invocation}
            \State $counter \gets counter +1 $
            \State \textbf{return} $counter$
        \EndProcedure
    \end{algorithmic}
\end{algorithm}
\end{document}